\documentclass{article}     

\usepackage{amsmath, amsfonts,amsthm,amssymb}
\usepackage{dsfont}

\usepackage[english]{babel}
\usepackage{latexsym,amssymb,amsfonts,graphicx}
\usepackage{amsmath,dsfont}
\usepackage{verbatim}
\usepackage{mathrsfs}
\usepackage{bm}
\usepackage{color}

\setlength{\parskip}{1ex plus 0.5ex minus 0.2ex}
\usepackage{url}

\usepackage{bm}
\usepackage{natbib}

\newcommand{\Px}{ \mathbb{P} }
\newcommand{\Qx}{ \mathbb{Q} }
\newcommand{\Ex}{ \mathbb{E} }
\newcommand{\gt}{\mathcal{G}_t}

\newcommand{\idc}{\mathbf{1}}

\newcommand{\Gx}{\mathbb{G}}
\newcommand{\Fx}{\mathbb{F} }

\newcommand{\F}{\mathcal{F}}
\newcommand{\G}{\mathcal{G}}

\newcommand{\aPlus}{(a_{1,1} + a_{2,2})}

\newcommand{\CPlus}{c_{+}}
\newcommand{\CPlusSQ}{\sqrt{\CPlus}}
\newcommand{\cone}{c_1}
\newcommand{\cthree}{c_2}

\newcommand{\conethree}{(\cone+\cthree)}
\newcommand{\conethreenop}{\cone+\cthree}
\newcommand{\mcthreeone}{(\cthree - \cone)}

\newcommand{\CPlusExpr}{ \mcthreeone^2 + 4 a_{1,1} a_{2,2}}

\newcommand{\CPlusT}{ \frac{t \CPlusSQ }{2}}

\newcommand{\IntTR}{\int_t^R}

\newcommand{\CPlusSQcot}{(\CPlusSQ + \conethreenop)}

\newcommand{\termbes}{a_{12}^{\Qx} a_{21}^{\Qx} x (R-x)}

\newcommand{\CTcosh}{\cosh\left(\CPlusT\right)}
\newcommand{\CTsinh}{\sinh\left(\CPlusT\right)}

\newcommand{\invCPlus}{\frac{1}{2 \CPlus}}

\newcommand{\commterm}{  \invCPlus   e^{\frac{t   \conethree }{2}}  }
\newcommand{\ExpCPlusSQ}{ e^{s \CPlusSQ}}
\newcommand{\Expcoct}{( \ExpCPlusSQ-1)}

\newtheorem{theorem}{Theorem}[section]
\newtheorem{definition}{Definition}[section]

\newtheorem{proposition}[theorem]{Proposition}
\newtheorem{remark}[theorem]{Remark}
\newtheorem{lemma}[theorem]{Lemma}
\newtheorem{corollary}[theorem]{Corollary}


 \definecolor{Red}{rgb}{1.00, 0.00, 0.00}
    
    \definecolor{DRed}{rgb}{0.5, 0.00, 0.00}
    
    \definecolor{Blue}{rgb}{0.00, 0.00, 1.00}
    
        \definecolor{Green}{rgb}{0.0, 0.4, 0.0}

    \definecolor{PaleGrey}{rgb}{.6, .6, .6}

\title{Dynamic Portfolio Optimization with a Defaultable Security and Regime Switching}
\author{Agostino Capponi\thanks{{School of Industrial Engineering, Purdue University, West  Lafayette, IN, 47907, USA ({\tt capponi@purdue.edu}).
}}
 \and Jos\'e E. Figueroa-L\'{o}pez\thanks{Department of Statistics, Purdue University, West Lafayette, IN, 47907,  USA ({\tt figueroa@purdue.edu}).}}
\date{}

\begin{document}
\maketitle
\begin{abstract}
We consider a portfolio optimization problem in a defaultable market with finitely-many economical regimes, where the investor can dynamically allocate her wealth among a defaultable bond, a stock, and a money market account. The market coefficients are assumed to depend on the market regime in place, which is modeled by a finite state continuous time Markov process. We rigorously deduce the dynamics of the defaultable bond price process in terms of a Markov modulated stochastic differential equation. Then, by separating the utility maximization problem into the pre-default and post-default scenarios, we deduce two coupled Hamilton-Jacobi-Bellman equations for the post and pre-default optimal value functions and show a novel verification theorem for their solutions. We obtain explicit optimal investment strategies and value functions for an investor with logarithmic utility. We finish  with an economic analysis in the case of a market with two regimes and homogenous transition rates, and show the impact of the default intensities and loss rates on  the optimal strategies and value functions.

\vspace{0.3 cm}

\noindent{\textbf{AMS 2000 subject classifications}: 93E20, 60J20.}


\noindent{\textbf{Keywords and phrases}: Dynamic Portfolio Optimization, Credit Risk, Regime Switching Models, Utility Maximization,  Hamilton-Jacobi-Bellman Equations.}

\end{abstract}

\section{Introduction}
\footnote{This is an improved version of the original submission, where we fixed typos and updated references.}
Continuous time portfolio optimization problems are among the most widely studied problems in the field of mathematical finance.
Since the seminal work of \cite{Merton}, who explored stochastic optimal control techniques to provide a closed form solution to
the problem, a large volume of research has been done to extend Merton's paradigm to other frameworks and portfolio optimization problems (see, e.g. \cite{karatzas},
\cite{karatSh}, and \cite{Fleming}). Most of the models {proposed} in the literature {rely} on the assumption that the uncertainty in the asset price dynamics is governed by a {continuous process}, which is typically chosen to be a Brownian motion. In recent years, there has been an increasing interest in the use of regime switching model to capture the macro-economic regimes affecting the behavior of the market. More specifically,
the price of {the} security evolves with a different dynamics, typically identified by the drift and the diffusion coefficient associated to the {macro-economic} regime in place. Although regime switching models {arguably {are able to} incorporate a realistic description} of market behavior, they pose challenges in the context of pricing because they {lead to an incomplete market}, as the regime uncertainty cannot be hedged {away}.

In the {context of option pricing}, \cite{GuoPr} {studies} a two-regime switching {model}, where regimes {represent} the amount of information available to the market. \cite{BuffingtonEle} and \cite{BuffingtonEla} price European and American options under regime switching models, while \cite{elliottet} address the specification of an appropriate pricing martingale measure. \cite{GuoZhang} provide an explicit optimal stopping rule when pricing perpetual American put options, while \cite{GuoMiao} consider the relation between regime shifts and investment decision in the context of real options. \cite{RogersGraz} give a methodology to price barrier options with regime switching dividend process, while \cite{GapJean} obtain closed form expressions for European claims, assuming geometric Brownian motion dynamics with regime switching drifts.

Utility maximization problems under {regime switching} have been investigated in \cite{Cadenillas}, who consider the infinite horizon problem of maximizing the expected utility from consumption and terminal wealth in a market consisting of multiple stocks and a money market account, where both short rate and stock diffusion parameters evolve according to Markov-Chain modulated dynamics.
Similarly, \cite{Zari} considers an infinite horizon investment-consumption model where the agent can consume and distribute her wealth across a risk-free bond and a stock. \cite{NagaiRung} consider a finite horizon portfolio optimization problem for a risk averse investor with power utility, assuming that the coefficients of the risky assets in the economy are nonlinearly dependent on the Markov-chain modulated economic factors. \cite{KornKr} relax the assumption of constant interest rate and derive expressions for the optimal percentage of wealth invested in the money market account and stock, under the assumption of a diffusive short rate process with deterministic drift and constant volatility.

Most of the research done on continuous time portfolio optimization has concentrated on markets consisting of a risk-free asset, and of securities which only bear market risk. {These models do} not take into account securities carrying default risk, such as corporate bonds, even though the latter represent a significant portion of the market, comparable to the total capitalization of all publicly traded companies in the United States. In recent years, portfolio optimization problems have started to {incorporate} defaultable securities, but {assuming that the risky factors are modeled by continuous processes and more specifically by {Brownian} It\^o processes}. \cite{BielJang}
derive optimal finite horizon investment strategies for an investor with CRRA utility function, who optimally allocates her wealth among a defaultable bond, risk-free account, and stock, assuming constant interest rate, drift, volatility, and default intensity.
\cite{BoWang} consider an infinite horizon portfolio optimization problem, where an investor with logarithmic utility can choose a consumption rate, and invest her wealth across a defaultable perpetual bond, a stock, and a money market. They assume that both the historical intensity and the default premium process depend on a common Brownian factor. Unlike \cite{BielJang}, where the dynamics of the defaultable bond price process was derived from the arbitrage-free bond prices, \cite{BoWang} postulates the dynamics of the {defaultable} bond prices partially based on heuristic arguments.
\cite{Lakner} analyze the optimal investment strategy in a market consisting of a defaultable (corporate) bond and a money market account under a continuous time model, where bond prices can jump, and employ duality theory to obtain the optimal strategy. \cite{CallegaroJeanb} consider a market model consisting of several defaultable assets, which evolve according to discrete dynamics depending on partially observed exogenous factor processes. \cite{JiaoPham} combine duality and dynamic programming to optimize the utility of an investor with CRRA utility function,  in a market consisting of a riskless bond and a stock subject to counterparty risk. \cite{Bielgame} develop a variational inequality approach to pricing and hedging of a defaultable game option under a Markov modulated default intensity framework.

In this paper, we consider for the first time finite horizon dynamic portfolio optimization problems in defaultable markets with regime switching dynamics. We provide a general framework and explicit results on optimal value functions and investment strategies in a market consisting of a money market, a stock, and a defaultable bond.  Similarly to \cite{Cadenillas}, we allow
the short rate  and the drift and volatility of the risky stock to be all regime dependent. For the defaultable bond, we follow the reduced form approach to credit risk, where the global market information, including default, is modeled by the progressive enlargement of a reference filtration representing the default-free information, and the default time is a totally inaccessible stopping time with respect to the enlarged filtration, but not with respect to the reference filtration. {We also make the default intensities and loss given default rates to be all regime dependent}. The use of regime switching models for pricing defaultable bonds has proven to be very flexible when fitting the empirical credit spreads curve of corporate bonds as {illustrated in, e.g.,} \cite{JLT}, where the underlying Markov chain models credit ratings. 

Our main contributions are discussed next. First, we rigorously derive the dynamics of the defaultable bond under the historical measure from the price process, defined as a risk neutral {expectation}. 
Secondly, after separating the utility maximization problem into a pre-default and post-default dynamic optimization problem, we give and prove verification theorems for both {subproblems}. 
We show that the regime dependent pre-default optimal value function and bond investment strategy may be obtained as the solution of a coupled system of nonlinear partial differential equations (satisfied by the pre-default value function) and nonlinear equations (satisfied by the bond investment strategy), each corresponding to a different regime. Moreover, we obtain {the} interesting feature that the pre-default optimal value function and bond investment strategy depend on the corresponding regime dependent post-default value function.
Thirdly, we demonstrate our framework on the concrete case of an investor with logarithmic utility, and, show that both the optimal pre-default and post-default value functions amount to solving a system of ordinary linear differential equations, while the optimal bond strategy may be recovered as the unique solution of a decoupled system of equations, one for each regime. Under a two-regime market with homogenous transition rates, we are able to obtain explicit formulas, and illustrate the impact of default risk on the bond investment strategy and optimal value functions.

The rest of the paper is organized as follows. Section \ref{sec:model} introduces the market model. Section \ref{BPDSect} derives the dynamics of the defaultable bond under the historical measure, starting from the {risk-neutral bond price process}. Section \ref{sec:dynopt}
formulates the dynamic optimization problem. Section \ref{Sect:VerThr} gives and proves the two verification theorems associated to the post-default and pre-default case. Section \ref{sec:logappl} specializes the theorems given earlier to the case of an investor with logarithmic utility. Section \ref{sec:conclusion} {summarizes the main conclusions of} the paper. {The proofs of the main theorems and necessary lemmas are deferred to the appendix}.

\section{The Model}\label{sec:model}
Assume {$(\Omega,\G,\Gx,\Px)$} is a complete probability space, where $\Px$ is the real world probability measure  {(also called historical probability)}, ${\Gx}=(\gt)$ is an enlarged filtration given by
$\gt = \mathcal{F}_t \vee \mathcal{H}_t$ (the filtrations $\mathcal{F}_t$ and $\mathcal{H}_t$ will be introduced later).
 Let
$\{W_t\}$ be a standard Brownian motion on {$(\Omega,\G,\Fx,\Px)$, where $\Fx:=(\F_{t})_{t}$ is a suitable filtration satisfying the usual hypotheses of completeness and right continuity.}
We {also} assume that the \emph{states of the economy} are modeled by a continuous-time Markov process $\{X_t\}$ {defined} on $(\Omega,{\G,\Fx},\Px)$ with a finite state space $\{x_1,x_2,\ldots,x_N\}$.
Without loss of generality, we can identify the state space of $\{X_t\}$ to be a finite set of unit vectors $\{e_1, e_2, \ldots, e_N \}$, where {$e_i = (0,...,1,...0)^{'} \in \mathds{R}^N$ and $'$ denotes the transpose}.
We also assume that $\{X_t \}$ and $\{W_t\}$ are independent. Define $A(t)$ to be the Markov chain transition matrix $[a_{i,j}(t)]_{i,j=1,2,\ldots,N}$ (also referred to as the infinitesimal generator).
{The following semi-martingale representation is well-known} {(cf. \cite{elliottb}):}
\begin{equation}
X_t = X_0 + \int_0^t A^{\prime}(s) X_s ds + {M^{\Px}(t)}{,}
\label{eq:MCsemim}
\end{equation}
where {$M^{\Px}(t)=(M^{\Px}_{1}(t),\dots,M_{N}^{\Px}(t))'$} is a $\mathds{R}^N$-valued martingale process under $\Px$, and
$A(t) = [a_{i,j}(t)]_{i,j=1,\dots,N}$ is the so-called generator of the Markov process. Specifically, denoting $p_{i,j}(t,s) := \Px(X_s = j | X_t = i)$, for $s \geq t$, and $\delta_{i,j}={\bf 1}_{i=j}$, we have that
\[
	a_{i,j}(t) = \lim_{h \rightarrow 0} \frac{p_{i,j}(t,t+h) - \delta_{i,j}}{h};
\]
cf.  \cite{bielecki01}. In particular, $a_{i,i}(t) := -\sum_{j \neq i} a_{i,j}(t)$.

We consider a {frictionless} financial market  consisting of three instruments: a risk-free bank account, a defaultable bond, and a stock. The dynamics of each of the following {instruments} will depend on the underlying states of the economy {as follows}:

\medskip
\noindent\textbf{Risk-free bank account.} The instantaneous market interest rate at time $t$ is $r_t:= r(t,X_{t}):=\left<r,X_t\right>$, where $\left<\cdot,\cdot\right>$ denotes the standard inner product in $\mathbb{R}^{N}$ and $r=(r_1,r_2,\ldots,r_N)'$ are positive constants. This means that,
depending on the state of the economy, the interest rate $r_t$ will be different; i.e., if $X_{t}=e_{i}$ then $r_{t}=r_{i}$.
The dynamics of the price process $\{B_t\}$ which describes the risk-free bank account is given by
\begin{equation}
	{dB_t = r_t B_t d t.}
\end{equation}
\noindent\textbf{Stock price.}
We assume that the stock appreciation rate $\{\mu_t\}$ and the volatility $\{\sigma_t\}$ of the stock $S_t$ also depend on the {economy} regime in place $X_t$ in the following way:
\begin{equation}
\mu_t:= \mu(t,X_t) := \left<\mu, X_t\right>, \qquad \sigma_t:= \sigma(t,X_t) := \left<\sigma,X_t\right>
\end{equation}
where $\mu = (\mu_1,\mu_2,\ldots,\mu_N)'$ and $\sigma =  (\sigma_1,\sigma_2,\ldots,\sigma_N)'$ are constants denoting, respectively, the values of drift and volatility which can be taken depending on the different
economic regimes.
Hence, we assume that
\begin{equation}
dS_t = \mu_t S_t dt + \sigma_t S_t dW_t, \qquad S_0 = s.
\end{equation}
\noindent\textbf{Risky Bond price.} Unlike the previous two securities, where we have written
directly the dynamics under the historical measure, here we need to infer the historical dynamics (i.e. dynamics under the actual probability measure $\Px$) from the bond price {process, which is originally defined
under a suitably chosen risk-neutral pricing} measure $\Qx$.
Before {defining} the bond price, we need to {introduce} a default process. Let $\tau$ be a nonnegative random variable, {defined on $(\Omega,\G,\Px)$}, representing the default time of the counterparty selling the bond.
Let $\mathcal{H}_t = \sigma(H(u): u \leq t)$ be the filtration {generated by the \emph{default process} $H(t):= \idc_{\tau \leq t}$}, after
completion and regularization on the right, and also let   {$\Gx:=(\G_t)_{t}$} be the filtration
$\gt := \mathcal{F}_t \vee \mathcal{H}_t$.

 We use the canonical construction of the default time $\tau$ {in terms of a given hazard process {$\{h_t\}_{t\geq{}0}$, which will also be assumed to be driven by the Markov process $X$. Specifically,  throughout the paper we assume that {$h_t:=\left<h, X_t\right>$}, where  $h := (h_1,h_2,\ldots,h_N)'$} are positive constants. For future reference, we now give the details of the construction of the random time $\tau$.  {We assume the existence of an exponential random variable $\chi$ defined on the probability space $(\Omega,\F, \Px)$, independent of the process $(X_t)_{t}$}.
We define $\tau$ by setting
\begin{equation}
{\tau:=}\inf\{ t \in \mathds{R}^+ : \int_0^t {h_u} du \geq \chi \}.
\label{eq:taudef}
\end{equation}
It can be proven that {{$(h_t)_{t}$}} is the {$(\Fx,\Gx)$}-\emph{hazard rate} of $\tau$ {(see \cite{bielecki01}, Section 6.5 for details)}. That is, {$(h_t)_{t}$} is such that
\begin{equation}\label{MrtRprDftPrcP}
	{\xi^{\Px}_t} := H(t) - \int_0^t (1-H(u^{-})) h_u du
\end{equation}
is a ${\Gx}$-martingale under $\Px$, where $H(u^{-}) = \lim_{s\uparrow u} H(s) = \idc_{\tau < u}$.

An {important consequence of the previous construction is the following property.}
Let us fix $t>0$ and {$\F_{\infty}=\bigvee_{s\geq{}0} \F_{s}$}. For any $u \in \mathds{R}^+$, we have $\Px(\tau \leq u | \mathcal{F}_{\infty}) = 1-e^{-\int_0^u h_s ds}$. Therefore, for any $u\leq{}t$,
\[
 {\Px(\tau \leq {u} | \mathcal{F}_t) = \Ex^{\Px}\left[\Px(\tau \leq {u} | \mathcal{F}_{\infty}) | \mathcal{F}_t \right] = 1 - e^{-\int_0^{{u}} h_s ds} = \Px(\tau \leq {u} | \mathcal{F}_{\infty})}.
\]
 Plugging
$u=t$ inside the above expression, we obtain
\begin{equation}\label{eq:Hcondeq}
	{\Px(\tau \leq t | \mathcal{F}_t) = \Px(\tau \leq t | \mathcal{F}_{\infty})}.
\end{equation}
It was proven in \cite{BremaudYor} that Eq.~(\ref{eq:Hcondeq}) is equivalent to saying that any ${\Fx}$-square integrable martingale is also a $\Gx$-square integrable martingale. The latter {property} is also referred to as the $H$ hypothesis, and we will make use of this property later on.

The final ingredient in the bond pricing formula is the recovery process {{$(z_t)_{t}$}},  an ${\Fx}$-adapted} right-continuous with left-limits process to be fully specified below. Then, the time-$t$ price of the risky bond with maturity $T$ is given  by
\begin{equation}
p(t,T) := \Ex^{\Qx} \left[\int_t^T e^{-\int_t^{u} r_s ds} {z_u} {dH(u)}  + e^{-\int_t^T r_s ds} (1-H(T)) \bigg| \gt \right],
\label{eq:bondpr}
\end{equation}
where $\Qx$ is the {equivalent} risk-neutral measure used in pricing. Furthermore, we  adopt a pricing measure $\Qx$ such that, under $\Qx$,  $W$ is still a standard Wiener process and $X$ is a continuous-time Markov process (independent of $W$) with possibly different generator $A^{\Qx}(t):=[a_{i,j}^{\Qx}(t)]_{i,j=1,2,\ldots,N}$.

The existence of the {measure $\Qx$ in the previous paragraph} follows from the theory of change of measures for denumerable Markov processes (see, e.g., Section 11.2 in \cite{bielecki01}). Concretely, for $i\neq{}j$ and some bounded measurable functions  $\kappa_{i,j}:\mathbb{R}_+\to (-1,\infty)$, define
\begin{equation}\label{DfnRNGen}
	a_{i,j}^{\Qx}(t):=a_{i,j}(t) (1+\kappa_{i,j}(t)),
\end{equation}
and for $i=j$, define
\[
	a_{i,i}^{\Qx}(t):=-\sum_{k=1,k\neq{}i}^{N}a_{i,k}^{\Qx}(t).
\]
We also fix $\kappa_{i,i}(t)=0$ for $i=1,\dots,N$.
Now, consider the processes
\begin{equation}\label{CrsMrt}
	M_{t}^{i,j}:=H^{i,j}_{t}-\int_{0}^{t} a_{i,j}(u)
	H^{i}_{u}du,
\end{equation}
where
\begin{equation}\label{JmpTrnPrc1}
	H_{t}^{i}:= \idc_{\{X_{t}=e_i\}}, \quad \text{and}\quad
	 H_{t}^{i,j}:=\sum_{0<u\leq{}t}\idc_{\{X_{u^{-}}=e_{i}\}}\idc_{\{X_{u}=e_{j}\}}, \quad (i\neq{}j).
\end{equation}
The process $(M^{i,j}_{t})_{t}$ is known to be  a ${\Fx}$-martingale for any $i\neq{}j$ (see Lemma 11.2.3 in \cite{bielecki01}) and, since the $H$-hypothesis holds in our default framework, they are also ${\Gx}$-martingales.
Then, by virtue of Proposition 11.2.3 in \cite{bielecki01}, the probability measure $\Qx$ {on {$\Gx=(\mathcal{G}_{t})_{t}$}} with Radon-Nikod\'yn density $\{\eta_{t}\}$ given by
\begin{equation}\label{DfnDnsty}
	\eta_{t}=1+\int_{(0,t]} \sum_{i,j=1}^{N} \eta_{u^{-}} \kappa_{i,j}(u) d M_{u}^{i,j},
\end{equation}
is such that $X$ is a Markov process under $\Qx$ with generator $[a_{i,j}^{\Qx}(t)]_{i,j=1,2,\ldots,N}$. Without loss of generality, $\Qx$ can be taken such that $W$ is still a Wiener process independent of $X$ under $\Qx$.
\begin{equation}
X_t = X_0 + \int_0^t A^{\Qx}(s)' X_s ds + M^{\Qx}(t),
\label{eq:MCsemimb}
\end{equation}
where $M^{\Qx}$ is a $\mathds{R}^N$-valued martingale under $\Qx$. In particular, note that
\begin{equation}\label{RMM0}
	M^{\mathbb{Q}}(t)= M^{\mathbb{P}}(t)+\int_{0}^{t} ( A(s)'-A^{\mathbb{Q}}(s)')X_{s}ds.
\end{equation}
We emphasize that the distribution of the hazard rate process $h_{t}=\left<h,X_{t}\right>$ under the risk-neutral measure is different from that under the historical measure. Therefore, our framework allows modeling the default risk premium, defined as the ratio between risk-neutral and historical intensity, through the change of measure of the underlying Markov chain.

\section{{Defaultable bond price  dynamics}}\label{BPDSect}
We now proceed to obtain the bond price dynamics under both the risk-neutral and historical probability measures.
Eq.~(\ref{eq:bondpr}) may be rewritten as
\begin{align} \label{eq:bondprdyn}
p(t,T) &= \idc_{\tau>t} \Ex^{\Qx} \left[\int_t^T e^{-\int_t^u (r_s + {h_s}) ds} z(u) {h_u} du \bigg| \mathcal{F}_t \right] \\
&\quad+ \idc_{\tau>t}
\Ex^{\Qx} \left[e^{-\int_t^T (r_s + {h_s}) ds} \bigg| \mathcal{F}_t \right]\nonumber
\end{align}
which follows from Eq.~(\ref{MrtRprDftPrcP}), along with application of the following classical identity
$$\Ex^{\Qx}\left[\idc_{\tau > s} Y \bigg| \gt \right] = \idc_{\tau>t} \Ex^{\Qx}\left[e^{-\int_t^s {h_u} du} Y \bigg| \mathcal{F}_t \right],$$
where $t \leq s$ and $Y$ is a $\mathcal{F}_s$-measurable random variable (see \cite{bielecki01}, Corollary 5.1.1, for its proof).

We assume the recovery-of-market value assumption, i.e. {$z_t := (1-L_t) p(t^{-},T)$}, where $L_t$ is ${\Fx}$-predictable.
{As with the other factors in our model, we shall assume that {$L_t$} is of the form {$L_t:=\left<L,X_{t}\right>$} for some constant vector $L:=(L_{1},\dots,L_{N})' \in\mathbb{R}^{N}$. Under  the recovery-of-market value assumption, it follows
using a result in \cite{ds}, Theorem 1, that
\begin{equation}
p(t,T) = {\idc_{\tau>t} \Ex^{\Qx} \left[e^{-\int_t^T (r_s + {h_s} {L_s})ds}  \bigg| \mathcal{F}_t \right]}.
\label{eq:bondpricern}
\end{equation}

The following result gives the dynamics of the {defaultable} bond price {process} under the risk-neutral measure $\Qx$.
\begin{theorem}\label{Prop:DynBndQ}
Suppose that, for any $i\neq{}j$, the function $a_{i,j}^{\Qx}$ defined in (\ref{DfnRNGen}) is continuously differentiable in $(0,T)$ such that
	\begin{equation}\label{NdCndDf1}
		0<\inf_{s\in[0,T]}|a_{i,j}^{\Qx}(s)|\leq{}
		\sup_{s\in[0,T]}|a_{i,j}^{\Qx}(s)|<\infty \quad \& \quad \sup_{s\in(0,T)}\left|\frac{a_{i,j}^{\Qx}(s)}{ds}\right|<\infty.
	\end{equation}
Then, the {pre-default} dynamics of the bond price $p(t,T)$ under the risk-neutral measure $\Qx$ is given by
\begin{equation}\label{DynQ}
	dp(t,T)=p(t^{-},T)\left\{ \left[r_t + h_t (L_t-1)\right] dt + \frac{\left<{\psi(t)}, dM^{\Qx}(t)\right>}{\left<\psi(t), X_{t^{-}}\right>}- d\xi_t^{\Qx}\right\}
\end{equation}
where $(M^{\Qx}(t))_{t}$ is {the $N$-dimensional $({\Fx}, \Qx)$-martingale defined in (\ref{eq:MCsemimb})}, $(\xi_t^{\Qx})_{t}$ is a $({\Gx}, \Qx)$-martingale, and $\psi(t) := (\psi_1(t), \ldots,\psi_N(t))'$ is given by
\begin{equation}\label{OrgDfnPsi}
	\psi_i(t) := \Ex^{\Qx} \left[\left.e^{-\int_t^{T} (r_s + h_s L_s) ds} \right| {X_t = e_i} \right]. \\
\end{equation}
\end{theorem}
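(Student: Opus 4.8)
\section*{Proof proposal}

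The plan is to first peel off the default indicator and reduce the claim to the dynamics of the \emph{pre-default} value process. Set $\rho_s:=r_s+h_s L_s$, a regime-dependent constant (i.e. $\rho_s=\rho_i:=r_i+h_iL_i$ on $\{X_s=e_i\}$), and let $Y_t:=\Ex^{\Qx}\!\left[e^{-\int_t^T \rho_s\,ds}\mid \F_t\right]$, so that \eqref{eq:bondpricern} reads $p(t,T)=(1-H(t))\,Y_t$. Because $X$ is $\Qx$-Markov and the integrand depends only on $(X_s)_{s\in[t,T]}$, conditioning on $\F_t$ equals conditioning on $X_t$, whence $Y_t=\left<\psi(t),X_t\right>$ with $\psi$ as in \eqref{OrgDfnPsi}. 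The first technical point is regularity: under the hypotheses \eqref{NdCndDf1} on the generator, $\psi$ solves the linear (backward Kolmogorov) system $\psi_i'(t)=\rho_i\psi_i(t)-\sum_j a_{i,j}^{\Qx}(t)\psi_j(t)$ with $\psi_i(T)=1$, so $\psi$ is $C^1$ on $[0,T]$; its components are strictly positive, being expectations of a positive variable, which legitimizes dividing by $\left<\psi(t),X_{t^-}\right>$ below.

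Second, I would compute the dynamics of $Y_t=\left<\psi(t),X_t\right>$ by the product rule, using the $\Qx$-semimartingale representation \eqref{eq:MCsemimb} of $X$. Since $\psi$ is $C^1$ (continuous and of finite variation) there is no covariation term, giving $dY_t=\big[\left<\psi'(t),X_t\right>+\left<\psi(t),A^{\Qx}(t)'X_t\right>\big]dt+\left<\psi(t),dM^{\Qx}(t)\right>$. Rather than invoke the ODE directly, I would fix the drift by a martingale argument: the process $e^{-\int_0^t\rho_s\,ds}Y_t=\Ex^{\Qx}[e^{-\int_0^T\rho_s\,ds}\mid\F_t]$ is a closed $(\Fx,\Qx)$-martingale, so its drift vanishes, forcing the drift of $Y$ to equal $\rho_tY_t$. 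Hence, using $Y_{t^-}=\left<\psi(t),X_{t^-}\right>$ by continuity of $\psi$, one obtains $dY_t=\rho_tY_{t^-}\,dt+Y_{t^-}\,\frac{\left<\psi(t),dM^{\Qx}(t)\right>}{\left<\psi(t),X_{t^-}\right>}$.

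Third, I would establish the analogue of \eqref{MrtRprDftPrcP} under $\Qx$, namely that $\xi_t^{\Qx}:=H(t)-\int_0^t(1-H(u^-))h_u\,du$ is a $(\Gx,\Qx)$-martingale. This holds because the density \eqref{DfnDnsty} is $\F_\infty$-measurable while, under $\Px$, $\chi$ is exponential and independent of $\F_\infty$; a direct factorization of $\Ex^{\Qx}[g(\chi)f(X)]$ shows $\chi$ remains exponential and $\Qx$-independent of $X$, so the $(\Fx,\Gx)$-hazard rate of $\tau$ is still $h_t$ under $\Qx$. Consequently $d(1-H(t))=-(1-H(t^-))h_t\,dt-d\xi_t^{\Qx}$. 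It then remains to apply integration by parts to $p(t,T)=(1-H(t))Y_t$. The covariation $[1-H,Y]$ vanishes since, $\tau$ being driven by the independent $\chi$, it a.s.\ does not coincide with a jump time of $X$ and hence with a jump of $Y$. Substituting the two decompositions, using $(1-H(t^-))Y_{t^-}=p(t^-,T)$ and the identity $Y_{t^-}\,d\xi_t^{\Qx}=p(t^-,T)\,d\xi_t^{\Qx}$ (their difference is supported on $\{t>\tau\}$, where $\xi^{\Qx}$ is constant), and collecting the drift as $r_t+h_tL_t-h_t=r_t+h_t(L_t-1)$, yields \eqref{DynQ}.

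The step I expect to be most delicate is not the algebra but the regularity and bookkeeping around the jumps: justifying $\psi\in C^1$ from \eqref{NdCndDf1} so that the product rule applies to $\left<\psi(t),X_t\right>$ with no covariation correction, and carefully handling left-limits, i.e.\ verifying $[1-H,Y]=0$ and that $Y_{t^-}$ may be replaced by $p(t^-,T)$ in the $\xi^{\Qx}$ term. One should also record that $\left<\psi(t),dM^{\Qx}(t)\right>$, a priori only an $(\Fx,\Qx)$-martingale term, is additionally a $\Gx$-local martingale by the $H$-hypothesis, so that \eqref{DynQ} is a bona fide $(\Gx,\Qx)$-semimartingale decomposition.
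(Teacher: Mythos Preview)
Your argument follows the same architecture as the paper's: write $p(t,T)=(1-H(t))Y_t$ with $Y_t=\left<\psi(t),X_t\right>$, pin down the drift of $Y$ via the closed martingale $e^{-\int_0^t\rho_s\,ds}Y_t$, show $[1-H,Y]=0$ because $\Delta X_\tau=0$ a.s., and use the $(\Gx,\Qx)$-compensator of $H$. The algebra and the handling of left limits match the paper's computation essentially line by line.

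Where you differ is in the two auxiliary lemmas, and in both places your route is the more economical one. For the differentiability of $\psi$, the paper (Lemma~\ref{Lm:DfPsi}) changes measure to a \emph{homogeneous} chain, expands the expectation as a Poisson-weighted series over the order statistics of the jump times, and differentiates term by term via dominated convergence; your backward-equation approach is shorter, but you should close the apparent circularity by first letting $\tilde\psi$ be the unique $C^1$ solution of the linear system $\tilde\psi_i'(t)=\rho_i\tilde\psi_i(t)-\sum_j a_{i,j}^{\Qx}(t)\tilde\psi_j(t)$, $\tilde\psi_i(T)=1$ (continuous coefficients suffice), and then verifying $\tilde\psi=\psi$ by checking that $e^{-\int_0^t\rho_s\,ds}\left<\tilde\psi(t),X_t\right>$ is an $(\Fx,\Qx)$-martingale with the right terminal value. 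For the $(\Gx,\Qx)$-martingale property of $\xi^{\Qx}$, the paper (Lemma~\ref{LmNdFD}) shows $\eta_t\xi_t^{\Px}$ is a $\Px$-martingale via integration by parts together with $\Delta X_\tau=0$; your observation that the density $\eta$ is $\F_\infty$-measurable, so $\chi$ keeps its exponential law and its independence from $\F_\infty$ under $\Qx$, yields the same conclusion more directly and also gives you $\Delta X_\tau=0$ under $\Qx$ for free.
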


The proof is reported in Appendix \ref{app:dynamicsbond}.
We also have the following dynamics under the historical probability measure.
\begin{proposition}\label{Prop:DynBndP}
	Under the assumptions of Theorem \ref{Prop:DynBndQ}, the {pre-default} dynamics of the bond price $p(t,T)$ under the historical measure {$\Px$ is} given by
\begin{equation}\label{DRWP0}
dp(t,T)=p(t^{-},T)\left\{ \left[r_t + h_t (L_t-1)+{D(t)} \right] dt + \frac{\left<{\psi(t)}, dM^{\Px}(t)\right>}{\left<\psi(t), X_{t^{-}}\right>}-d\xi^{\Px}_{t}\right\}.
\end{equation}
where $(M^{\Px}(t))_{t}$ is {the $N$-dimensional $({\Fx}, \Px)$  martingale defined in (\ref{eq:MCsemim})}, $(\xi_t^{\Px})_{t}$ is {the $({\Gx}, \Px)$-martingale defined in (\ref{MrtRprDftPrcP})}, and $D(t):=\left< (D_{1}(t),\dots,D_{N}(t))',X_{t}\right>$ with
\begin{equation}\label{NDfnDf}
	D_{i}(t):=\sum_{j=1}^{N} (a_{i,j}(t)-a_{i,j}^{\Qx}(t))\frac{\psi_{j}(t)}{\psi_{i}(t)}
	= \sum_{j\neq i} (a_{i,j}(t)-a_{i,j}^{\Qx}(t))\left(\frac{\psi_{j}(t)}{\psi_{i}(t)}-1\right).
\end{equation}
\end{proposition}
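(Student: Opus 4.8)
The plan is to read off the historical dynamics directly from the already-proved risk-neutral dynamics (\ref{DynQ}), treating the passage from $\Qx$ to $\Px$ purely as a change in the semimartingale decomposition of the chain martingale $M^{\Qx}$. The single identity that does all the work is (\ref{RMM0}), which I would rewrite in differential form as $dM^{\Qx}(t)=dM^{\Px}(t)+\left(A(t)'-A^{\Qx}(t)'\right)X_{t}\,dt$. Substituting this into the martingale term of (\ref{DynQ}) splits it into a genuine $\Px$-martingale increment $\left<\psi(t),dM^{\Px}(t)\right>/\left<\psi(t),X_{t^{-}}\right>$ plus an extra finite-variation drift; no Itô cross terms appear, since I am merely replacing one process by its equal inside an SDE that has already been derived.

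Next I would identify the extra drift with $D(t)$. Writing it as $\left<\psi(t),(A(t)'-A^{\Qx}(t)')X_{t}\right>/\left<\psi(t),X_{t^{-}}\right>$ and evaluating on the pre-default event $\{X_{t^{-}}=e_{i}\}$ (where, since $X$ jumps only on a Lebesgue-null set of times, I may use $X_{t}=X_{t^{-}}=e_{i}$ inside the $dt$ integral), the $j$-th coordinate of $(A(t)'-A^{\Qx}(t)')e_{i}$ is $a_{i,j}(t)-a_{i,j}^{\Qx}(t)$, so the drift reduces to $\sum_{j}(a_{i,j}(t)-a_{i,j}^{\Qx}(t))\psi_{j}(t)/\psi_{i}(t)=D_{i}(t)$, the first expression in (\ref{NDfnDf}). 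The second expression follows from the conservativeness of both generators: the $j=i$ term equals $a_{i,i}(t)-a_{i,i}^{\Qx}(t)=-\sum_{j\neq i}\left(a_{i,j}(t)-a_{i,j}^{\Qx}(t)\right)$, and folding this into the $j\neq i$ terms collapses the sum into $\sum_{j\neq i}(a_{i,j}(t)-a_{i,j}^{\Qx}(t))\left(\psi_{j}(t)/\psi_{i}(t)-1\right)$.

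The last, and genuinely delicate, point is the default-jump term $d\xi_{t}^{\Qx}$, which I claim is unchanged and equals $d\xi_{t}^{\Px}$ for $\xi^{\Px}$ as in (\ref{MrtRprDftPrcP}). The reason is that the Radon--Nikod\'ym density $\eta$ in (\ref{DfnDnsty}) is built solely from the chain transition martingales $M^{i,j}$ of (\ref{CrsMrt}); these jump only when $X$ jumps, whereas $\xi^{\Px}$ jumps only at $\tau$, and under the canonical construction (\ref{eq:taudef}) the time $\tau$ is driven by an independent exponential with a continuous compensator $\int_0^\cdot h_u\,du$, so it a.s.\ shares no jump with $X$. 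Hence $\eta$ and $\xi^{\Px}$ have vanishing predictable covariation, and Girsanov's theorem gives that $\xi^{\Px}$ remains a $(\Gx,\Qx)$-martingale and $h$ is still the $(\Fx,\Gx)$-hazard rate of $\tau$ under $\Qx$; consequently the martingale $\xi^{\Qx}$ produced in the proof of Theorem \ref{Prop:DynBndQ} is exactly $\xi^{\Px}$. Collecting the unchanged coefficient $r_{t}+h_{t}(L_{t}-1)$, the new drift $D(t)$, the $\Px$-martingale term, and $-d\xi_{t}^{\Px}$ then yields (\ref{DRWP0}).

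The hard part will be this measure-invariance of the default martingale: everything else is bookkeeping, but one must verify that the chain change of measure leaves the hazard rate untouched, which rests on the orthogonality of $\eta$ to the default component rather than on any cancellation inside the drift.
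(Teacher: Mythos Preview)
Your proposal is correct and follows essentially the same route as the paper: substitute the identity (\ref{RMM0}) into the $\Qx$-dynamics (\ref{DynQ}), identify the resulting finite-variation correction as $D(t)$ by evaluating on $\{X_{t}=e_{i}\}$ and using conservativeness of both generators, and observe that $\xi^{\Qx}=\xi^{\Px}$. The paper isolates this last point as a separate lemma (Lemma~\ref{LmNdFD}), proving it by showing directly that $\eta_{t}\xi_{t}^{\Px}$ is a $\Px$-martingale via It\^o's formula and the fact that $\Delta X_{\tau}=0$ a.s.; your orthogonality/Girsanov phrasing is the same argument in different clothing, resting on the identical observation that $\eta$ and $\xi^{\Px}$ share no jump times.
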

	

\section{Optimal Portfolio Problem}\label{sec:dynopt}
 We consider an investor who wants to maximize her wealth at time $R \leq T$ by dynamically allocating her financial wealth into (1) a risk-free bank account, (2) a risky asset, {and} (3) defaultable bond. The investor does not have intermediate consumption nor capital income to support her purchase of financial {assets.} 

 Let us denote by $\nu_t^B$ the number of shares of the risk-free bank account that the investor buys (${\nu}_t^B>0$) or sells (${\nu}_t^B<0$) at time $t$. Similarly, ${\nu}_t^S$ and  {${\nu}_t^P$} denote the investor's portfolio positions on the stock and risky bond at time $t$, respectively.
 The process $({\nu}_t^B,{\nu}_t^S,{\nu}_t^P)$ is called a \emph{portfolio process}. We {denote $V_t({\nu})$} the wealth of the portfolio process ${\nu}=({\nu}^B,{\nu}^S,{\nu}^P)$ at time $t$, i.e.
 $$
 V_t({\nu}) = {\nu}_t^B B_t + {\nu}_t^S S_t + {\nu}_t^P p(t,T).
 $$
As usual, we require the processes ${\nu}_t^B,{\nu}_t^S$, and ${\nu}_t^P$ to be {${\Fx}$-predictable. }
We also assume  {the following {self-financing} condition}:
\[
	d V_{t}={\nu}_{t}^{B} d B_{t}+{\nu}^{S}_{t} dS_{t}+{\nu}_{t}^{P} d p(t,T).
\]
Given an initial state configuration $(x,z,v) \in \mathbb{E}:=\{e_1,e_2,\ldots,e_N \} \times \{0,1\} \times (0,\infty)$, we define {the class of admissible strategies} $\mathcal{A}:=\mathcal{A}(v,i,z)$ to be a set of {(self-financing)} portfolio processes ${\nu}$ such that {$V_{t}({\nu})\geq {}0$} for all $t\geq{}0$ when
$X_{0}=x$, $H_{0}=z$, and $V_{0}=v$.

Let $\pi:=(\pi_t^B,\pi_t^S,\pi_t^{P})$ be defined as
 \begin{equation}
 \pi_t^B := \frac{{\nu}_t^B B_t}{V_{t-}({\nu})}, \quad
\pi_t^S := \frac{{\nu}_t^S S_t}{V_{t-}({\nu})},
\quad
 \pi_t^P=  \idc_{\tau > t} \frac{{\nu}_t^{P} {p(t^{-},T)}}{V_{t-}({\nu})},
  \end{equation}
if $V_{t-}({\nu})>0$, while $\pi_t^B=\pi_t^P=\pi_t^S=0$, when $V_{t-}({\nu})=0$.
The vector $\pi:=(\pi_t^B,\pi_t^S,\pi_t^P)$, called a \emph{trading strategy}, {represents} the corresponding fractions of wealth invested in each asset at time $t$. Note that if $\pi$ is admissible, then the dynamics of the resulting wealth process $V^{\pi}$ {can be written as}
\[
 	d V_{t}^{\pi}=V^{\pi}_{t^{-}}\left\{\pi_{t}^{B}\, \frac{d B_{t}}{B_{t}}+\pi^{S}_{t}\, \frac{dS_{t}}{S_{t}}+{\pi_{t}^{P}}\, {\frac{d p(t,T)}{p(t^{-},T)}}\right\},
\]
under the convention that $0/0=0$. This convention is needed to deal with the case when default has occurred ($t>\tau$), so that $p(t^{-},T)$=0 and we fix $\pi^{P}_{t}=0$.
Using the dynamics derived in Proposition \ref{Prop:DynBndP} {and that $\pi^{B}+\pi^{P}+\pi^{S}=1$}, we have the following dynamics of the wealth process
\begin{eqnarray}
\nonumber dV_t^\pi &=&
V_{t^{-}}^\pi \bigg[\left\{r_{t}+ \pi_t^S (\mu_{t}-r_{t})+\pi_{t}^{P}[h_{t}(L_{t}-1)+D(t)] \right\}dt \\
 & &  \qquad +{\pi_t^S} \sigma_t d W_{t} +\pi_{t}^{P}  \frac{\left<\psi(t), d M^{\Px}(t)\right>}{\left<\psi(t),X_{t^{-}}\right>}
- \pi_{t}^{P}d\xi_{t}^{\Px} \bigg],
\label{eq:wealtheqsimpl}
\end{eqnarray}
under the actual probability $\Px$.

\subsection{{The utility maximization problem}}
For an initial value {$(x,z,v) \in \mathbb{E}$ and an admissible strategy $\pi=(\pi^B,\pi^S,\pi^p)\in\mathcal{A}(x,z,v)$,} 
 let us define the objective functional to be
\begin{equation}
{J_{R}(x,z,v; \pi)} := \Ex^{\Px}\left[U(V_R^{\pi}) \bigg| X_0=x, H_0=z, V_0=v \right];
\end{equation}
i.e. we are assuming that the investor starts with $v$ dollars (its initial wealth), that the initial default state is $z$ ($z=0$ means that no default has occurred yet),
and the initial value for the underlying state of the economy is $x$. The constraint $V_0=v$ is also called the budget constraint.
{As usual, we assume that the utility function $U:[0,\infty)\to\mathbb{R}\cup \{\infty\}$ is strictly increasing and concave.}

Our goal is to maximize the objective functional {$J(x,z,v;\pi)$} for a suitable class of admissible strategies {$\pi_{t}:=(\pi_{t}^{B},\pi_{t}^{S},\pi_{t}^{P})$. Furthermore, we shall focus on feedback or Markov strategies of the form}
\[
	{\pi_{t} = (\pi^B_{_{C_{t^{-}}}}(t,V_{t^{-}},H(t^{-}), \pi^S_{_{C_{t^{-}}}}(t,V_{t^{-}},H(t^{-}),\pi^P_{_{C_{t^{-}}}}(t,V_{t^{-}},H(t^{-})),}
\]
for {some functions {$\pi^{B}_{i},\pi^{P}_{i},\pi^{S}_{i}:[0,\infty)\times [0,\infty)\times \{0,1\}\to \mathbb{R}$} such that $\pi^B_{i}(t,v,z)+\pi^S_{i}(t,v,z)+\pi^{P}_{i}(t,v,z)=1$}.

As usual, we consider instead the following {dynamical} optimization problem:
\begin{equation}\label{VFG}
	{\varphi^R(t,v,i,z) := \sup_{{\pi \in \mathcal{A}_{t}(v,i,z)}} \Ex^{\Px}\left[ U(V_R^{\pi,t,v}) \bigg| {V_t = v}, X_t={e_i}, H(t) =z \right],}
\end{equation}
for each $(v,i,z) \in  {(0,\infty) \times \{1,2,\ldots,N \} \times \{0,1\}}$,
where
\begin{align}
\nonumber dV_s^{\pi,t,v} &=
V_{s^{-}}^{\pi,t,v} \bigg[\left\{r_{s}+ \pi_s^S (\mu_{s}-r_{s})+\pi_{s}^{P}{(1-H(s^{-}))}[h_{s}(L_{s}-1)+D(s)] \right\}ds\\
\nonumber &+{\pi_s^S} \sigma_s d W_{s} +\pi_{s}^{P} {(1-H(s^{-}))}
\frac{\left<\psi(s), d M^{\Px}(s)\right>}{\left<\psi(s),X_{s^{-}}\right>}
- \pi_{s}^{P}d\xi_{s}^{\Px} \bigg], \quad   {s\in[t,R],}
\\
V_{t}^{\pi,t,v}&=v.
\label{eq:wealtheqsimpl2}
\end{align}
The class of processes $\mathcal{A}_{t}(v,i,z)$  is defined as follows:
\begin{definition}
	Throughout, $\mathcal{A}_{t}(v,i,z)$ denotes a {suitable} class of ${\Fx}$-predictable locally bounded {feedback} trading strategies
	\[
		 {\pi_s:=(\pi_{s}^{S},\pi_{s}^{P}):=(\pi_{C_{s^{-}}}^S(s,V_{s^{-}}^{\pi,t,v},H(s^{-})),\pi_{C_{s^{-}}}^P(s,V_{s^{-}}^{\pi,t,v},H(s^{-})))}, \quad s\in[t,R],
	\]
	such that (\ref{eq:wealtheqsimpl2}) admits a unique strong solution ${\{V_{s}^{\pi,t,v}\}_{s\in[t,R]}}$ and
$V_{s}^{\pi,t,v}>{}0$ for any $s\in[t,R]$ when $X_{t}=e_{i}$ and $H(t)=z$. {Throughout this paper}, a trading strategy satisfying these conditions is simply said to be $t$-admissible (with respect to the initial conditions $V_{t}=v$, $X_{t}=e_{i}$, and $H_{t}=z$).
\end{definition}

\begin{remark}
	As it will be discussed below (see Eqs.~(\ref{GenDynd0}), (\ref{CffDynWlth}), and (\ref{JmpWlthPrc})), the jump $\Delta V_{s}:=V_{s}-V_{s^{-}}$ of the process (\ref{eq:wealtheqsimpl2}) at time $s$ is given by
	\begin{equation}\label{JmpWlthPrc0}
	\Delta V_{s}=V_{s^{-}}\bigg(\sum_{i=1}^{N}\sum_{j\neq{}i}{\pi_{i}^{P}(s,V_{s^{-}},H(s^{-}))}{\frac{\psi_{j}(s)-\psi_{i}(s)}{\psi_{i}(s)}}\Delta H_{s}^{i,j}
	- \pi_{s}^{P}\Delta H(s)\bigg).
\end{equation}
Since for $\{V_{s}\}_{s\geq{}t}$ to be strictly positive, it is necessary and sufficient that $\Delta V_{s}>-V_{s^{-}}$ for {any} $s>{}t$ {a.s. (cf.  \cite[Theorem 4.61]{Jacod})}, we conclude that in order for $\pi^{P}$ to be admissible, it is necessary that,
\begin{equation}\label{NcsCndAdm}	
	M_{i}:={\max_{j \neq i:\psi_{i}(s)<\psi_{j}(s)} \left( -\frac{\psi_i(s)}{\psi_j(s)-\psi_{i}(s)}\right)} < \pi_i^P(s,v,z)  < 1,
\end{equation}
{for any  $s,v>0$, $z\in\{0,1\}$, and $i=1,\dots,N$}, {where we set $M_{i}:=-\infty$ if {$\psi_{i}(s)\geq \psi_{j}(s)$} for all $j\neq{}i$}.
\end{remark}

\section{Verification Theorems}\label{Sect:VerThr}
As it is usually the case, we start by deriving the HJB formulation of the value function (\ref{VFG}) via heuristic arguments. We then verify that the solution of the proposed HJB equation (when it exists and satisfies other regularity conditions) is indeed optimal (the so-called verification theorem). Let us assume for now that $\varphi^R(t,v,i,z)$ is $C^1$ in $t$ and $C^2$ in $v$  for each $i$ and $z$.
Then, using It\^o's rule along the lines of Appendix \ref{Sect:DrvGnr}, we have that
$$
\varphi^R(t, V_t^{\pi}, C_t, H(t)) = \varphi^R(r,V_r^{\pi},C_r,H_r) + \int_r^t \mathcal{L} \varphi^R(s,V_s^{\pi},C_s, H_s) ds +
 \mathcal{M}_t -\mathcal{M}_{r},
$$
where $(C_{t})_{t}$ is the Markov process defined in (\ref{DfnChnPrc}), $\mathcal{L}$ is the infinitesimal generator of $(t,V_{t},C_{t},H_{t})$ given in Eq.~(\ref{eq:generator}), and {$(\mathcal{M}_t)_{t}$} is the martingale given by
Eq.~(\ref{eq:lmt}). Next, if $r < t <R$, by virtue of the dynamic programming principle, we expect that
\begin{equation}\label{DnmPrgrm}
\varphi^R(r,V_r^{\pi}, C_r, H_r) = \max_{\pi} \Ex\left[ \varphi^R(t,V_t^{\pi},C_t,H(t)) | \mathcal{G}_{r} \right].
\end{equation}
Therefore, we obtain
\(
	\Ex\left[\left. \int_r^t \mathcal{L} \varphi^R(s,V_s^{\pi},C_s, H_s) ds\right|\mathcal{G}_{r} \right] \leq 0,
\)
with the inequality becoming an equality if $\pi=\widetilde{\pi}$, where $\widetilde{\pi}$ denotes
the optimum. Now, evaluating the derivative with respect to $t$, at $t = r$, we deduce the following HJB equation:
\begin{equation}\label{HJBE0}
	\max_{\pi} \mathcal{L} \varphi^R(r,v,i,z) = 0,
\end{equation}
with boundary condition
\(
\varphi^R(T,  v, i, z) = U(v).
\)

In order to further specify (\ref{HJBE0}), let us first note that the dynamics {(\ref{eq:wealtheqsimpl2})} can be written in the form
\begin{align}\label{GenDynd0}
 dV_s&= \alpha_{_{C_{s}}} ds+ \vartheta_{_{C_{s}}} d W_{s}
 +\sum_{j=1}^{N}\beta_{_{C_{s^{-}},j}} dM^{\Px}_{j}(s)
- \gamma_{_{C_{s^{-}}}} d\xi_{s}^{\Px},\quad (t<s<R),
\end{align}
with coefficients
\begin{align}
\beta_{i,j}(t,v,z) &= v {\pi_i^P(t,v,z)} (1-z)\frac{\psi_{j}(t)}{\psi_{i}(t)}, \quad {\gamma_{i}(t,v,z)} = v{\pi_{i}^{P}(t,v,z)}(1-z), \nonumber \\
\alpha_i(t,v,z) &=v \left[ r_{i}+{\pi_{i}^{S}}(\mu_{i}-r_{i})+{\pi^{P}_{i}}(1-z)(h_{i}(L_{i}-1) +{D_{i}(t)}) \right]\label{CffDynWlth} \\
\vartheta_i(t,v,z) &= {\pi^{S}_{i}(t,v,z)} \sigma_i v,\nonumber
\end{align}
where $D_{i}(t)$ is defined as in (\ref{NDfnDf}).
Using the expression for the generator in Eq.~(\ref{eq:generator}), the notation $\varphi_{i,z}(t,v):=\varphi^{R}(t,v,i,z)$, and the relationship $\pi^{B}=1-\pi^{S}-\pi^{P}$, (\ref{HJBE0}) can be written as follows {for each $i=1,\dots, N$}:
\begin{align}\nonumber
	 0&=\frac{\partial \varphi_{i,z}}{\partial t} +  v r_{i} \frac{\partial \varphi_{i,z}}{\partial v}+ z \sum_{j\neq{}i} a_{i,j}(t)\left[\varphi_{j,z}(t,v)-
	  \varphi_{i,z}(t,v)\right]\\ \nonumber
	 &\quad\quad+\max_{\pi^{S}_{i}}\left\{
	  \pi^{S}_{i}(\mu_{i}-r_{i})v \frac{\partial \varphi_{i,z}}{\partial v}+ (\pi^{S}_{i})^{2}\frac{\sigma_{i}^{2}}{2}v^{2}\frac{\partial^{2} \varphi_{i,z}}{\partial v^{2}}\right\}\\
	  \nonumber
	 &\quad \quad +(1-z)\max_{{\pi^{P}_{i}}}\bigg\{ {\pi^{P}_{i}}\theta_{i}(t) v\frac{\partial \varphi_{i,z}}{\partial v}
	  +h_{i} \left[\varphi_{i,1}(t,v(1-{\pi^{P}_{i}}))-\varphi_{i,z}(t,v)\right]	\\
	  &\quad\quad+\sum_{j\neq{}i} a_{i,j}(t)\left[\varphi_{j,z}\left(t,v\left[1+{\pi^{P}_{i}}{\left(\frac{\psi_{j}(t)}{\psi_{i}(t)}-1\right)}\right]\right)-
	  \varphi_{i,z}(t,v)\right]\bigg\},
	  \label{eq:HJBpdepre00}
\end{align}
where
\begin{equation}\label{DfnTheta}
	\theta_{i}(t):={h_{i}L_{i} -\sum_{j\neq{}i} a_{i,j}^{\Qx}(t)\left(\frac{\psi_{j}(t)}{\psi_{i}(t)}-1\right)}.
\end{equation}
We can consider two separate cases
\begin{equation}\label{PreDflt}
{\bar{\varphi}^R(t,v,i) =  \varphi_{i,0}(t,v)=\varphi^R(t,v,i,0),}  \qquad (\text{pre-default case})
\end{equation}
and
\begin{equation}\label{PstDflt}
{\underline{\varphi}^R(t,v,i) =  \varphi_{i,1}(t,v)=\varphi^R(t,v,i,1),} \qquad (\text{post-default case}).
\end{equation}

Section \ref{sec:postdefault} give a verification theorem for the post-default case, while Section \ref{sec:predefault} gives
a verification theorem for the pre-default case.

\subsection{Post-Default Case} \label{sec:postdefault}
In the post-default case, we have that $p(t,T) = 0$, for each $\tau < t \leq T$. Consequently, $\pi^P_t = 0$ for $\tau < t \leq T$ and, since $\pi_t^B = 1-\pi_t^S-\pi_t^P$, we can take $\pi = \pi^S$ as control.

Below, $\eta_i := \frac{\mu_i - r_i}{\sigma_i}$ denotes the sharpe ratio of the risky {asset} under the $i^{th}$ state of economy and $C^{1,2}_{0}$ denotes the class of functions $\varpi:[0,R] \times \mathbb{R}_{+}\times \{1,\dots,N\}\to \mathbb{R}_{+}$  such that
\[
	\varpi(\cdot,\cdot,i)\in C^{1,2}((0,R)\times\mathbb{R}_{+}) \cap C([0,R]\times \mathbb{R}_{+}),\quad  \varpi_{v} (s,v,i)\geq{}0,\quad \varpi_{vv} (s,v,i)\leq{}0,
\]
for each $i=1,\dots,N$. We have the following verification result, whose proof is reported in Appendix \ref{SectVerification}:
\begin{theorem}\label{MntPstDftVal}
Suppose that there exist a function ${\underline{w}}\in C^{1,2}_{0}$ that solve the
nonlinear Dirichlet problem
\begin{equation}
{\underline{w}}_t(s,v,i) -\frac{\eta_i^{2}}{2} \frac{{\underline{w}}_v^2(s,v,i)}{{\underline{w}}_{vv}(s,v,i)} + r_i v {\underline{w}}_v(s,v,i) + \sum_{j \neq i} a_{i,j}(s)
\left( {\underline{w}}(s,v,j) - {\underline{w}}(s,v,i) \right) = 0,
\label{eq:dirich}
\end{equation}
for any $s\in(0,R)$ and $i=1,\dots,N$, with terminal condition ${\underline{w}}(R,v,i) = U(v)$.
We assume additionally that {$\underline{w}$} satisfies
	\begin{equation}\label{KCUB}
		{\rm (i)}\;\;
		|{\underline{w}}(s,v,i)|\leq{} D(s)+E(s) v,\quad
		{\rm (ii)}\;\;\left|\frac{{\underline{w}}_{v}(s,v,i)}{{\underline{w}}_{vv}(s,v,i)}\right|\leq {G(s)} (1+v),
	\end{equation}
	for {some} locally bounded functions $D,E,G:\mathbb{R}_{+}\to \mathbb{R}_{+}$.
Then, the following statements hold true:
\begin{enumerate}
	\item[(1)] ${\underline{w}}(t,v,i)$ coincides with the optimal value function $\underline{\varphi}^R(t,v,i) = \varphi^R(t,v,i, 1)$  in (\ref{VFG}), when
	$\mathcal{A}_{t}(v,i,1)$ is constrained to the class of $t$-admissible feedback controls $\pi_{s}^{S}={\pi}_{_{C_{s}}}(s,V_{s})$ such that ${\pi_{i}}(\cdot,\cdot)\in C([0,R]\times\mathbb{R}_{+})$ for each $i=1,\dots,N$ and 	
	\begin{equation}\label{NCFUB}
		|v \pi_{i}(s,v)|\leq{} G(s)(1+v),
	\end{equation}
	for a locally bounded function $G$. If the solution ${\underline{w}}$ is non-negative, then condition (\ref{NCFUB}) is not needed.
	\item[(2)] The optimal feedback control $\{\pi_s^S\}_{s\in[t,R)}$, denoted by
$\widetilde{\pi}_s^S$, can be written as $\widetilde{\pi}_s^S = {\widetilde{\pi}_{_{C_{s}}}(s,V_{s})}$ with
\begin{equation}
	{\widetilde{\pi}_{i}(s,v)}=-{\frac{\eta_{i}}{\sigma_i}} \frac{{\underline{w}}_v(s,v,i)}{v {\underline{w}}_{vv}(s,v,i)}.
	\label{eq:optpi}
\end{equation}
\end{enumerate}
\end{theorem}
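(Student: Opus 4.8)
The plan is to run a verification argument tailored to the regime-switching diffusion that governs the wealth after default. Since $H\equiv 1$ for $s>\tau$, the dynamics in (\ref{eq:wealtheqsimpl2}) collapse to the pure-diffusion form $dV_s = V_{s}[(r_{C_s}+\pi_s^S(\mu_{C_s}-r_{C_s}))\,ds + \pi_s^S\sigma_{C_s}\,dW_s]$, with the regime index $C_s$ evolving as the finite-state Markov chain and contributing the compensated jump martingales $M^{i,j}$; the default martingale $\xi^{\Px}$ drops out because $\pi^{P}\equiv 0$. For a smooth test function $\varpi$, the relevant controlled generator is $\mathcal{L}^{\pi}\varpi(s,v,i)=\varpi_s + v[r_i+\pi(\mu_i-r_i)]\varpi_v+\tfrac12\pi^2\sigma_i^2v^2\varpi_{vv}+\sum_{j\neq i}a_{i,j}(s)[\varpi(s,v,j)-\varpi(s,v,i)]$. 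Maximizing the $\pi$-dependent terms (a concave quadratic in $\pi$ precisely because $\varpi_{vv}\le 0$) gives a first-order condition whose solution is exactly the candidate control (\ref{eq:optpi}), and substituting it back reduces $\max_\pi\mathcal{L}^\pi\varpi=0$ to the Dirichlet problem (\ref{eq:dirich}). Consequently, any classical solution $\underline w\in C^{1,2}_0$ of (\ref{eq:dirich}) satisfies $\mathcal L^{\pi}\underline w\le 0$ for every admissible feedback $\pi$, with equality at $\pi=\widetilde\pi$.

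Next I would fix an arbitrary $t$-admissible feedback $\pi^S$ and apply the It\^o/Dynkin formula (in the form developed in Appendix~\ref{Sect:DrvGnr}) to $\underline w(s,V_s^{\pi},C_s)$ on $[t,R]$, obtaining $\underline w(R,V_R^\pi,C_R)=\underline w(t,v,i)+\int_t^R\mathcal L^{\pi}\underline w(s,V_s^\pi,C_s)\,ds + \mathcal N_R$, where $\mathcal N$ collects the Brownian integral $\int \underline w_v V_s\pi_s^S\sigma_{C_s}\,dW_s$ and the regime-jump integrals $\int\sum_{j\neq i}[\underline w(s,V_s,j)-\underline w(s,V_s,C_{s^-})]\,dM^{C_{s^-},j}_s$. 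Because $\mathcal L^{\pi}\underline w\le 0$, once the terms in $\mathcal N$ are shown to be true martingales, taking conditional expectations and using $\underline w(R,\cdot,\cdot)=U$ yields $\underline w(t,v,i)\ge\Ex^{\Px}[U(V_R^\pi)]$, and taking the supremum over admissible $\pi$ gives $\underline w\ge\underline\varphi^R$. Repeating the computation with the candidate $\widetilde\pi$ of (\ref{eq:optpi}) turns the drift into an identity, so equality holds and $\underline w(t,v,i)=\Ex^{\Px}[U(V_R^{\widetilde\pi})]\le\underline\varphi^R(t,v,i)$; combining the two inequalities proves both (1) and (2), provided $\widetilde\pi$ is itself admissible, which is where continuity of (\ref{eq:optpi}) and the bound (\ref{NCFUB}) enter.

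The main obstacle is the integrability bookkeeping needed to upgrade the local martingale $\mathcal N$ to a genuine martingale and to justify the limit after localization. I would introduce stopping times $\tau_n\uparrow R$ reducing $\mathcal N$, take expectations up to $s\wedge\tau_n$, and send $n\to\infty$. For this to close I need $\Ex^{\Px}\int_t^R \underline w_v(s,V_s,C_s)^2 V_s^2(\pi_s^S)^2\sigma_{C_s}^2\,ds<\infty$ together with the analogous finiteness for the finitely many jump integrals; here the hypotheses do the work, since bound (i) in (\ref{KCUB}) controls $|\underline w|$ along the path, bound (ii) together with (\ref{eq:optpi}) controls the candidate diffusion coefficient $v\widetilde\pi_i=-(\eta_i/\sigma_i)\,\underline w_v/\underline w_{vv}$, and (\ref{NCFUB}) controls $v\pi_i$ for competing admissible controls, all against moments of $V_s$ that are finite because the wealth SDE has locally bounded coefficients on the compact interval $[t,R]$. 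The convergence step requires uniform integrability of $\{U(V_{\tau_n}^\pi)\}$: when $\underline w\ge 0$ this follows directly from Fatou applied to the non-negative supermartingale $\underline w(s\wedge\tau_n,V,C)$, which is exactly why (\ref{NCFUB}) can be dropped in that case, whereas for signed $\underline w$ the positive and negative parts must be bounded separately using (i) and the linear-growth control (\ref{NCFUB}). Finally I would confirm that the feedback $\widetilde\pi$ of (\ref{eq:optpi}) lies in the admissible class—continuous in $(s,v)$ and satisfying (\ref{NCFUB})—so that the supremum defining $\underline\varphi^R$ is genuinely attained.
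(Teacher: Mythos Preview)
Your proposal is correct and follows essentially the same verification-theorem approach as the paper: apply the It\^o/generator decomposition to $\underline{w}(s,V_s^\pi,C_s)$, maximize the concave drift to identify (\ref{eq:optpi}), localize via stopping times, and pass to the limit using the growth bounds (\ref{KCUB}) and (\ref{NCFUB}) together with a second-moment estimate on the wealth process and Fatou for the non-negative case. The paper's only additional specificity is the explicit choice of localizing stopping times $\tau_{a,b}=\inf\{s\ge t:V_s\ge b^{-1}\text{ or }V_s\le a\}$ and a dedicated Gronwall-type lemma (Lemma~\ref{ExpBnd}) establishing $\Ex_t[\sup_{t\le s\le R}|V_s^\pi-V_t^\pi|^2]<\infty$, which makes precise your phrase ``moments of $V_s$ that are finite because the wealth SDE has locally bounded coefficients.''
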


\subsection{Pre-Default Case}\label{sec:predefault}
In the pre-default case ($z=0$), we take $\pi^S$ and $\pi^{P}$ as our controls.
We then have the following verification result:
\begin{theorem}\label{MntPstDftVal2}
Suppose  that the conditions of Theorem \ref{MntPstDftVal} are satisfied {and, in particular, let $ \underline{w}\in C_{0}^{1,2}$ be the solution of (\ref{eq:dirich})}.
Assume that  $\bar{w}\in C^{1,2}_{0}$ and {$p_{i}=p_{i}(s,v)$, $i=1,\dots,N$,} solve simultaneously the following system of equations:
\begin{align}
	\nonumber
	&\theta_{i}(s) \bar{w}_v(s,v,i)- h_{i}  \underline{\varphi}^R_v(s,v(1-{p_{i}}),i)\\
	\label{eq:optpiP}
	&\quad \quad +  \sum_{j\neq{}i}
	a_{i,j}(s){\left( \frac{\psi_{j}(s)}{\psi_{i}(s)} -1 \right)}  \bar{w}_v\left(s,v\left[1+{p_{i}}{\left(\frac{\psi_{j}({s})}{\psi_{i}({s})}-1\right)}\right],j\right)=0,\\
	\nonumber
	& \bar{w}_t(s,v,i) -\frac{\eta_i^{2}}{2} \frac{\bar{w}_v^2(s,v,i)}{\bar{w}_{vv}(s,v,i)} + r_i v \bar{w}_v(s,v,i) \\
	\nonumber
	&\quad+\bigg\{ {p_{i}}\theta_{i}(t) v \bar{w}_v(s,v,i)
	  +h_{i} \left[{\underline{w}}(s,v(1-{p_{i}}),i)
	  - \bar{w}(s,v,i)\right]	\\
	  &\quad\quad+\sum_{j\neq{}i} a_{i,j}(t)\left[\bar{w}\left(s,v\left(1+{p_{i}}{\left(\frac{\psi_{j}(s)}{\psi_{i}(s)}-1\right)}\right),j\right)-
	  \bar{w}(s,v,i)\right]\bigg\} = 0,
	\label{eq:dirichpre}
\end{align}
for $t<s<R$, with terminal condition $\bar{w}(R,v,i) = U(v)$. We also assume that ${p_{i}(s,v)}$ satisfies  {(\ref{NcsCndAdm}) and (\ref{NCFUB})}  (uniformly in $v$ and $i$) {and $\bar{w}$ satisfies (\ref{KCUB})}.
Then, the following statements hold true:
\begin{enumerate}
	\item[(1)] $\bar{w}(t,v,i)$ coincides with the optimal value function $\bar{\varphi}^R(t,v,i) = \varphi^R(t,v,i, 0)$ in (\ref{VFG}), when
	$\mathcal{A}_{t}(v,i,0)$ is constrained to the class of $t$-admissible feedback controls {$(\pi_{s}^{S},\pi_{s}^{P})=({\pi}^{S}_{_{C_{s^{-}}}}(s,V_{s^{-}},H(s^{-})),{\pi}^{P}_{_{C_{s^{-}}}}(s,V_{s^{-}},H(s^{-})))$} such that
	\[
		{\pi_{i}^{S}(\cdot,\cdot,z),\pi^{P}_{i}(\cdot,\cdot,z)\in C([0,R]\times\mathbb{R}_{+})},
	\]
	for each $i=1,\dots,N$, $\pi^{S}$ satisfies (\ref{NCFUB}) for a locally bounded function $G$, and $\pi^{P}$ satisfies {(\ref{NcsCndAdm}) and (\ref{NCFUB})} (uniformly in {$v,i,z$}). If the solution {$\bar{w}$} is non-negative, then these bound conditions are not needed.
	\item[(2)]  The optimal feedback controls are given by
	{$\widetilde{\pi}_s^S:=\widetilde{\pi}^S_{_{C_{s^{-}}}}(s,V_{s},H(s))$ and $\widetilde{\pi}_s^P:=\widetilde{\pi}^P_{_{C_{s^{-}}}}(t,V_{t},H(s))$} with
\begin{eqnarray}
	{\widetilde{\pi}^{S}_{i}(s,v,z)}&=&{-\frac{\eta_{i}}{\sigma_i} \frac{\bar{w}_v(s,v,i)}{v \bar{w}_{vv}(s,v,i)}(1-z)
	-\frac{\eta_i}{\sigma_i} \frac{\underline{w}_v(s,v,i)}{v \underline{w}_{vv}(s,v,i)}z},\\
	{\widetilde{\pi}^{P}_{i}(s,v,z)}&=&{p_{i}}(s,v)(1-z).\label{OptPrDf}
\end{eqnarray}
\end{enumerate}
\end{theorem}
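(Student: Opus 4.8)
The plan is to run a verification argument built around a single candidate value function glued across the two default states. Define $\Phi(s,v,i,z):=\bar{w}(s,v,i)$ when $z=0$ and $\Phi(s,v,i,z):=\underline{w}(s,v,i)$ when $z=1$. Since the hypotheses of Theorem \ref{MntPstDftVal} are assumed, $\underline{w}$ already coincides with the post-default value function $\underline{\varphi}^R=\varphi^R(\cdot,\cdot,\cdot,1)$ and solves the $z=1$ instance of the HJB equation (\ref{eq:HJBpdepre00}), i.e. (\ref{eq:dirich}). The point of this gluing is that, upon substituting $\varphi_{i,1}(t,v(1-p_{i}))=\underline{w}(t,v(1-p_{i}),i)$, the $z=0$ instance of (\ref{eq:HJBpdepre00}) becomes exactly the coupled system (\ref{eq:optpiP})--(\ref{eq:dirichpre}) solved by $(\bar{w},p_{1},\dots,p_{N})$. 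Thus $\Phi$ solves the full HJB problem with terminal data $\Phi(R,v,i,z)=U(v)$; writing $\mathcal{L}^{\pi}$ for the generator (\ref{eq:generator}) under a fixed feedback control $\pi$, this says $\mathcal{L}^{\pi}\Phi\le 0$ for every admissible $\pi$, with equality at the candidate optimizer.

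First I would confirm that the feedback controls announced in part (2) are the pointwise maximizers in (\ref{eq:HJBpdepre00}). The inner maximization over $\pi^{S}_{i}$ is an unconstrained concave quadratic, since its leading coefficient $\tfrac{\sigma_{i}^{2}}{2}v^{2}\bar{w}_{vv}$ is $\le 0$ by $\bar{w}\in C^{1,2}_{0}$; its unique stationary point is $\widetilde{\pi}^{S}_{i}=-\tfrac{\eta_{i}}{\sigma_{i}}\tfrac{\bar{w}_{v}}{v\bar{w}_{vv}}$, and substituting it produces the term $-\tfrac{\eta_{i}^{2}}{2}\tfrac{\bar{w}_{v}^{2}}{\bar{w}_{vv}}$ in (\ref{eq:dirichpre}). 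For $\pi^{P}_{i}$, the map to be maximized is concave in $\pi^{P}_{i}$: the terms $\pi^{P}_{i}\mapsto\underline{w}(t,v(1-\pi^{P}_{i}),i)$ and $\pi^{P}_{i}\mapsto\bar{w}(t,v[1+\pi^{P}_{i}(\psi_{j}/\psi_{i}-1)],j)$ have second derivatives $v^{2}\underline{w}_{vv}\le0$ and $v^{2}(\psi_{j}/\psi_{i}-1)^{2}\bar{w}_{vv}\le0$, while the remaining contribution is linear in $\pi^{P}_{i}$. Hence the stationarity condition, which is precisely (\ref{eq:optpiP}), characterizes its unique maximizer $p_{i}$, so $\widetilde{\pi}^{P}_{i}=p_{i}$; the constraint (\ref{NcsCndAdm}) is what keeps $p_{i}$ in the range guaranteeing strict positivity of the wealth.

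Next I would fix an arbitrary $t$-admissible $\pi\in\mathcal{A}_{t}(v,i,0)$ and apply It\^o's rule to $\Phi(s,V^{\pi}_{s},C_{s},H(s))$ along the lines of Appendix \ref{Sect:DrvGnr}, yielding $\Phi(s,V^{\pi}_{s},C_{s},H(s))=\Phi(t,v,i,0)+\int_{t}^{s}\mathcal{L}^{\pi}\Phi\,du+\mathcal{M}_{s}-\mathcal{M}_{t}$, where $\mathcal{M}$ gathers the $dW$, $dM^{\Px}$ and $d\xi^{\Px}$ stochastic integrals. Because $\mathcal{L}^{\pi}\Phi\le0$, introducing a localizing sequence $\{\tau_{n}\}$ (for instance the exit times of $V^{\pi}$ from $[1/n,n]$, capped at $R$) makes the stopped $\mathcal{M}$ a genuine martingale, so that $\Ex^{\Px}[\Phi(s\wedge\tau_{n},V^{\pi}_{s\wedge\tau_{n}},C_{s\wedge\tau_{n}},H(s\wedge\tau_{n}))]\le\Phi(t,v,i,0)$. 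Letting $n\to\infty$ and $s\uparrow R$ and using the terminal condition gives $\Ex^{\Px}[U(V^{\pi}_{R})]\le\bar{w}(t,v,i)$, hence $\bar{\varphi}^{R}\le\bar{w}$. Taking $\pi=\widetilde{\pi}$ turns $\mathcal{L}^{\pi}\Phi\le0$ into an equality throughout---here (\ref{NCFUB}) and (\ref{NcsCndAdm}) for $p_{i}$ ensure $\widetilde{\pi}$ is itself admissible---so the reverse inequality $\bar{\varphi}^{R}\ge\bar{w}$ follows and both part (1) and the optimality claimed in part (2) are obtained.

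I expect the real obstacle to lie in the passage to the limit rather than in the algebra. One must show that the localized stochastic integrals have vanishing expectation in the limit and that $\Phi(s\wedge\tau_{n},\cdot)$ converges to $U(V^{\pi}_{R})$ in $L^{1}$; this is exactly the purpose of the growth conditions (\ref{KCUB}) and (\ref{NCFUB}). The linear bound on $\Phi$ reduces everything to moment estimates for $\sup_{t\le s\le R}V^{\pi}_{s}$, which follow from the linear-growth form of the coefficients (\ref{GenDynd0})--(\ref{CffDynWlth}) under the admissibility bounds; the statement that these bounds may be dropped when $\bar{w}\ge0$ reflects that a one-sided estimate then suffices through Fatou's lemma. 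A secondary delicate point is the jump at the default time $\tau$: the combined function $\Phi$ is arranged so that the generator absorbs this jump automatically (through the $h_{i}[\,\underline{w}(t,v(1-\pi^{P}_{i}),i)-\bar{w}(t,v,i)\,]$ term), so the post-default continuation requires nothing beyond invoking Theorem \ref{MntPstDftVal}.
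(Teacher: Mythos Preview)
Your proposal is correct and follows essentially the same route as the paper's proof: the paper defines the same glued function $f(s,v,i,z)=\bar{w}(s,v,i)(1-z)+\underline{w}(s,v,i)z$, applies the semimartingale decomposition (\ref{KSMD}) to obtain $M_s^\pi$, verifies the separate concavity in $\pi^S$ and $\pi^P$ (noting $R_{\pi^S\pi^P}=0$) to identify the maximizers, localizes via a stopping time $\tau_{a,b}$, and then passes to the limit using the linear growth bound (\ref{KCUB}) together with the moment estimate of Lemma~\ref{ExpBnd} (uniform integrability via Chow--Teicher), with Fatou's lemma handling the nonnegative case. The only cosmetic difference is that the paper's localizing time $\tau_{a,b}$ also truncates when $\pi^P$ approaches the admissibility boundary (\ref{NcsCndAdm}), whereas you localize only on the wealth; this does not affect the argument.
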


\section{Application to Logarithmic Utility}\label{sec:logappl}
The objective of this section is to specialize the framework developed in the previous sections to concrete choices of utility functions. We focus on the {logarithmic utility function} $U(v) = \log(v)$. The framework, however, {can} be applied to other concave utility functions, such as {power or negative exponential utilities}. 
{For the sake of clarity and conciseness, the details of the numerical implementation for other HARA functions is being deferred to the follow-up paper \cite{CapFig}}.

\subsection{Explicit Solutions}
{Let us recall that the investor's horizon $R$ is assumed to be less than the maturity $T$}  of the defaultable bond. 
Before proceeding, we state without proof a fundamental result from the theory of ordinary differential equations.

\begin{lemma}[\cite{Codd}]\label{lemmaode}
Suppose that the $n \times n$ matrix $F(t)$ and the $n \times 1$ vector $b(t)$, are both continuous on an interval $I \in \mathbb{R}$. Let
$t_0 \in I$. Then, for every choice of the vector $x_0$, we have that the system
\begin{eqnarray*}
{x'(t) = F(t) x(t) + b(t), \qquad x(t_0) = x_{0},}
\end{eqnarray*}
has a unique vector-valued solution $x(t)$ that is defined on the same interval $I$.
\end{lemma}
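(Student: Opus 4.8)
The statement is the classical global existence and uniqueness theorem for linear systems, and the plan is to establish it by the Picard--Lindel\"of method combined with a Gronwall argument that upgrades local solvability to solvability on all of $I$. First I would recast the initial value problem as the integral equation
\[
x(t) = x_0 + \int_{t_0}^{t} \bigl( F(s) x(s) + b(s) \bigr)\, ds,
\]
observing that, since $F$ and $b$ are continuous, a continuous $x$ solves this integral equation if and only if it is $C^1$ and solves the differential system with $x(t_0)=x_0$. This reduces the problem to finding a fixed point of the associated integral operator.

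For local existence and uniqueness, I would fix a compact subinterval $[a,b]\subseteq I$ containing $t_0$. On $[a,b]$ the map $x\mapsto F(\cdot)x + b(\cdot)$ is continuous in $t$ and globally Lipschitz in $x$ with constant $K:=\sup_{s\in[a,b]}\|F(s)\|<\infty$, the supremum being finite by continuity of $F$ on the compact set $[a,b]$. Defining the Picard iterates $x^{(0)}(t)\equiv x_0$ and $x^{(k+1)}(t) := x_0 + \int_{t_0}^{t} (F(s)x^{(k)}(s)+b(s))\,ds$, I would bound $\|x^{(k+1)}(t)-x^{(k)}(t)\|$ by the standard estimate of order $(K|t-t_0|)^k/k!$, so that the iterates converge uniformly on $[a,b]$ to a fixed point of the integral operator, yielding a solution there. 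Equivalently, one may invoke the Banach fixed point theorem on $C([a,b];\mathbb{R}^n)$ equipped with a weighted supremum norm that renders the operator a strict contraction.

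The main obstacle is promoting this to a solution on the entire interval $I$, which may be open or unbounded; here linearity is essential, since the Lipschitz bound $K$ on $[a,b]$ is independent of the size of $x$, so the solution cannot blow up in finite time. Concretely, any solution on a subinterval satisfies $\|x(t)\| \le \|x_0\| + \int_{t_0}^{t}\bigl(K\|x(s)\| + \|b(s)\|\bigr)\,ds$, and Gronwall's inequality yields the a priori bound $\|x(t)\|\le \bigl(\|x_0\| + \int_{a}^{b}\|b(s)\|\,ds\bigr)\,e^{K|t-t_0|}$, valid throughout $[a,b]$. Thus a local solution cannot escape to infinity as $t$ approaches an endpoint of its maximal interval, so a standard continuation argument extends it to every compact subinterval, hence to all of $I$. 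Finally, uniqueness follows from the same estimate: if $x$ and $y$ both solve the problem, then $w:=x-y$ satisfies $w(t)=\int_{t_0}^{t} F(s)w(s)\,ds$, whence $\|w(t)\|\le \int_{t_0}^{t} K\|w(s)\|\,ds$, and Gronwall forces $w\equiv 0$ on each compact subinterval and therefore on all of $I$.
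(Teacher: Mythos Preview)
Your argument is correct: the reduction to an integral equation, the Picard iteration with the factorial bound, the Gronwall a priori estimate to rule out blow-up, and the continuation to all of $I$ together with the uniqueness step are all standard and valid for linear systems with continuous coefficients.

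However, there is nothing to compare against: the paper explicitly introduces this lemma with the phrase ``we state without proof a fundamental result from the theory of ordinary differential equations'' and merely cites Coddington's textbook. In other words, the paper treats this as a black-box classical result and offers no argument of its own. Your proposal therefore supplies a proof where the paper gives none; it is essentially the textbook proof one would find in the cited reference.
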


 We first give a lemma, which will be used later to derive {the} dynamics of the optimal value functions, and relations satisfied by the optimal investment strategies{.}

\begin{lemma}\label{lemma:pcontin}
The system of equations
\begin{equation}
 {\theta_i(s) - \frac{h_i}{1-p_{i}} + \sum_{j \neq i} a_{i,j}(s) \frac{{\psi_j(s)-\psi_{i}(s)}}{\psi_i(s) + p_{i}{( \psi_j(s)-\psi_{i}(s))}} = 0},
\label{eq:optimalpi}
\end{equation}
{for $i=1,\dots,N$}, admits a unique real solution {$p_{i}(s)$} in the interval $(M_i,1)$, where
{$M_i\in[-\infty,0)$ is defined as in (\ref{NcsCndAdm})}. Moreover, if for each ${i,j} = 1, \ldots, N$, $a_{i,j}$ and $a_{i,j}^{\Qx}$ are continuous {functions}, then {$p(s,i)$} is a continuous function of $s$.
\end{lemma}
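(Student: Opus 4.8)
The plan is to analyze, for each fixed $i$ and $s$, the single scalar equation (\ref{eq:optimalpi}) as an equation in the unknown $p_i$, and show that the left-hand side, viewed as a function $g_i(p)$ on the open interval $(M_i,1)$, is strictly monotone and has the right sign behaviour at the two endpoints, so that a unique root exists by the intermediate value theorem. Write

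Let me set up the function explicitly and examine its endpoint behaviour.

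\begin{proof}[Proof sketch]
Fix $i$ and $s$, and define, for $p\in(M_i,1)$,
\[
	g_i(p):=\theta_i(s)-\frac{h_i}{1-p}+\sum_{j\neq i} a_{i,j}(s)\,\frac{\psi_j(s)-\psi_i(s)}{\psi_i(s)+p\,(\psi_j(s)-\psi_i(s))}.
\]
The plan is to show $g_i$ is continuous and strictly decreasing on $(M_i,1)$, and that $g_i(p)\to+\infty$ as $p\downarrow M_i$ while $g_i(p)\to-\infty$ as $p\uparrow 1$; a unique root in $(M_i,1)$ then follows from the intermediate value theorem together with strict monotonicity.

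First I would verify that each summand is well defined on $(M_i,1)$. The denominator of the $j$-th term is $\psi_i(s)+p(\psi_j(s)-\psi_i(s))=(1-p)\psi_i(s)+p\,\psi_j(s)$, which, since $\psi_i(s),\psi_j(s)>0$ (they are conditional expectations of a strictly positive random variable) and $p\in(M_i,1)$ keeps this a strict convex combination lying strictly between $\psi_i(s)$ and $\psi_j(s)$, is strictly positive; the definition of $M_i$ in (\ref{NcsCndAdm}) is precisely the largest lower bound ensuring positivity of every denominator with $\psi_j>\psi_i$. For the monotonicity, I would differentiate term by term: the term $-h_i/(1-p)$ has derivative $-h_i/(1-p)^2<0$ since $h_i>0$, and the $j$-th sum term has derivative $-a_{i,j}(s)(\psi_j(s)-\psi_i(s))^2/[\psi_i(s)+p(\psi_j(s)-\psi_i(s))]^2$, which is $\leq 0$ because $a_{i,j}(s)\geq 0$ for $j\neq i$. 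Hence $g_i'(p)<0$ on $(M_i,1)$ (strictly, thanks to the $h_i$ term), so $g_i$ is strictly decreasing.

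For the endpoint limits, as $p\uparrow 1$ the term $-h_i/(1-p)\to-\infty$ while all remaining terms stay bounded (their denominators tend to $\psi_j(s)>0$), so $g_i(p)\to-\infty$. As $p\downarrow M_i$: if $M_i=-\infty$, then $-h_i/(1-p)\to 0$ and each $j$-term with $\psi_j>\psi_i$ is bounded, so one only needs $g_i$ to be eventually positive, which follows because the dominant behaviour is governed by the $j$-terms; if $M_i>-\infty$ is attained at some index $j_0$ with $\psi_{j_0}>\psi_i$, then the denominator of that term tends to $0^+$ with a positive numerator $a_{i,j_0}(s)(\psi_{j_0}(s)-\psi_i(s))>0$, forcing that single term, and hence $g_i(p)$, to $+\infty$. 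In either case the intermediate value theorem yields a root, and strict monotonicity makes it unique, giving $p_i(s)$. Finally, continuity of $s\mapsto p_i(s)$ follows from the implicit function theorem applied to $G(s,p):=g_i(p;s)$: when $a_{i,j}$ and $a_{i,j}^{\Qx}$ are continuous, the $\psi_j$ (as solutions of the linear ODE system of Lemma \ref{lemmaode}) and hence $G$ are jointly continuous in $(s,p)$, and $\partial G/\partial p=g_i'<0\neq 0$ at the root, so the root depends continuously on $s$.

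The step I expect to be the main obstacle is the endpoint analysis at $p\downarrow M_i$, since the sign and blow-up behaviour there depends on whether $M_i$ is finite (attained at some index with $\psi_{j_0}>\psi_i$) or $M_i=-\infty$ (the case $\psi_i\geq\psi_j$ for all $j$); one must argue carefully that in the finite case the blowing-up term dominates the others with the correct sign, and in the infinite case that $g_i$ is eventually positive, using positivity of the $\psi$'s and nonnegativity of the off-diagonal rates $a_{i,j}$.
\end{proof}
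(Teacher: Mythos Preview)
Your proposal is correct and follows essentially the same approach as the paper's proof: show that the left-hand side, as a function of $p_i$ on $(M_i,1)$, is continuous and strictly decreasing with image containing zero, then invoke an implicit function theorem (the paper cites Kumagai's version) for continuity in $s$. One small correction to your endpoint analysis: when $M_i=-\infty$ (i.e.\ $\psi_i(s)\ge\psi_j(s)$ for all $j$), every non-constant term tends to $0$ and $g_i(p)\to\theta_i(s)$, not $+\infty$, so the argument there should verify $\theta_i(s)>0$---which holds because then each $\psi_j/\psi_i-1\le 0$ makes the sum in (\ref{DfnTheta}) nonnegative and $h_iL_i>0$.
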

\noindent The proof of this lemma is reported in Appendix \ref{SectExplConstr}. {The following is our main result in this section.}
\begin{proposition}\label{prop:logvaluefn}
Assume that {the} $a_{i,j}^{\Qx}$'s and $a_{i,j}$'s are continuous in $[0,T]$. Let
{$p(t) = [p_{1}(t),p_{2}(t), \ldots, p_{N}(t)]$} be the unique continuous solution in $[0,T]$ of the nonlinear system of equations~(\ref{eq:optimalpi}). Then, {the following statements hold true:}
\begin{enumerate}
\item[(1)] The optimal post-default value function is given by
$${\underline{\varphi}^R(t,v,i)} = \log(v) + K(t,i){,}$$
where $0 \leq t \leq R$, and $K(t) = [K(t,1), K(t,2), \ldots K(t,N)]$ is {the unique solution of} {the linear system of first order differential equations}
\begin{align}
\label{eq:postdeflog}
 &{K_t(t,i) + r_i + \frac{\eta_i^2}{2} + \sum_{j \neq i} a_{i,j}(t) K(t,j) + a_{i,i}(t) K(t,i) = 0},
\end{align}
{with terminal conditions $K(T,i) = 0$},
for $i=1,\dots,N$.
\item[(2)] The optimal percentage of wealth invested in stock is given by {$\tilde{\pi}^S(t) = [\tilde{\pi}^S_{1}(t), \tilde{\pi}^S_{2}(t), \ldots, \tilde{\pi}_{N}^S(t)]$}, where
${\tilde{\pi}^S_{j}(t)} = \frac{\mu_j - r_j}{\sigma_j^2}$, $ 0 \leq t \leq R$.
\item[(3)] The optimal percentage of wealth invested in the defaultable bond is {$\tilde{\pi}^P_{t}=p_{_{C_{t^{-}}}}(t)$}, while the optimal pre-default value function is
$${\overline{\varphi}^R(t,v,i)} = \log(v) + J(t,i),$$ where
$J(t) = [J(t,1), J(t,2), \ldots J(t,N)]$ is the unique solution of the  {linear system of first order differential} equations
\begin{align}
\nonumber &J_t(t,i) + \frac{\eta_i^2}{2} + r_i + {p_{i}(t)} \theta_i(t) + h_i \left(\log(1-{p_{i}(t)}) + K(t,i) - J(t,i) \right) +   \\
&\quad\sum_{j \neq i} a_{i,j}(t)
\left[ \log \left( 1 + {p_{i}(t)} {\left( \frac{\psi_j(t)}{\psi_i(t)} -1 \right) } \right) + J(t,j) - J(t,i) \right] = 0, \label{eq:predeflog1}
\end{align}
{with terminal conditions $J(R,i) =  0$}, for $i=1,\dots,N$.
\end{enumerate}
\end{proposition}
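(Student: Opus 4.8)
The plan is to guess the separable forms $\underline{w}(t,v,i)=\log(v)+K(t,i)$ and $\bar{w}(t,v,i)=\log(v)+J(t,i)$, insert them into the coupled HJB systems of Theorems~\ref{MntPstDftVal} and~\ref{MntPstDftVal2}, and check that the $v$--dependence factors out completely, so that each PDE collapses to a linear first-order ODE system to which Lemma~\ref{lemmaode} applies. Optimality then follows from the two verification theorems, provided the candidates are shown to meet their regularity and growth hypotheses.

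For the post-default claim~(1) I would compute $\underline{w}_{t}=K_{t}(t,i)$, $\underline{w}_{v}=1/v$ and $\underline{w}_{vv}=-1/v^{2}$, so that the Merton term $-\frac{\eta_i^{2}}{2}\frac{\underline{w}_{v}^{2}}{\underline{w}_{vv}}$ equals $\eta_i^{2}/2$ and $r_{i}v\,\underline{w}_{v}=r_{i}$. Substituting into~(\ref{eq:dirich}), the $\log(v)$ terms cancel inside each difference $\underline{w}(s,v,j)-\underline{w}(s,v,i)$, and using $a_{i,i}(t)=-\sum_{j\neq i}a_{i,j}(t)$ reproduces exactly~(\ref{eq:postdeflog}); since the $a_{i,j}$ are continuous, Lemma~\ref{lemmaode} yields a unique $K(t)$ on $[0,R]$ with the prescribed terminal condition. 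Formula~(\ref{eq:optpi}) then gives the optimal stock fraction $-\frac{\eta_i}{\sigma_i}\frac{\underline{w}_{v}}{v\,\underline{w}_{vv}}=\eta_i/\sigma_i=(\mu_i-r_i)/\sigma_i^{2}$, which is claim~(2).

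For the pre-default claim~(3) I would first treat the bond equation~(\ref{eq:optpiP}): inserting $\bar{w}_{v}=1/v$, $\underline{\varphi}^{R}_{v}(s,v(1-p_{i}),i)=1/(v(1-p_{i}))$ and $\bar{w}_{v}(s,v[1+p_{i}(\psi_{j}/\psi_{i}-1)],j)=1/(v[1+p_{i}(\psi_{j}/\psi_{i}-1)])$ (where $\psi_{j}/\psi_{i}$ abbreviates $\psi_{j}(s)/\psi_{i}(s)$), the common $1/v$ cancels and, after clearing $\psi_{i}(s)$, the relation becomes precisely the algebraic system~(\ref{eq:optimalpi}). By Lemma~\ref{lemma:pcontin} this has a unique root $p_{i}(s)\in(M_i,1)$, independent of $v$ and continuous in $s$. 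With $p_{i}(s)$ fixed I would substitute $\bar{w}=\log(v)+J$ into~(\ref{eq:dirichpre}); again the $\log(v)$ and $1/v$ factors cancel, $v\,\bar{w}_{v}=1$ turns $p_{i}\theta_{i}v\,\bar{w}_{v}$ into $p_{i}\theta_{i}$, and the default and regime-switching terms produce the constants $\log(1-p_{i})$ and $\log(1+p_{i}(\psi_{j}/\psi_{i}-1))$, yielding~(\ref{eq:predeflog1}). Its right-hand side is linear in $J$, the inhomogeneous part being a known continuous function of $s$ once $p(s)$ and $K(s,\cdot)$ are determined, so Lemma~\ref{lemmaode} again supplies a unique $J(t)$ on $[0,R]$; the stock control in Theorem~\ref{MntPstDftVal2} reduces to $\eta_i/\sigma_i$ regardless of $z$.

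It remains to verify the hypotheses of the verification theorems. Both candidates lie in $C^{1,2}_{0}$ because $w_{v}=1/v>0$ and $w_{vv}=-1/v^{2}<0$, and condition~(ii) of~(\ref{KCUB}) holds since $|w_{v}/w_{vv}|=v$; the candidate bond control satisfies~(\ref{NcsCndAdm}) by construction and, being continuous on the compact interval $[0,R]$, is bounded and hence meets~(\ref{NCFUB}). The \emph{main obstacle} is condition~(i) of~(\ref{KCUB}): since $\log(v)\to-\infty$ as $v\to0$, the two-sided bound $|w(s,v,i)|\le D(s)+E(s)v$ fails near the origin, so the required integrability and uniform integrability of $\log(V_{R}^{\pi})$ must be argued directly for the logarithmic utility---exploiting that admissible wealth stays strictly positive and that the optimal dynamics are of geometric type---rather than by quoting~(i) verbatim. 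Once this is handled, Theorems~\ref{MntPstDftVal} and~\ref{MntPstDftVal2} identify $\underline{w}$ and $\bar{w}$ with the optimal value functions $\underline{\varphi}^{R}$ and $\bar{\varphi}^{R}$ in~(\ref{VFG}) and confirm the stated optimal strategies.
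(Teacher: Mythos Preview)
Your approach is essentially identical to the paper's: guess the separable ansatz $\log(v)+K(t,i)$ and $\log(v)+J(t,i)$, substitute into the HJB systems of Theorems~\ref{MntPstDftVal} and~\ref{MntPstDftVal2}, watch the $v$--dependence cancel, reduce to the linear ODE systems~(\ref{eq:postdeflog}) and~(\ref{eq:predeflog1}), and invoke Lemma~\ref{lemmaode} for existence and uniqueness. The paper carries out exactly these steps, including the reduction of~(\ref{eq:optpiP}) to the algebraic system~(\ref{eq:optimalpi}) handled by Lemma~\ref{lemma:pcontin}.

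Where you differ is in your treatment of condition~(\ref{KCUB})-(i). You correctly observe that $|\log(v)+K(t,i)|$ is unbounded as $v\downarrow 0$, so no choice of locally bounded $D,E$ can make the two-sided linear bound hold on all of $(0,\infty)$. The paper, by contrast, simply asserts that with $D(t)=\max_{i}K(t,i)$ (respectively $\max_{i}J(t,i)$) and $G(t)=1$ ``the function \dots\ satisfies the conditions in~(\ref{KCUB})'', without further comment. So you have in fact spotted a genuine technical looseness in the paper's own argument: the linear-growth hypothesis~(\ref{KCUB})-(i) is used in the verification proofs only to obtain uniform integrability of $w(R\wedge\tau_{a,b},V_{R\wedge\tau_{a,b}},\cdot)$ as $a,b\to 0$, and for log utility this step does require a direct argument (e.g.\ exploiting that the optimal wealth is a positive Dol\'eans exponential with bounded coefficients, so that $\log V^{\widetilde{\pi}}$ has finite moments). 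Your proposal is therefore not only correct but more careful than the published proof on this point.
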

\noindent The proof of this proposition is reported in Appendix \ref{SectExplConstr}. {We notice that the only difference between the pre-default and post-default optimal value function lies in the time and regime component}.
Moreover, we obtain that the optimal proportion of wealth {invested} in stocks is constant in every economic regime, and does not depend on {the} time or on the current level of wealth. This is consistent with the findings in {\cite{Cadenillas} and \cite{BoWang}}.
We also find that the optimal proportion of wealth allocated on the defaultable bond is time and regime dependent, but independent on the current level of wealth. \cite{BoWang} find that the optimal allocation only depends on time through the default risk premium.

In the case where the rate matrix $A$ is homogenous {(i.e. $a_{i,j}(t) = a_{i,j}$)} and the number of regimes is $N = 2$,
{we} can obtain {closed} form expressions for the optimal pre-default value function, post-default value function{,} and optimal bond fraction {$\widetilde{\pi}^P$}. Let us introduce the following notation:
\begin{align*}
\zeta_i &= r_i + \frac{\eta_i^2}{2}, 
\quad \tilde\psi_{1}(t)= \frac{\psi_{2}(t)}{\psi_{1}(t)} - 1,
\quad \tilde\psi_{2}(t)=\frac{\psi_{1}(t)}{\psi_{2}(t)} - 1,\\
\cone &= h_1 -  a_{1,1},
 \cthree = h_2 - a_{2,2},
\quad \CPlus = \CPlusExpr,   \\
g_{i}(t) &=  {\tilde{\pi}^P_{i}(t)} \theta_i(t) + h_i\left( \log(1-\tilde{\pi}^P(t,i)) + K(t,i) \right)- a_{i,i} \log\left(1+ {\tilde{\pi}^P_{i}(t)} \tilde\psi_{i}(t)\right) \\
\Delta_{i}(t) &=
-4\theta_{i}(t)\tilde\psi_{i}(t)(h_{i}-\theta_{i}(t)
+a_{i,i}\tilde\psi_{i}(t))+(\theta_{i}(t)-a_{i,i}\tilde\psi_{i}(t)
+h_{i}\tilde\psi_{i}(t)-\theta_{i}(t)\tilde\psi_{i}(t))^{2},
\end{align*}
for $i=1,2$. We have the following result:
\begin{corollary}
Assume $N = 2$ and the rate matrix $A$ to be homogenous. Then, the solution functions $K(t,1)$ and $K(t,2)$ of (\ref{eq:postdeflog}) are given by
\begin{eqnarray}
\nonumber K(t,i) &=& \frac{1}{\aPlus^2} \bigg[ \zeta_{3-i}  a_{i,i} \left( 1 - e^{\aPlus (R-t)} + \aPlus  (R-t) \right)  \\
\nonumber & & + \zeta_i \left(a_{3-i,3-i}^2 (R-t) + a_{i,i}\left( e^{\aPlus (R-t) } -1 + a_{3-i,3-i} \left(R-t \right) \right) \right) \bigg].
\end{eqnarray}
The optimal feedback bond-fractions functions $\tilde{\pi}^P_{1}(t)$ and $\tilde{\pi}^P_{2}(t)$ are given by
\begin{equation}
{\tilde{\pi}^P_{i}(t)} =
\begin{cases}
-\frac{h_i + a_{ii} \tilde\psi_{i}(t)}{\left(h_i - a_{i,i} \right) \tilde\psi_{i}(t)}, & \text{if $\theta_i(t) = 0$,} \\
\frac{ \left(a_{ii} - h_i \right) \tilde\psi_{i}(t) + \theta_i(t) \left( \tilde\psi_{i}(t)-1 \right) + \sqrt{\Delta_{i}(t)}}{2 \theta_i(t) \tilde\psi_{i}(t)}, & \text{if $\theta_i(t) \neq 0$.}
\end{cases}
\end{equation}
Finally, the solution functions $J(t,1)$ and $J(t,2)$ of (\ref{eq:predeflog1}) are given by
\begin{align}
\nonumber J(t,i) &= \commterm \bigg[ \left( -\CPlusSQ \CTcosh + \mcthreeone \CTsinh \right) \\
\nonumber &\quad \times \IntTR e^{-\frac{  s \CPlusSQcot  }{2}} \bigg( 2 \Expcoct ( h_{i}-c_{i}) ( \zeta_{3-i}   + g_{3-i}(s)  ) \\
\nonumber   &+((3-2i)(c_{2}-c_{1})-\sqrt{c_{+}}-e^{s\sqrt{c_{+}}}((3-2i)(c_{2}-c_{1})+\sqrt{c_{+}})) (\zeta_i + g_{i}(s))  \bigg) ds\\
\nonumber &-2 (h_i - c_{i}) \CTsinh\times \IntTR e^{-\frac{  s \CPlusSQcot  }{2}} \bigg(\\
\nonumber &\quad (-\CPlusSQ + \ExpCPlusSQ [(3-2i)(c_{2}-c_{1})-\sqrt{c_{+}}]-(3-2i)(c_{2}-c_{1}) ) (\zeta_{3-i} + g_{3-i}(s)) \\
\nonumber &\quad +2 \Expcoct ( h_{3-i} - c_{3-i} ) ( \zeta_i   + g_{i}(s))   \bigg) ds
\bigg].
\end{align}

\end{corollary}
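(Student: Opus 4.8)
The plan is to exploit the fact that, once $N=2$ and the generator is time-homogeneous, each of the three systems in Proposition~\ref{prop:logvaluefn} reduces to a two-dimensional linear ODE with \emph{constant} coefficient matrix, so that everything can be solved by diagonalization and variation of parameters. I would treat the items in the order (1), (2), (3), noting that the pre-default data entering (3) depends on the outputs of (1) and (2), and that throughout I may use the row-sum identities $a_{1,2}=-a_{1,1}$ and $a_{2,1}=-a_{2,2}$ valid for $N=2$.

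\textbf{Part (1).} I would rewrite (\ref{eq:postdeflog}) as $K_{t}=\mathcal{A}K-\zeta$, with $\zeta=(\zeta_{1},\zeta_{2})'$ and the constant matrix $\mathcal{A}=\left(\begin{smallmatrix}-a_{1,1} & a_{1,1}\\ a_{2,2} & -a_{2,2}\end{smallmatrix}\right)$, subject to the terminal condition $K(R,\cdot)=0$ coming from $\underline{w}(R,\cdot)=U(v)=\log v$. The crucial structural fact is $\det\mathcal{A}=0$, so $\mathcal{A}$ has eigenvalues $0$ and $-\aPlus$; the variation-of-parameters solution $K(t,\cdot)=\int_{t}^{R}e^{\mathcal{A}(t-s)}\zeta\,ds$ then produces exactly one constant term, one term linear in $(R-t)$ (from the zero eigenvalue) and one term in $e^{\aPlus(R-t)}$ (from the eigenvalue $-\aPlus$), while the spectral projectors of $\mathcal{A}$ supply one factor of $\aPlus^{-1}$ and the final integration supplies another, yielding the $\aPlus^{2}$ denominator of the claimed formula. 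Existence and uniqueness on all of $[0,R]$ follow from Lemma~\ref{lemmaode}. I would close by substituting the closed form back into (\ref{eq:postdeflog}) to confirm both the ODE and $K(R,i)=0$.

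\textbf{Part (2).} Specializing (\ref{eq:optimalpi}) to $N=2$, the sum over $j\neq i$ has a single term; writing it through $\tilde\psi_{i}=\psi_{3-i}/\psi_{i}-1$ and $a_{i,3-i}=-a_{i,i}$ collapses (\ref{eq:optimalpi}) to $\theta_{i}-h_{i}/(1-p_{i})-a_{i,i}\tilde\psi_{i}/(1+p_{i}\tilde\psi_{i})=0$. Clearing the denominators $(1-p_{i})(1+p_{i}\tilde\psi_{i})$ yields a polynomial in $p_{i}$ that is linear when $\theta_{i}=0$ and quadratic when $\theta_{i}\neq0$; solving it reproduces the two branches of the stated formula, and a direct expansion identifies $\Delta_{i}$ as the discriminant $B^{2}-4AC$ of the quadratic. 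The role of Lemma~\ref{lemma:pcontin} is then to select the admissible root: it guarantees a unique solution in $(M_{i},1)$, which pins down the sign of the square root in the $\theta_{i}\neq0$ branch.

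\textbf{Part (3).} After the same $N=2$ reduction of (\ref{eq:predeflog1}), the terms linear in $J$ combine (again via $a_{i,3-i}=-a_{i,i}$) into the constant matrix $\mathcal{C}=\left(\begin{smallmatrix}\cone & a_{1,1}\\ a_{2,2} & \cthree\end{smallmatrix}\right)$ with $c_{i}=h_{i}-a_{i,i}$, while the remaining terms assemble precisely into the inhomogeneity $\zeta_{i}+g_{i}(s)$; thus the system is $J_{t}=\mathcal{C}J-(\zeta+g)$ with $J(R,\cdot)=0$, whose solution is $J(t,\cdot)=\int_{t}^{R}e^{\mathcal{C}(t-s)}(\zeta+g(s))\,ds$. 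The key computation is the matrix exponential: writing $\mathcal{C}=\tfrac{\conethreenop}{2}I+\mathcal{C}_{0}$ gives $\mathcal{C}_{0}^{2}=\tfrac{\CPlus}{4}I$ (with $\CPlus=\CPlusExpr\geq0$, the nonnegativity following from $a_{1,1},a_{2,2}\leq0$), so that $e^{\mathcal{C}u}=e^{\conethree u/2}\bigl(\cosh(\tfrac{\CPlusSQ}{2}u)I+\tfrac{2}{\CPlusSQ}\sinh(\tfrac{\CPlusSQ}{2}u)\mathcal{C}_{0}\bigr)$. Setting $u=t-s$ and separating the $t$- and $s$-dependence with the hyperbolic addition formulas pulls the $\CTcosh$, $\CTsinh$ factors outside the integral and leaves the $e^{s\CPlusSQ}$ pieces inside, which after extracting the $i$-th component reproduces the displayed expression. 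The main obstacle is exactly this final bookkeeping: expanding $e^{\mathcal{C}(t-s)}$ entrywise and matching each of the four scalar integrals, including the off-diagonal $a_{i,i}$ couplings, against the stated integrand term by term. I would also verify the continuity of $g$ on $[0,R]$ so that Lemma~\ref{lemmaode} applies; this follows from the continuity of $p_{i}$ (Lemma~\ref{lemma:pcontin}) together with that of $\psi_{i}$ and $K$.
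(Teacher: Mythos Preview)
The paper states this Corollary without proof, so there is no argument to compare against; your plan is exactly the straightforward specialization of Proposition~\ref{prop:logvaluefn} to $N=2$ with constant coefficients, and it is correct in all three parts. Your identification of the eigenvalues $0$ and $-(a_{1,1}+a_{2,2})$ for $\mathcal{A}$, the reduction of (\ref{eq:optimalpi}) to a single quadratic via $a_{i,3-i}=-a_{i,i}$, and the computation of $e^{\mathcal{C}u}$ through $\mathcal{C}_{0}^{2}=\tfrac{c_{+}}{4}I$ are the key steps, and each is handled appropriately.
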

From the formulas given above, we can notice that for any given value of the historical intensity $h_1$, the optimal bond strategy $\tilde{\pi}^P_{1}(t)$ is independent of the value of the historical intensity $h_2$, associated with the other regime. A symmetric argument applies to $\tilde{\pi}^P_{2}(t)$, thus showing that each strategy depend on the other regime only through the risk-neutral loss and default intensity associated to the other regime.
This is not surprising, given that an investor would base his decision to buy or sell a defaultable bond on the market perception of default risk  (i.e. based on the risk-neutral default intensity parameters) rather than on the number of defaults experienced by the corporation in the past.

\subsection{{Economic Analysis}}
The objective of this section is to measure the impact of the default parameters over the value functions and the optimal bond strategy via numerical analysis. We fix the interest rate, drift, and volatility regime parameters as indicated in Table \ref{table:regimeparams}. We choose the transition rates of the chain to be the same under both probability measures. In order to evaluate $\psi_i(t)$, we use the analytical expression for the probability density $f_i(R,x)$ of the time spent in state $i$ by a two-regime continuous time Markov Chain, for a given time interval $[0,R]$, when the chain starts in state $i$ at time 0. Such formulas have been provided in \cite{Kovche}. Applying these formulas to our setting, for a given $0 \leq x \leq R$, we obtain
\begin{eqnarray*}
f_1(R,x) &=& e^{-a_{12}^{\Qx} R} \delta(x-R) + a_{12}^{\Qx} e^{-(a_{12}^{\Qx} x + a_{21}^{\Qx}(R-x))} \\
& & \left( I_0\left(2 \sqrt{\termbes}\right) + \sqrt{\frac{a_{12}^{\Qx} a_{21}^{\Qx} x}{R-x}} I_1\left(2 \sqrt{\termbes}\right) \right) \\
f_2(R,x) &=& e^{-a_{21}^{\Qx} R} \delta(x-R) + a_{21}^{\Qx} e^{-(a_{21}^{\Qx} x + a_{12}^{\Qx}(R-x))} \\
& & \left( I_0(2 \sqrt{\termbes}) + \sqrt{\frac{a_{21}^{\Qx} a_{12}^{\Qx} x}{R-x}} I_1(2 \sqrt{\termbes}) \right)
\end{eqnarray*}
where $\delta(x)$ denotes the Dirac delta function and $I_{j}(z)$ is a {modified Bessel function of first kind}.
When applying this formula to our case, we have
\begin{eqnarray*}
\psi_1(0) &=& \int_0^R e^{-(r_1 + h_1 L_1) x - (r_2 + h_2 L_2) (R-x) } f_1(R,x) dx \\
\psi_2(0) &=& \int_0^R e^{-(r_1 + h_1 L_1) (R-x) - (r_2 + h_2 L_2) x } f_2(R,x) dx
\end{eqnarray*}

We plot the behavior of $\tilde{\pi}^P_{1}(0)$ and $\tilde{\pi}^P_{2}(0)$ with respect to $h_1$ and $h_2$. It appears from Figure \ref{fig:optstratloghaz} that the strategies are not very sensitive to the starting regime when the ratio $\frac{h_1}{h_2}$ is close to one, with $h_1$ and $h_2$ not too large.
This is consistent with intuition, because in such a scenario, the default intensity is the same and small under both regimes, thus the default probability is also small and, consequently, the slightly different risk neutral loss rates in Table \ref{table:regimeparams} do not affect much the investment choice.
As the gap between $h_2$ and $h_1$ increases, the strategies $\tilde{\pi}^P_{1}(0)$ and $\tilde{\pi}^P_{2}(0)$ behave differently. This is because a larger risk-neutral intensity translates into a larger risk-neutral default probability, and given that the risk-neutral transition rates of the chain are not large, the starting regime matters.



\begin{table}
\centering
\begin{tabular}{|l|c|c|}
\hline
    & Regime `1' & Regime `2' \\
\hline
$ r $ &  0.03 & 0.03 \\
\hline
$\mu $ & 0.07 & 0.02  \\
\hline
$\sigma$ &  0.2 & 0.2 \\
\hline
\hline
\end{tabular}\caption{Parameters associated to the two regimes.}
\label{table:regimeparams}
\end{table}

\begin{figure}
\centering
\includegraphics[width=0.47\textwidth, viewport=100 200 500 500]{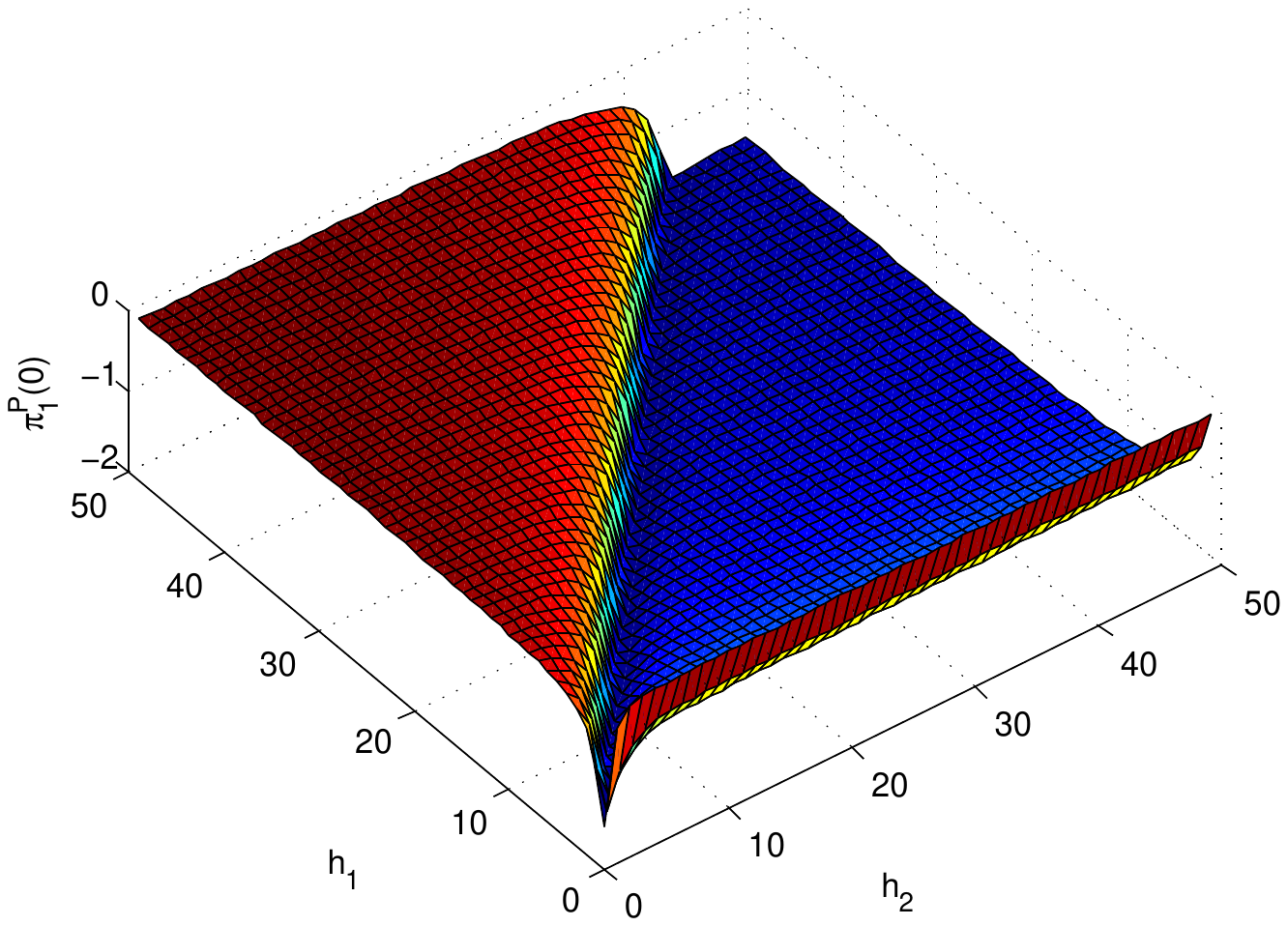}
\includegraphics[width=0.47\textwidth, viewport=100 200 500 500]{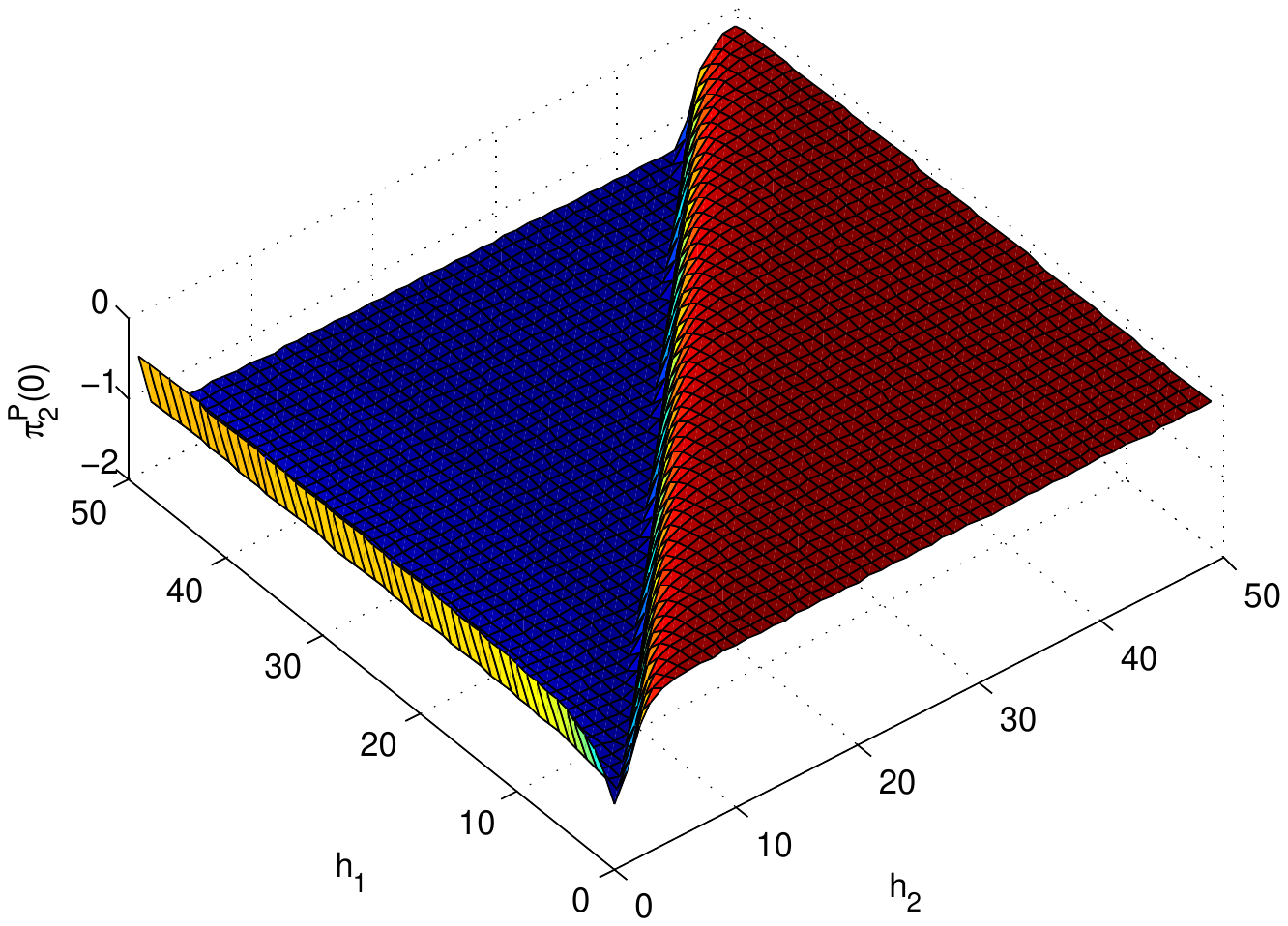}
\caption{Optimal bond strategy $\tilde{\pi}^P_1(0)$ and $\tilde{\pi}^P_2(0)$ versus the risk neutral hazard intensities $h_1$ and $h_2$. The upper surface represents $\tilde{\pi}^P_1(0)$, while the lower surface  represents $\tilde{\pi}^P_2(0)$. The loss parameters are given, respectively by $L_1 = 0.4$, and $L_2 = 0.45$. The transition rates are $a_{12} = 0.7$, and $a_{21} = 0.1$.
The horizon {$R$} is set to 2 years.}
\label{fig:optstratloghaz}
\end{figure}
We next show the behavior of the optimal bond strategy over time. It appears from the left panels of Fig.~\ref{fig:optstratTime} that the number of bond units sold decrease as the investment horizon increases. It can be noticed that, the riskier the corporate bond, the smaller the number of units sold. This is negatively correlated with the regime conditioned bond prices, as it can be checked from Fig.~\ref{fig:optstratTime}, showing that riskier bonds have smaller prices, and shorter maturity bonds have larger prices. Moreover, since the second regime is riskier and the transition rate from the second to the first regime very small, the price decay over time of the bond is faster when the Markov chain starts in the second regime ($\psi_2(t) < \psi_1(t)$), thus leading to a larger number of bond units sold when the chain starts in the second regime.

\begin{figure}
\includegraphics[width=0.5\textwidth, viewport=100 200 500 500]{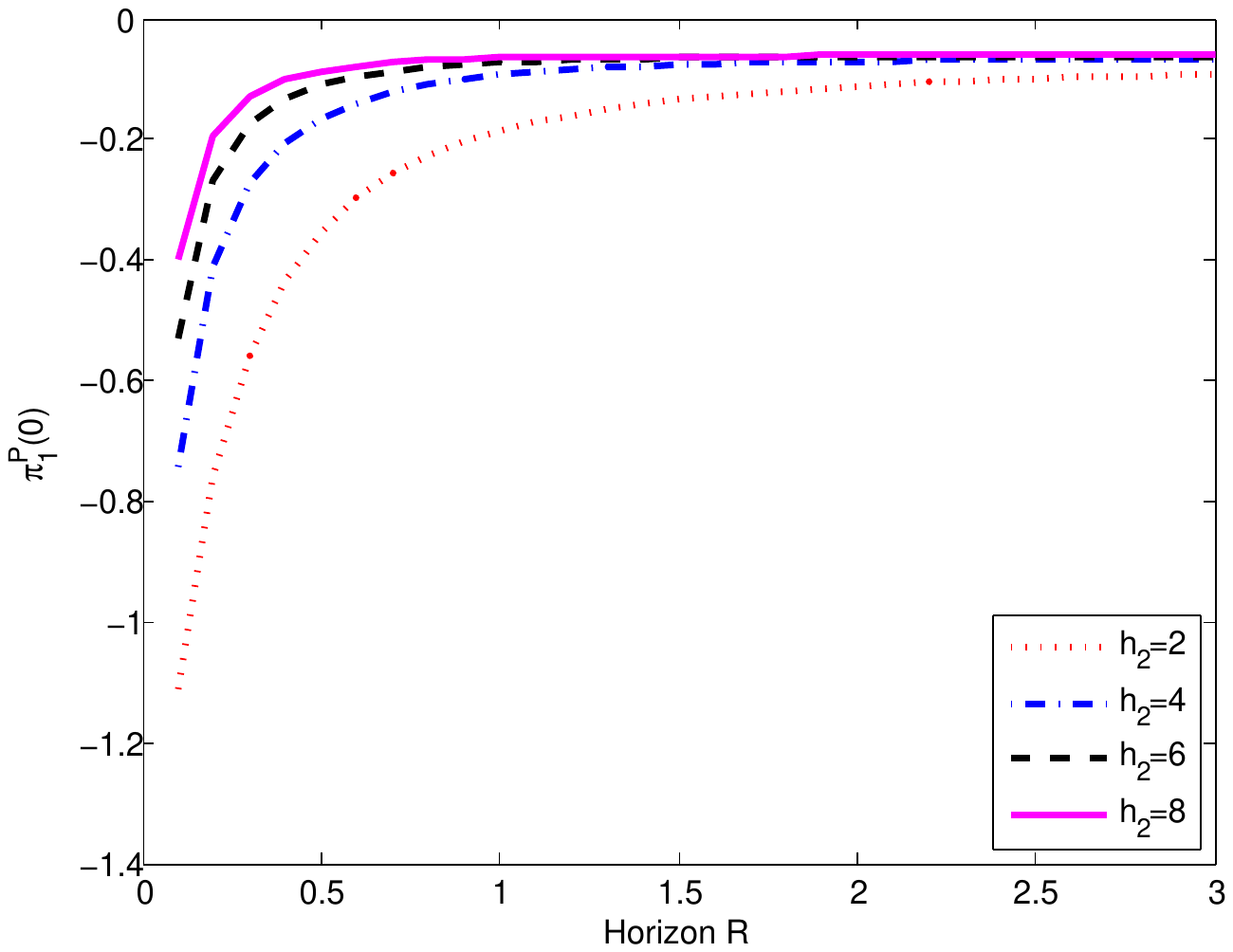}
\includegraphics[width=0.5\textwidth, viewport=100 200 500 500]{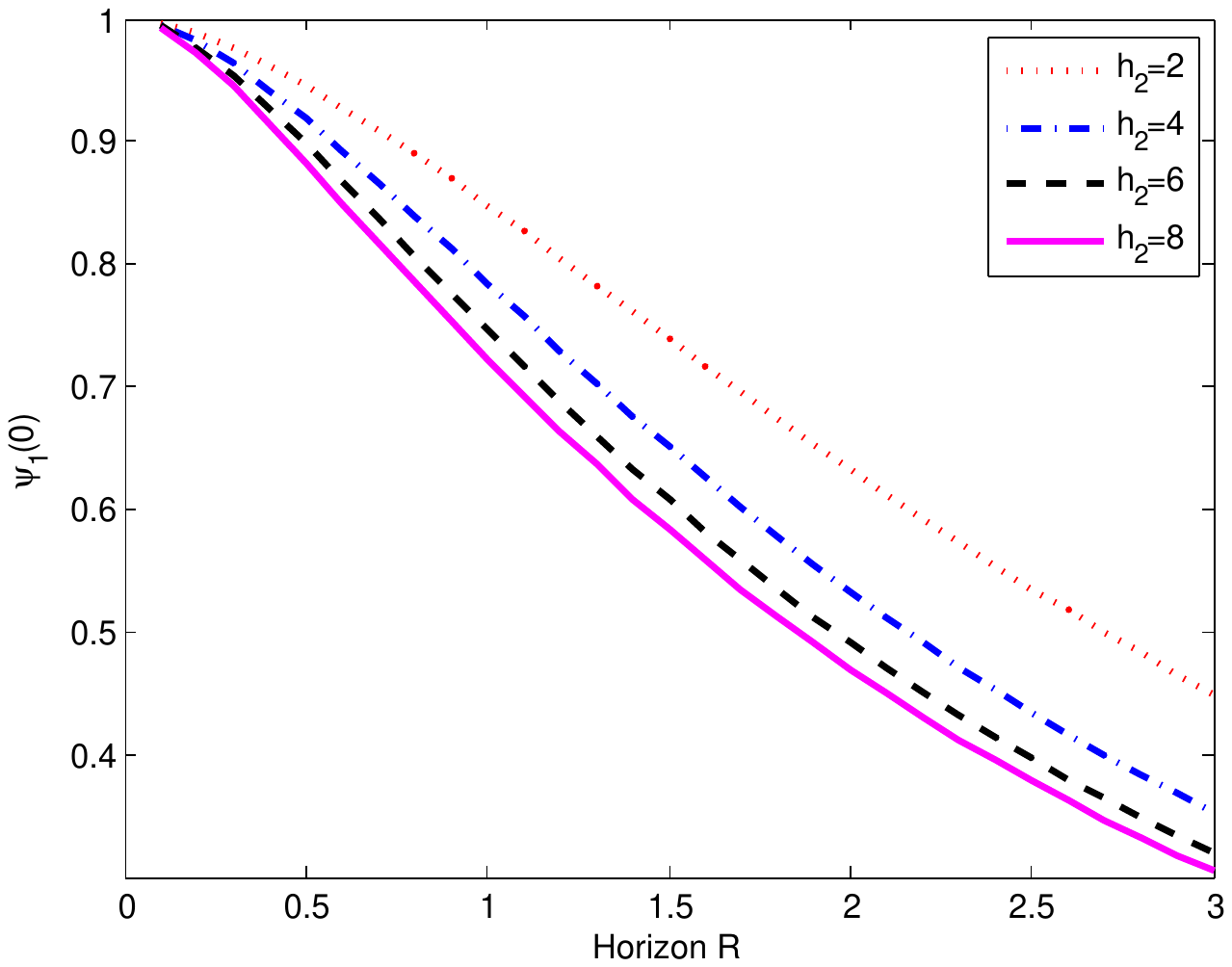}\\
\includegraphics[width=0.5\textwidth, viewport=100 200 500 500]{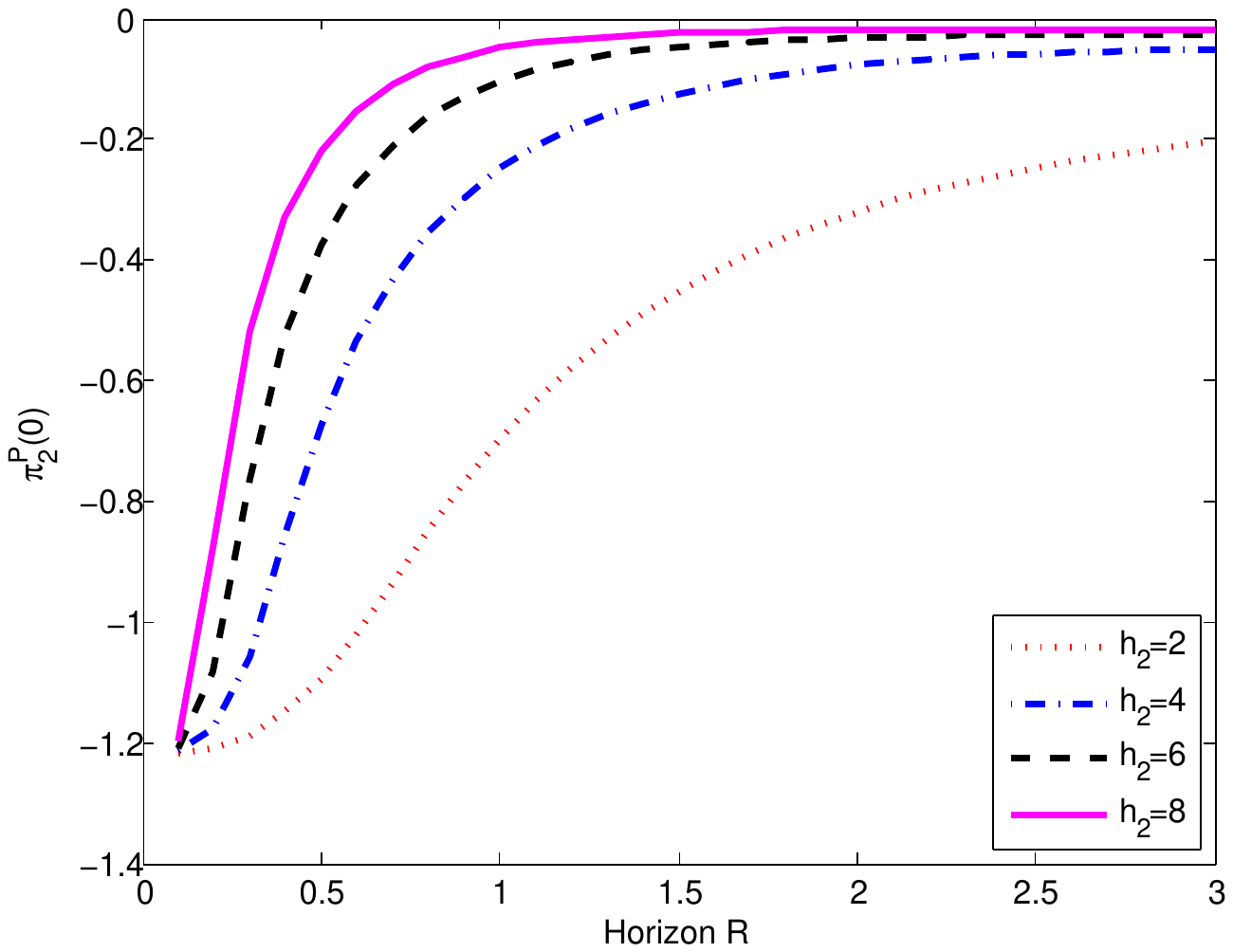}
\includegraphics[width=0.5\textwidth, viewport=100 200 500 500]{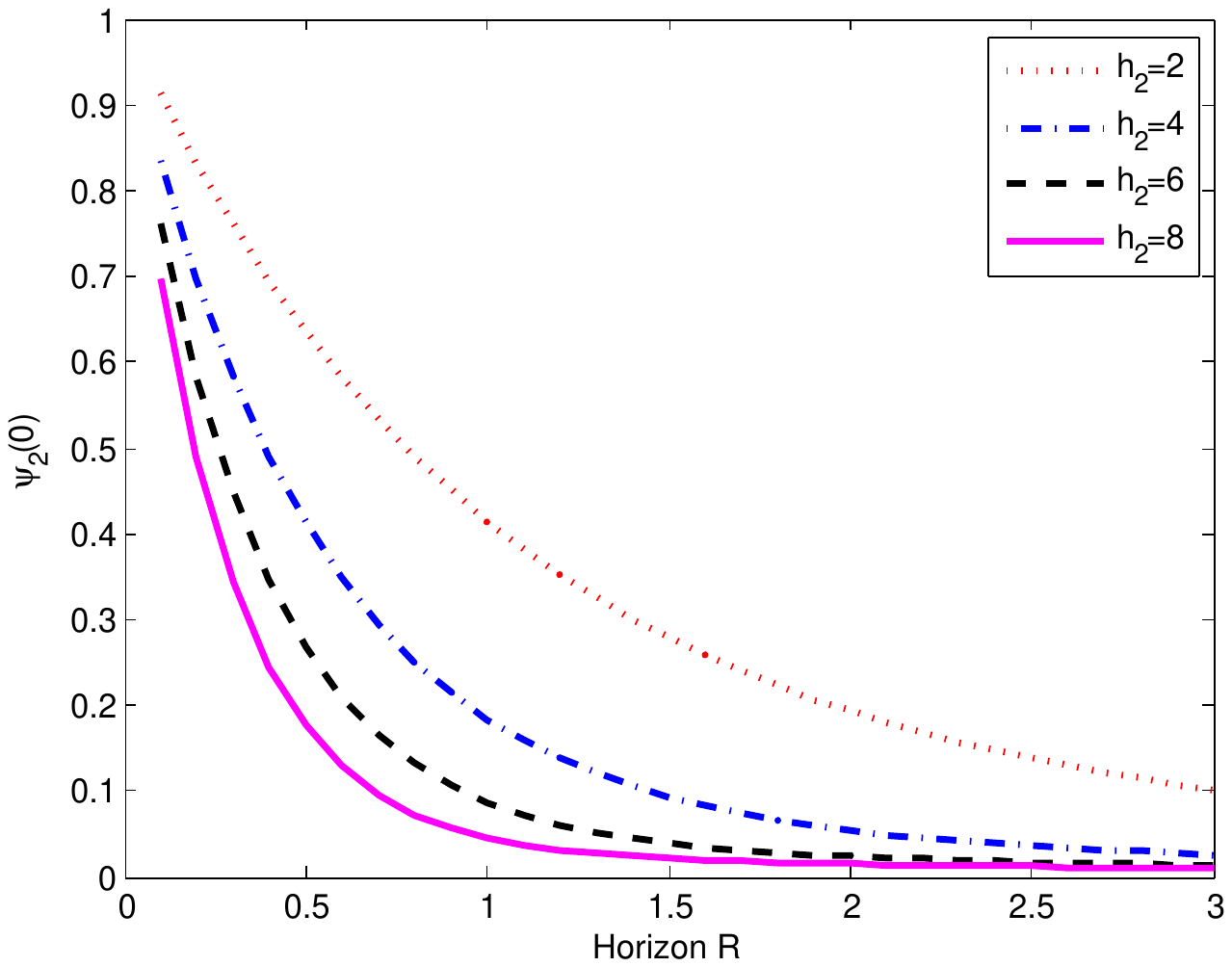}
\caption{Optimal bond strategies and regime-conditioned bond prices at time zero versus time horizon. The top left panel represents the optimal bond strategy $\tilde{\pi}^P_1(0)$. The top right panel represents the regime-conditioned bond price $\psi_1(0)$. The bottom left panel represents the optimal bond strategy $\tilde{\pi}^P_2(0)$. The bottom right panel represents the regime-conditioned bond price $\psi_2(0)$. The loss parameters are given, respectively, by $L_1 = 0.4$, and $L_2 = 0.45$. The hazard intensity $h_1 = 0.1$. The transition rates are $a_{12} = 0.4$, and $a_{21} = 0.1$.
}
\label{fig:optstratTime}
\end{figure}

We next inspect the behavior of the time components, $K(t,1)$ and $K(t,2)$, of the optimal post-default value function over time, for a fixed investment horizon $R$. We report the results in Figure \ref{fig:postdef}, and in each plot we superimpose the time component of the optimal value function obtained in the Merton model.
The latter is well known from the work of \cite{Merton}, and in our specific case obtained assuming that for each regime $i$, the market model consists of a stock and a money market account with parameters $\mu_i$, $r_i$, and $\sigma_i$. It appears from the plots that both post-default value functions decrease with time. Moreover, they differ at times $t$ far from the investment horizon due to possibility of regime shifts, and start converging to each other when the time to horizon is small. This is because the chain spends the largest fraction of its time in the starting regime due to the small transition rates ($a_{12}= 0.4$ and $a_{21}= 0.1$), and consequently for short times to the horizon, the wealth process of our regime switching model approaches the wealth process in the Merton model.

\begin{figure}
\includegraphics[width=0.5\textwidth, viewport=100 200 500 550]{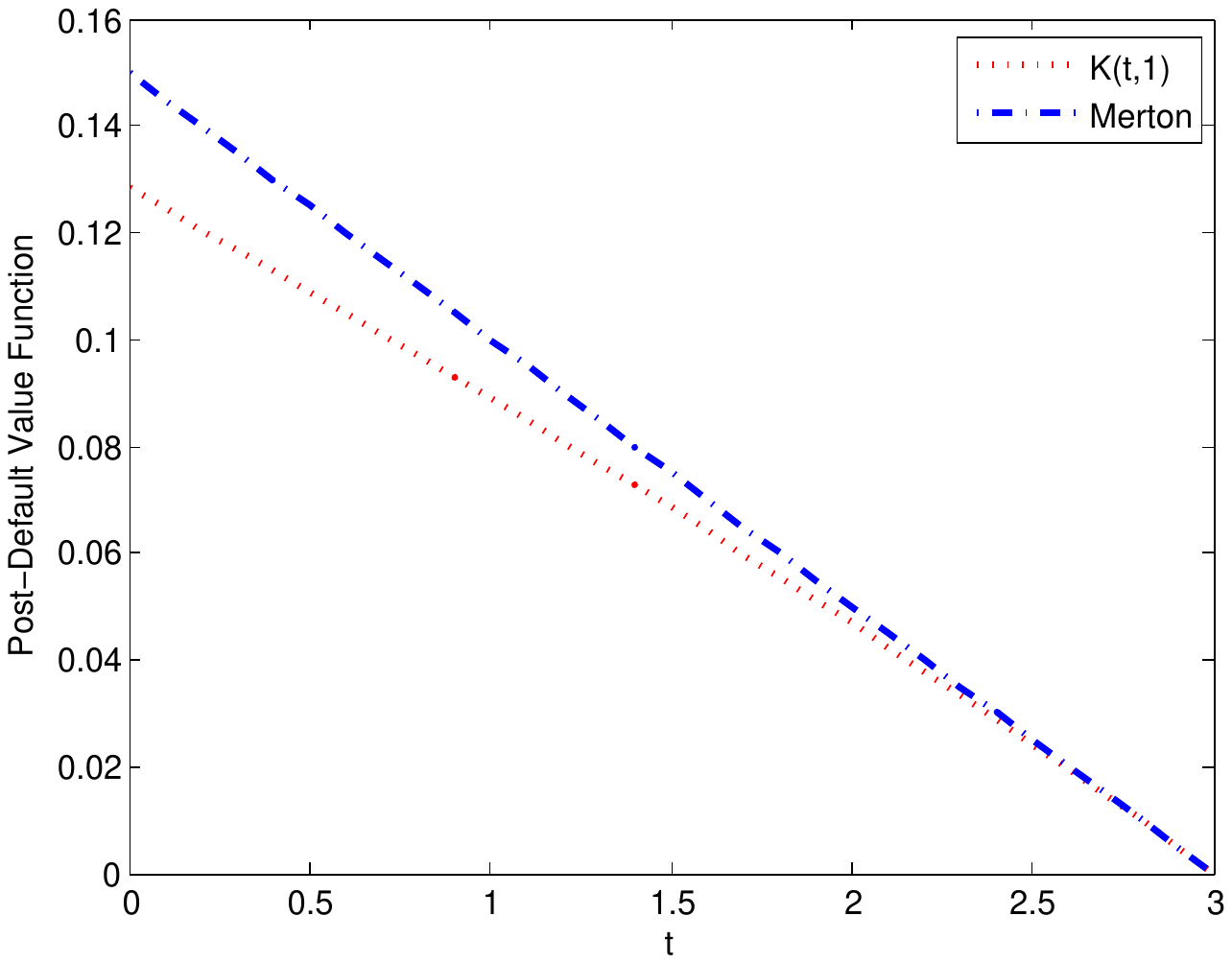}
\includegraphics[width=0.5\textwidth, viewport=100 200 500 550]{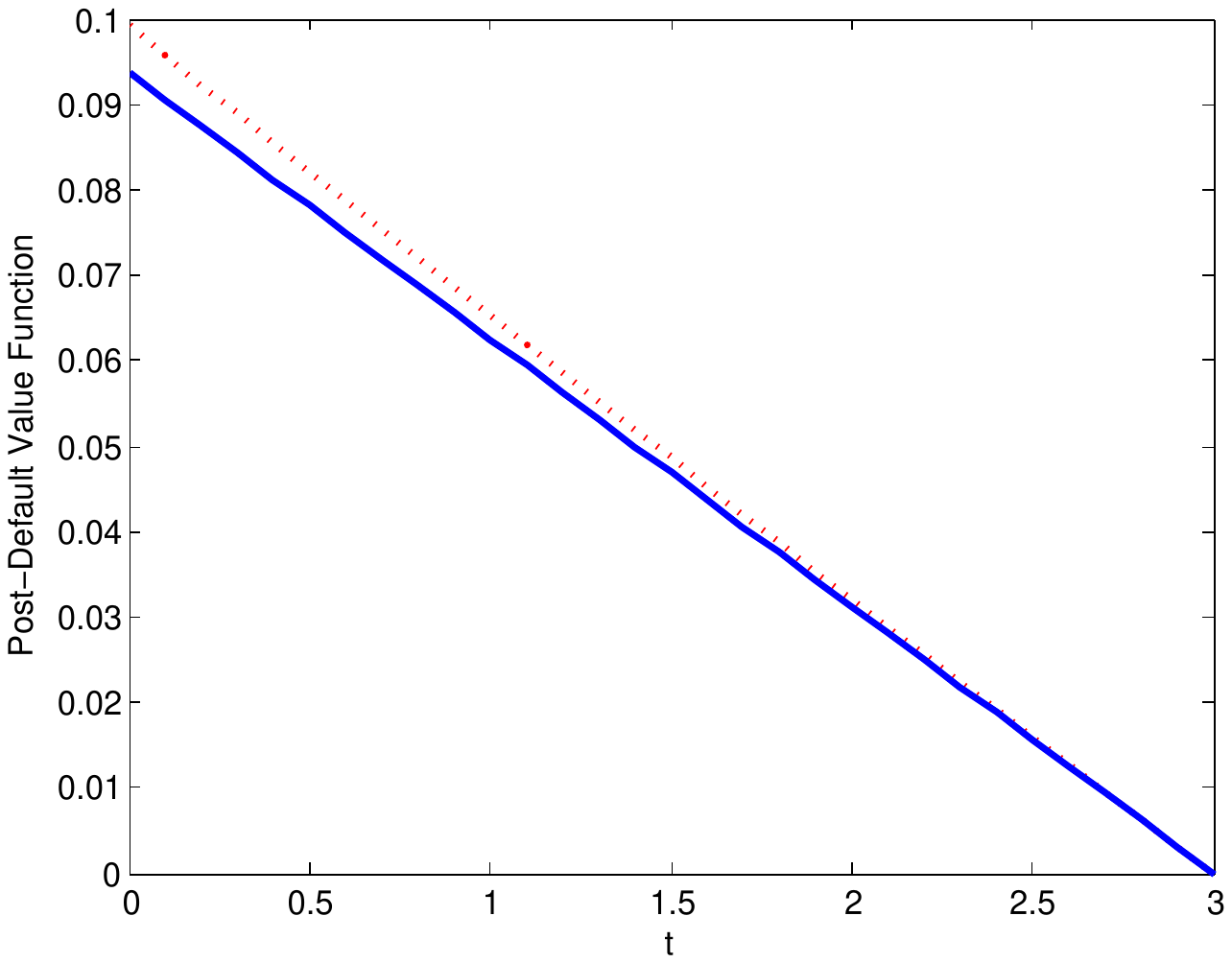}
\caption{The left panel represents the optimal post default value function $K(t,1)$, and the optimal value function in the Merton model with parameters $\mu_1$, $r_1$ and $\sigma_1$. The right panel represents the optimal post-default value function $K(t,2)$, and the optimal value function in the Merton model with parameters $\mu_2$, $r_2$ and $\sigma_2$. The transition rates are $a_{12} = 0.4$, and $a_{21} = 0.1$. The horizon $R$ is set to three years.
}
\label{fig:postdef}
\end{figure}

We finally evaluate the behavior of the time components, $J(t,1)$ and $J(t,2)$, of the optimal pre-default value function over time, assuming again a fixed investment horizon $R$. We report the results in Figure \ref{fig:predef}, where we vary $h_2$, keeping $h_1$ fixed. We notice from the plots that the pre-default value function is decreasing with time, and very sensitive to the default risk level. 

\begin{figure}
\includegraphics[width=0.5\textwidth, viewport=100 200 500 500]{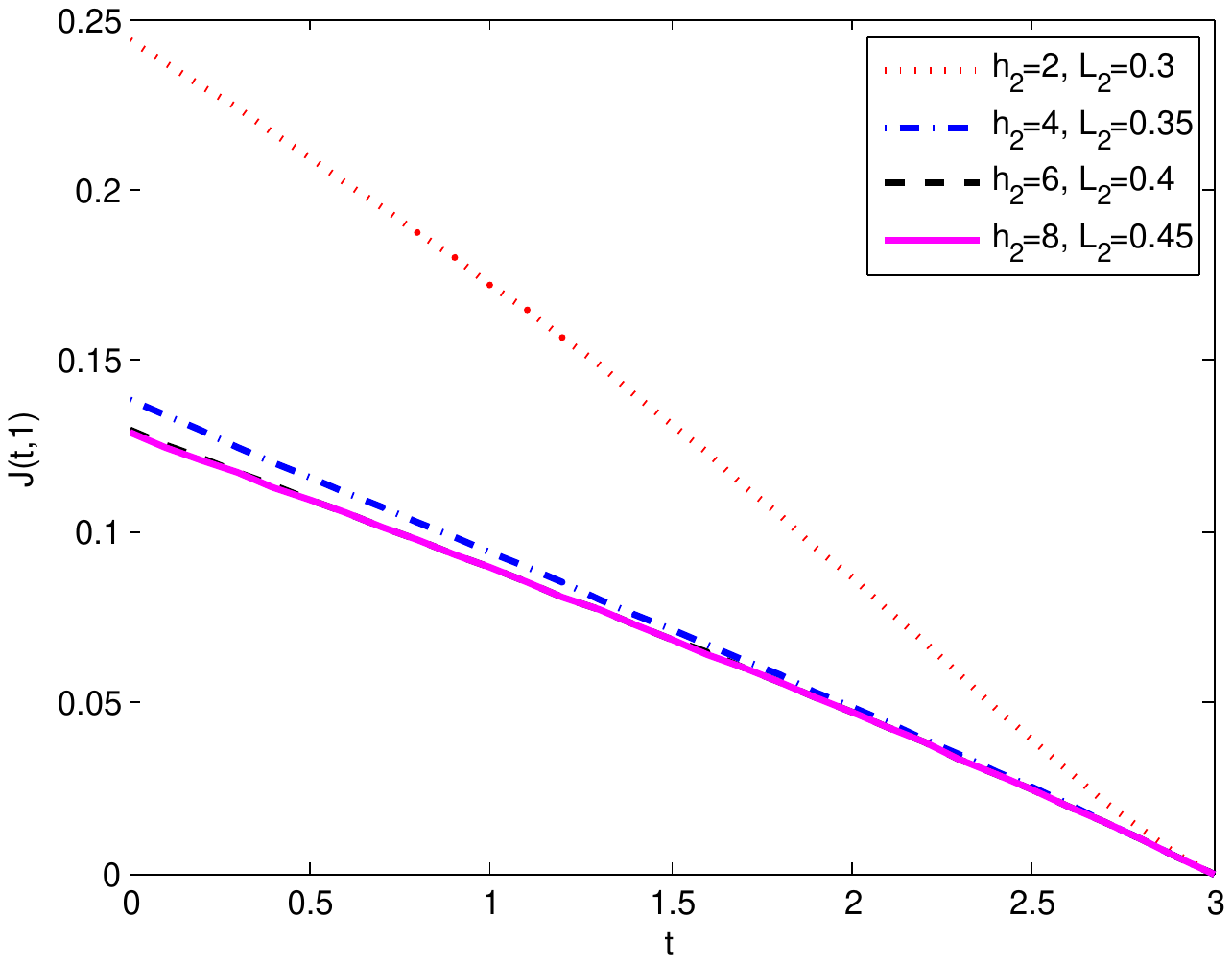}
\includegraphics[width=0.5\textwidth, viewport=100 200 500 500]{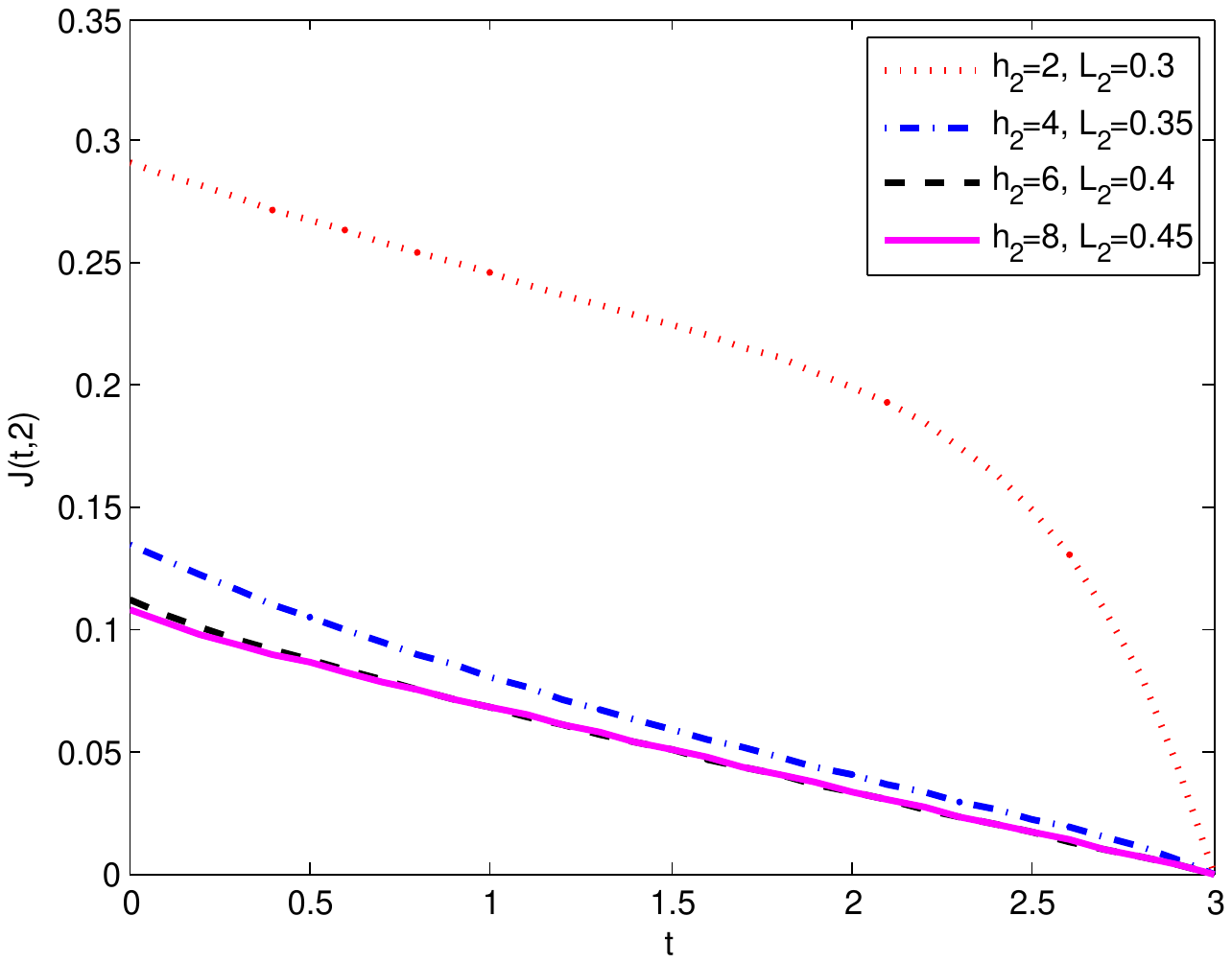}
\caption{The left panel represents the optimal pre-default value function $J(t,1)$ for different levels of default risk associated to regime 2, and parameterized by the pair $(h_2, L_2)$. The right panel represents the optimal pre-default value function $J(t,2)$ for the same levels of default risk $(h_2, L_2)$ as in the left panel. The transition rates are $a_{12} = 0.4$, and $a_{21} = 0.1$. The horizon $R$ is set to three years.}
\label{fig:predef}
\end{figure}

\section{Conclusions}\label{sec:conclusion}
We considered the continuous time portfolio optimization problem in a defaultable market, consisting of a stock, defaultable bond, and money market account. We assumed that the price dynamics of the assets are governed by a regime switching model.
We derived the dynamics of the defaultable bond under the historical measure from the risk neutral price process. We have shown that the utility maximization problem may be separated into a pre-default and a post-default optimization subproblem, and proven verification theorems for both cases under the assumption that the solutions are monotonic and concave in the wealth variable $v$. The post-default verification theorem shows that the optimal value function is the solution of a nonlinear Dirichlet problem with terminal condition. The pre-default verification theorem shows that the optimal pre-default value function and the optimal bond investment strategy
 can be obtained as the solution of a coupled system of nonlinear partial differential equations with terminal condition (satisfied by the pre-default value function) and nonlinear equations (satisfied by the bond investment strategy).
 {Each equation is associated to a different regime, and the dependence of a regime $i$ from another regime $j$ comes through the Markov transition rates and the ratio between the defaultable bond prices in regime $j$ and regime $i$. Our results imply that the pre-default optimal value function and the bond investment strategy depend on the optimal post-default value function.}

We demonstrated our framework on the concrete case of an investor with logarithmic utility, and shown that {both the optimal pre-default and post-default value function can be obtained as the solution of a linear system of first order ordinary differential equations, while the optimal bond strategy can be uniquely recovered as the solution of a decoupled system of nonlinear equations, one for each regime}. We have also performed an economic analysis on a two-regime market model with homogenous transition rates, and investigated the impact of default risk on the optimal strategy and value functions. Our analysis has shown that the optimal number of bond units sold in each regime decreases with the riskiness of the bond perceived by the market, and that the number of bond units sold is smaller for larger investment horizons.
Although we have specialized our framework to the specific case of logarithmic utility, it is flexible enough to accommodate any concave increasing utility function. Therefore, the results derived in the verification theorems can be used to derive explicit or numerical solutions for pre-default and post-default value functions, as well as optimal investment strategies, corresponding to a wide range of utilities of practical interest, such as the ones in the HARA family. 


\appendix

\section{Risk-neutral and historical bond dynamics}\label{app:dynamicsbond}
The following result states that the process {$\xi^{\Px}_t$} introduced in (\ref{MrtRprDftPrcP}) is also a {$\Qx$-martingale}. However, in order to indicate in the sequel when certain dynamics are being taken under $\Qx$ or under  $\Px$, we will introduce a new notation {$\xi^{\Qx}_t$}. We should keeping in mind through the proof below that $\xi^\Qx=\xi^\Px$.
\begin{lemma}\label{LmNdFD}
	The process
	\begin{equation}\label{MrtRprDftPrcQ}
		{\xi^{\Qx}_t} :=H(t) - \int_0^t (1-H(u^{-})) {h_u} du
	\end{equation}
	is also a $({\Gx},\Qx)-$(local) martingale.
\end{lemma}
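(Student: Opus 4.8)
The plan is to exploit the abstract Bayes rule for the change of measure induced by the density $\{\eta_t\}$ of (\ref{DfnDnsty}). Recall the standard fact (see, e.g., \cite{bielecki01}) that, since $\eta_t$ is the Radon--Nikod\'ym density of $\Qx$ with respect to $\Px$ on $\gt$, a $\Gx$-adapted process $Y$ is a $(\Gx,\Qx)$-local martingale if and only if the product $\{\eta_t Y_t\}$ is a $(\Gx,\Px)$-local martingale. Applying this with $Y_t=\xi^{\Qx}_t$ and using the pathwise identity $\xi^{\Qx}=\xi^{\Px}$ emphasized just before the statement, it suffices to prove that $\{\eta_t \xi^{\Px}_t\}$ is a $(\Gx,\Px)$-local martingale. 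By construction, $\xi^{\Px}$ is a $(\Gx,\Px)$-martingale (cf. (\ref{MrtRprDftPrcP})) and $\eta$ is a $(\Gx,\Px)$-martingale (by Proposition 11.2.3 in \cite{bielecki01}), so the problem reduces to analyzing the product of two $\Px$-martingales.

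By the integration-by-parts formula for semimartingales,
\[
	\eta_t \xi^{\Px}_t = \int_{(0,t]} \eta_{u^{-}}\, d\xi^{\Px}_u + \int_{(0,t]} \xi^{\Px}_{u^{-}}\, d\eta_u + [\eta, \xi^{\Px}]_t .
\]
The first two terms are stochastic integrals of locally bounded predictable integrands against $(\Gx,\Px)$-martingales and are therefore $(\Gx,\Px)$-local martingales. Hence the entire argument hinges on showing that the quadratic covariation $[\eta,\xi^{\Px}]$ vanishes identically.

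The key observation is that $\eta$ and $\xi^{\Px}$ have no common jumps and neither possesses a continuous martingale part, so that $[\eta,\xi^{\Px}]_t=\sum_{u\leq t}\Delta\eta_u\,\Delta\xi^{\Px}_u$. Indeed, from (\ref{DfnDnsty}) together with (\ref{CrsMrt})--(\ref{JmpTrnPrc1}), the jumps of $\eta$ occur exactly at the jump times of the chain $X$, whereas from (\ref{MrtRprDftPrcP}) the only jump of $\xi^{\Px}$ occurs at the default time $\tau$, where $\Delta\xi^{\Px}_\tau=\Delta H(\tau)=1$. It therefore suffices to verify that $\tau$ almost surely does not coincide with any jump time of $X$, and this is precisely where the canonical construction (\ref{eq:taudef}) of $\tau$ is essential: conditionally on $\F_\infty$ (and hence on the whole path of $X$), the law of $\tau$ is absolutely continuous, being determined by $\Px(\tau\leq u\mid \F_\infty)=1-e^{-\int_0^u h_s ds}$, while $X$ admits at most finitely many jumps on any bounded interval. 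Consequently $\Px(\tau = s_k)=0$ for each jump time $s_k$ of $X$, and by countability $\Px(\tau\in\{\text{jump times of }X\})=0$. I expect this no-common-jumps argument to be the main technical point; once it is established, $[\eta,\xi^{\Px}]=0$, the product $\eta\xi^{\Px}$ is a $(\Gx,\Px)$-local martingale, and the $(\Gx,\Qx)$-local martingale property of $\xi^{\Qx}=\xi^{\Px}$ follows at once from the Bayes characterization invoked above.
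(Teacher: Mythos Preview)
Your proposal is correct and follows essentially the same route as the paper: both arguments reduce the claim to showing that $\eta_t\xi^{\Px}_t$ is a $(\Gx,\Px)$-local martingale, apply integration by parts, and then verify that the bracket term $\sum_{u\leq t}\Delta\eta_u\,\Delta\xi^{\Px}_u$ vanishes by proving that $\tau$ almost surely avoids the jump times of $X$ via the canonical construction (\ref{eq:taudef}) and the independence of $\chi$ and $X$. The only cosmetic difference is that the paper expands $\Delta\eta_u$ explicitly in terms of the $\kappa_{i,j}$ and $\Delta H^{i,j}_u$ before conditioning on the path of $X$, whereas you phrase the same step through the absolute continuity of the conditional law $\Px(\tau\leq u\mid\F_\infty)=1-e^{-\int_0^u h_s\,ds}$; the substance is identical.
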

\begin{proof}
By the definition of $\Qx$ and the fact that $\eta$ is a ${\Gx}$-martingale (see the {paragraph} before (\ref{DfnDnsty})),  it suffices to prove that $\eta_{t} \xi_{t}^{\Qx}$ is a ${\Gx}$-martingale under $\Px$. From It\^o's formula and the definition of $\eta$ in (\ref{DfnDnsty}), we have the process
	\begin{align*}
		\eta_{t} {\xi^{\Qx}_t}={\xi^{\Qx}_0}+\int_{0}^{t} \eta_{s^{-}} {d\xi^{\Qx}_s} +
		\int_{0}^{t} \xi_{s^{-}}^{\Qx} d\eta_{s}
		+\sum_{s\leq{}t} \Delta \xi_{s}^{\Qx} \Delta \eta_{s}.
	\end{align*}
	From (\ref{MrtRprDftPrcP}), (\ref{DfnDnsty}), and (\ref{MrtRprDftPrcQ}), {$(\eta_{t} {\xi^{\Qx}_t})_{t}$ can be written {as}}
	\begin{align*}
		{\xi^{\Qx}_0} +\int_{0}^{t} \eta_{s^{-}} d{\xi^{\Px}_s} +
		\int_{0}^{t} \xi_{s^{-}}^{\Qx} d\eta_{s}+\sum_{0<s\leq{}t} \sum_{k,l=1}^{N} \eta_{s^{-}} \kappa_{k,l}(u) \Delta H(s)
		\Delta H_{s}^{k,l},
	\end{align*}
	where $H_{t}^{k,l}:=\sum_{0<s\leq{}t}\idc_{\{X_{s^{-}}=k\}} \idc_{\{X_{s}=l\}}$. Since the first two terms on the right-hand side of the previous equality are (local) martingales under $\Px$, it remains to show that the last term vanishes. But, given that $\Delta H_{s}\neq{}0$ at $s=\tau$, the summation in the last term above will be $0$ provided that
	\(
		{\Delta X_{\tau}=0,}
	\)
	a.s.
	In order to show this, let us recall that by definition $X$ has no fixed-jump times; i.e. $\Px(\Delta X_{t}\neq{}0)=0$ for any fixed time $t>0$. Also, using the definition of $\tau$ given in Eq.~(\ref{eq:taudef}), $\tau=\inf\{t\geq{}0: \int_{0}^{t} h(X_{s})ds\geq{}\xi\}$, where $\xi$ is an exponential random variable independent of $X$. Then, conditioning on $X$,
	\(
		{\Px(\Delta X_{\tau}\neq{}0)=\Ex\left\{\Ex\left[\left.{\bf 1}_{\{\Delta X_{\tau}\neq{}0\}}\right|X_{s},s\geq{}0\right]\right\}}.
	\)
	Denoting $0<\tau_{1}<\tau_{2}<\dots$ the transition times of the Markov chain $X$, {$\Ex\left[\left.{\bf 1}_{\{\Delta X_{\tau}\neq{}0\}}\right|X_{s},s\geq{}0\right]$ is given by}
	\begin{align*}
		\Ex\left[\left.\sum_{i=1}^{\infty}{\bf 1}_{\{\tau=\tau_{i}\}}\right|X_{s},s\geq{}0\right]= \Ex\left[\left.\sum_{i=1}^{\infty}{\bf 1}_{\{\int_{0}^{\tau_{i}} h(X_{s})ds=\xi\}}\right|X_{s},s\geq{}0\right]=0,
	\end{align*}
	where the last equality {follows} from the independence of $X$ and $\xi$, and the fact that $\xi$ is a continuous random variable.
\end{proof}

\begin{lemma}\label{Lm:DfPsi}
	 Under the assumptions of Theorem \ref{Prop:DynBndQ},
	the function
	\begin{equation}\label{OrgDfnPsi2}
		\psi_i(t) = \Ex^{\Qx} \left[\left.e^{-\int_t^{T} (r_s + 		h_s L_s) ds} \right| {X_t = e_i} \right]
	\end{equation}
	is differentiable for any $t\in(0,T)$.
\end{lemma}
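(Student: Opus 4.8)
The plan is to reduce $\psi_i$ to the solution of a linear Volterra integral equation and then bootstrap its regularity. First I would observe that, since $r_s+h_s L_s=\left<r,X_s\right>+\left<h,X_s\right>\left<L,X_s\right>$ and $X_s$ takes values in the unit vectors, the integrand is a bounded deterministic function of the state: writing $c_k:=r_k+h_k L_k$, one has $r_s+h_s L_s=c_k$ whenever $X_s=e_k$. In particular $|\psi_i(t)|\leq e^{(T-t)\max_k|c_k|}$, so $\psi=(\psi_1,\dots,\psi_N)'$ is uniformly bounded on $[0,T]$, and it is Borel measurable in $t$.

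Next I would derive a first-jump (renewal) decomposition for $\psi_i$. Conditionally on $X_t=e_i$, the chain under $\Qx$ remains in state $i$ until its first jump $S$ after $t$, where, by the time-inhomogeneous jump structure, $\Qx(S>s\mid X_t=e_i)=\exp(\int_t^s a_{i,i}^{\Qx}(u)\,du)$ and the sub-density of the event ``first jump at $s$, to state $j$'' equals $a_{i,j}^{\Qx}(s)\exp(\int_t^s a_{i,i}^{\Qx}(u)\,du)$. Splitting on whether a jump occurs in $[t,T]$ and using the Markov property after the jump (so that the post-jump expectation given $X_s=e_j$ is exactly $\psi_j(s)$), one obtains
\[
\psi_i(t)=e^{\int_t^T (a_{i,i}^{\Qx}(u)-c_i)\,du}+\int_t^T e^{\int_t^s (a_{i,i}^{\Qx}(u)-c_i)\,du}\sum_{j\neq i}a_{i,j}^{\Qx}(s)\,\psi_j(s)\,ds.
\]
Setting $B_i(t):=\int_0^t (a_{i,i}^{\Qx}(u)-c_i)\,du$, which is $C^1$ because $a_{i,i}^{\Qx}$ is continuous by hypothesis (\ref{NdCndDf1}), and multiplying through by $e^{B_i(t)}$, this becomes a clean linear Volterra system for $\tilde\psi_i:=e^{B_i}\psi_i$:
\[
\tilde\psi_i(t)=e^{B_i(T)}+\int_t^T \sum_{j\neq i}e^{B_i(s)-B_j(s)}a_{i,j}^{\Qx}(s)\,\tilde\psi_j(s)\,ds,
\]
whose kernel $e^{B_i(s)-B_j(s)}a_{i,j}^{\Qx}(s)$ is continuous on $[0,T]$.

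Finally I would bootstrap regularity. Because $\tilde\psi$ is bounded and the kernel is bounded, the map $t\mapsto\int_t^T(\cdots)\,ds$ on the right-hand side is Lipschitz, hence $\tilde\psi_i$ is continuous. But then the integrand $s\mapsto\sum_{j\neq i}e^{B_i(s)-B_j(s)}a_{i,j}^{\Qx}(s)\tilde\psi_j(s)$ is itself continuous, so by the fundamental theorem of calculus $t\mapsto\int_t^T(\cdots)\,ds$ is continuously differentiable on $(0,T)$; consequently $\tilde\psi_i\in C^1((0,T))$ and, since $B_i\in C^1$, so is $\psi_i=e^{-B_i}\tilde\psi_i$. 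This proves the claimed differentiability, and as a by-product shows that $\psi$ solves the backward system $\psi_i'=(c_i-a_{i,i}^{\Qx})\psi_i-\sum_{j\neq i}a_{i,j}^{\Qx}\psi_j$, which is what is ultimately needed in the bond dynamics.

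The step I expect to be the main obstacle is the rigorous justification of the first-jump decomposition for the time-inhomogeneous chain. I would handle it by appealing to the strong Markov property together with the standard description of the first sojourn time of a finite-state Markov process with a continuous, bounded generator (as in \cite{bielecki01}), taking care that the hypotheses (\ref{NdCndDf1}) guarantee the intensities are continuous and bounded, so that the survival function and jump sub-densities used above are valid and the resulting Volterra kernel is genuinely continuous.
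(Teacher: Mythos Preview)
Your argument is correct and takes a genuinely different route from the paper. The paper performs an auxiliary change of measure $\widetilde{\Qx}$ under which $X$ becomes a \emph{homogeneous} chain with Poisson jump times; it then expands $\psi_i(t)$ as a series by conditioning on the number $M_{T-t}$ of jumps, represents the jump times via uniform order statistics, and verifies that each term $\Phi_m$ of the series is $C^1$ with bounds of the form $BmA^m$ so that the series can be differentiated term by term. Your approach instead conditions only on the \emph{first} jump, which immediately yields a linear Volterra system with a continuous kernel, and then bootstraps: bounded $\Rightarrow$ continuous $\Rightarrow$ $C^1$ via the fundamental theorem of calculus. This is more elementary (no change of measure, no infinite series, no uniform-in-$m$ estimates) and, as you note, it delivers for free the backward linear ODE $\psi_i'(t)=(r_i+h_iL_i)\psi_i(t)-\sum_{j}a_{i,j}^{\Qx}(t)\psi_j(t)$, which is exactly the relation implicitly used in the proof of Theorem~\ref{Prop:DynBndQ} when the drift of $\phi(t)$ is set to zero. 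The paper's route, while heavier, would extend more directly to higher-order or quantitative derivative bounds via the explicit $A^m$ control. Your identification of the first-jump decomposition as the only delicate step is accurate; under~(\ref{NdCndDf1}) the rates are bounded and continuous on $(0,T)$, so the standard construction of a finite-state inhomogeneous chain via holding times and the embedded jump chain applies and the strong Markov property at the first jump is available.
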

\begin{proof}
	Define functions $\tilde{k}_{i,j}:[0,\infty)\to(-1,\infty)$ such that
	\[
		\frac{1}{N-1}=a_{i,j}^{\Qx}(t) (1+\tilde\kappa_{i,j}(t)),\quad\text{ for } i\neq{}j, \quad\text{and}\quad
		\tilde{\kappa}_{i,i}=0.
	\]
	We also let $\tilde{a}_{i,j}:= 1/(N-1)$ for $i\neq{}j$ and $\tilde{a}_{i,i}= -1$, so that $\widetilde{A}:=\left[\tilde{a}_{i,j}\right]_{i,j=1,\dots,N}$ is a valid generator of a homogeneous Markov process with transition times determined by a homogeneous Poisson process and an embedded Markov chain $\{\widetilde{X}_{i}\}_{i\geq{}1}$ with transition probabilities $p_{i,j}:=1/(N-1)$ for $i\neq{}j$.  Now, let us define a probability measure $\widetilde{\Qx}$ with Radon-Nikod\'yn density $(\tilde\eta_{t})_{t}$ given by
\begin{equation}\label{DfnDnstyHom}
	\tilde\eta_{t}=1+\int_{(0,t]} \sum_{i,j=1}^{N} \tilde\eta_{u^{-}} \tilde\kappa_{i,j}(u) d \widetilde{M}_{u}^{i,j},
\end{equation}
where
\(
	\widetilde{M}_{t}^{i,j}:= H_{t}^{i,j}-\int_{0}^{t} a_{i,j}^{\Qx}(u)H^{i}_{u}du,
\)
and we used notation (\ref{JmpTrnPrc1}). By virtue of Proposition 11.2.3 in \cite{bielecki01},  $\{X_{t}\}_{t\geq{}0}$ is a continuous Markov process with generator $\widetilde{A}$ under $\widetilde{Q}$.
Next, note that $\psi(t):=(\psi_{1}(t),\dots,\psi_{N}(t))$ is such that
	\begin{equation}
		v(t,T) := \Ex^{\mathbb{Q}}\left[e^{-\int_t^T (r_s + h_s L_s) ds} | X_t \right]=\left<\psi(t),X_{t}\right>.
	\end{equation}
But also, changing into the probability measure $\widetilde{\Qx}$, we can write
\[
	v(t,T) = \Ex^{{\Qx}}\left[\left. e^{-\int_t^T (r_s + h_s L_s) ds} \right| \mathcal{F}_t \right]
	=\Ex^{\widetilde{\Qx}}\left[\left. e^{-\int_t^T (r_s + h_s L_s) ds}  \frac{\tilde{\eta}_{_{t}}}{\tilde\eta_{T}}\right| \mathcal{F}_t \right],
\]	
and, hence, we have the following representation for $\psi_{i}$:
\[
	\psi_{i}(t)  =\Ex^{\widetilde{\Qx}}\left[\left. e^{-\int_t^T (r_s + h_s L_s) ds}  \frac{\tilde{\eta}_{_{t}}}{\tilde\eta_{T}}\right| X_{t}=e_{i}\right].
\]
Recall that the solution of (\ref{DfnDnstyHom}) can be written as
\[
	{\tilde{\eta}_{t}:=e^{-\int_{0}^{t} \sum_{i,j}a_{i,j}^{\Qx}(u)\tilde{\kappa}_{i,j}(u)H^{i}_{u}du
	+\sum_{0<u\leq{}t} \log\left(1+\sum_{i,j}\tilde\kappa_{i,j}(u)\Delta H^{i,j}_{u}\right)}}.
\]
Let $\widetilde{K}(t)=[\widetilde{K}_{i,j}(t)]_{i,j}$ and $\tilde{r}(t):= (\tilde{r}_{1}(t),\dots,\tilde{r}_{N}(t))'$ be defined by
\[
	\widetilde{K}_{i,j}(t):=\log\left(1+\tilde\kappa_{i,j}(t)\right),\quad
	 \tilde{r}_{i}(t):=r_{i}+h_{i}L_{i} - \sum_{j=1}^{N}a_{i,j}^{\Qx}(t)\tilde\kappa_{i,j}(t).
\]
Then, we have
\[
	\psi_{i}(t)=\Ex^{\widetilde{\Qx}}\left[\left. \exp\left\{ -\int_t^T \tilde{r}(s)' X_{s} ds
	-\sum_{s\in(t,T]:\Delta X_{s}\neq{}0}  X_{s^{-}}'\widetilde{K} (s) X_{s}\right\}\right| X_{t}=e_{i}\right],
\]
where $X'$ denotes the transpose of $X$. Next, using that $(X_{t})_{t}$ is a homogeneous Markov process under $\widetilde{\Qx}$,
\[
	\psi_{i}(t)=\Ex^{\widetilde{\Qx}}_{i}\left[ \exp\left\{ -\int_0^{T-t} \tilde{r}(t+s) X_{s} ds
	-\sum_{s\in(0,T-t]:\Delta X_{s}\neq{}0}  X'_{s^{-}}\widetilde{K} (t+s) X_{s}\right\}\right],
\]
where we used the notation $\Ex^{\widetilde{\Qx}}_{i}\left(\cdot\right):=\Ex^{\widetilde{\Qx}}\left(\left.\cdot\right| X_{0}=e_{i}\right)$.
Furthermore, in terms of the transition times $\zeta_{1}<\zeta_{2}<\dots$ of $X$,  the embedded Markov chain $\{\widetilde{X}_{i}\}$ of $X$, and the number $M_{t}$ of transitions by time $t$  of $X$, $\psi_{i}(t)$ can be written as
\[
	\psi_{i}(t)=\left.\Ex^{\widetilde{\Qx}}_{i}\left[ e^{
	-\sum_{n=0}^{M_{\zeta}} \int_{\zeta\wedge\zeta_{n}}^{\zeta\wedge\zeta_{n+1}} \tilde{r}(t+s)' \widetilde{X}_{n} ds
	-\sum_{n=1}^{M_{\zeta}}  \widetilde{X}'_{n-1}\widetilde{K} (t+\zeta_{n}) \widetilde{X}_{n}}\right]\right|_{\zeta=T-t},
\]
where $\zeta_{0}=0$.
Using that $(M_{t})_{t}$ is a Poisson process under $\widetilde{\Qx}$ and conditioning on $M_{\zeta}$, we have
\begin{align*}
	{\psi_{i}(t)=\sum_{m=0}^{\infty} e^{-\zeta}\frac{\zeta^{m}}{m!}
	\Ex^{\widetilde{\Qx}}_{i}\left[ e^{
	-\sum_{n=0}^{m} \int_{\zeta U_{(n)}}^{\zeta U_{(n+1)}}\tilde{r}(t+s)'\widetilde{X}_{n} ds
	-\sum_{n=1}^{m} \widetilde{X}'_{n-1}\widetilde{K} (t+\zeta U_{(n)}) \widetilde{X}_{n}}\right]},
\end{align*}
where $\zeta=T-t$, $U_{(1)}<U_{(2)}<\dots<U_{(m)}$ are the ordered statistics of $m$ i.i.d. uniform $[0,1]$ variables independent of $\widetilde{X}$, $U_{(0)}=0$, and $U_{(m+1)}=1$. From the previous expression, we see that it suffices to show that
\begin{equation}\label{AuxDfn1}
	\Phi_{m}(\zeta):=\Ex^{\widetilde{\Qx}}_{i}\left[ e^{
	-\sum_{n=0}^{m} \int_{\zeta U_{(n)}}^{\zeta U_{(n+1)}}\tilde{r}(T-\zeta+s)' \widetilde{X}_{n} ds
	-\sum_{n=1}^{m} \widetilde{X}'_{n-1}\widetilde{K} (T-\zeta+\zeta U_{(n)}) \widetilde{X}_{n}}\right],
\end{equation}
is continuously differentiable in $\zeta\in(0,T)$ for each $m\geq{}0$, and that there exists a sequence $\{K_{m}\}_{m\geq{}0}$ such that
\[
	{\rm (i)}\;\sup_{0<\zeta< T}|\Phi_{m}(\zeta)|\leq{} K_{m},\quad
	{\rm (ii)}\;\sup_{0<\zeta< T}|\Phi_{m}(\zeta)|\leq{} K_{m},\quad
	{\rm (iii)}\;\sum_{m=0}^{\infty} e^{-\zeta}\frac{\zeta^{m}}{m!} K_{m}<\infty.
\]
We now show that (i)-(iii) are satisfied provided that
\[
	\sup_{s\in[0,T]} |\tilde{r}_{i}(s)|<\infty,\quad
	\sup_{s\in[0,T]} |\tilde{K}_{i,j}(s)|<\infty,\quad\text{ and }\quad
	\sup_{s\in[0,T]} |\tilde{K}'_{i,j}(s)|<\infty,
\]
are satisfied. The latter conditions directly follow from (\ref{NdCndDf1}).
Denoting $\Gamma_{m}(\zeta)$ the random function inside the expectation $\Ex^{\widetilde{\Qx}}_{i}$ in (\ref{AuxDfn1}), one can check that
\begin{equation}\label{Dmmy1}
	\sup_{0\leq\zeta\leq{}T}|\Gamma_{m}(\zeta)|\leq{}e^{m T\max_{i}\sup_{0\leq{}\zeta\leq T} |\tilde{r}_{i}(\zeta)'|
	+m \max_{i,j}\sup_{0\leq{}\zeta\leq T} |\widetilde{K}_{i,j}(\zeta)|}:=A^{m},
\end{equation}
for a constant $A<\infty$.
Also, $\Gamma_{m}(\zeta)$ is continuously differentiable and
\begin{align*}
	 \Gamma_{m}'(\zeta)&=\Gamma_{m}(\zeta)\left\{\sum_{n=0}^{m}\left[\bar{U}_{(n+1)} \tilde{r}(T-\zeta\bar{U}_{(n+1)})'\widetilde{X}_{n}
	 -\bar{U}_{(n)} \tilde{r}(T-\zeta\bar{U}_{(n)})' \widetilde{X}_{n}\right]\right.\\
	&\qquad\quad\qquad\left. - \sum_{n=1}^{m}\widetilde{X}'_{n-1}\widetilde{K}' (T-\zeta\bar{U}_{(n)}) \widetilde{X}_{n}\bar{U}_{(n)} \right\},
\end{align*}
where $\bar{U}_{(n)}:=1-U_{(n)}$. In particular, there exists a constant $B<\infty$ such that
\begin{equation}\label{Dmmy2}
	\sup_{0\leq\zeta\leq{}T}|\Gamma'_{m}(\zeta)|\leq{}  B m A^{m}.
\end{equation}
By the formal definition of the derivative $\Phi_{m}'(\zeta)$ and the dominated convergence theorem,  one can check that (\ref{Dmmy1}-\ref{Dmmy2}) will suffice for (i)-(iii).
\end{proof}

We are ready to give the proofs of the bond price dynamics:
\begin{proof}[Proof of Theorem \ref{Prop:DynBndQ}]
We first write the pre-default dynamics of the bond price under the risk neutral measure (i.e. on the event $\tau>t$). From Eq.~(\ref{eq:bondpricern}), we have
\begin{eqnarray*}
p(t,T) = \idc_{\tau > t} \Ex^{\mathbb{Q}}\left[e^{-\int_t^T (r_s + h_s L_s) ds} | \mathcal{F}_t \right]
= \idc_{\tau > t} \Ex^{\mathbb{Q}}\left[e^{-\int_t^T (r_s + h_s L_s) ds} | X_t \right].
\end{eqnarray*}
Define $v(t,T) = \Ex^{\mathbb{Q}}\left[e^{-\int_t^T (r_s + h_s L_s) ds} | X_t \right]$ and $\tilde{H}(t) = \idc_{\tau > t}$ so that
\(
	p(t,T) = \tilde{H}_t v(t,T).
\)
In terms of (\ref{OrgDfnPsi}), note that $v(t,T)= \left<\psi(t), X_t\right>$, where {$\psi(t)$ is given as in (\ref{OrgDfnPsi})}. In particular,
\(
	\Delta v(t,T)=\left<\psi(t), \Delta X_t\right>,
\)
since $\psi(t)$ is continuous in light of Lemma \ref{Lm:DfPsi}.
Next, let us introduce the {$({\Fx},\Qx)$}-martingale
\(
	\phi(t) := \Ex^{\mathbb{Q}}\left[e^{-\int_0^T (r_s + h_s L_s) ds} |\mathcal{F}_t \right],
\)
and the process $\tilde{b}(t) := \exp\{\int_0^t (r_s + {h}_s L_s) ds\}$. Then, $v(t,T)=\tilde{b}(t) \phi(t)$ has dynamics
\(
dv(t,T) = (r_t + {h}_t L_t) v(t,T) dt + \tilde{b}(t) d\phi(t).
\)
This leads to
\begin{eqnarray}\label{NdEq1}
dp(t,T) &=& \tilde{H}(t^{-}) dv(t,T) + {v(t^{-},T)} d\tilde{H}(t) + \Delta v(t,T) \Delta \tilde{H}(t).
\end{eqnarray}
By virtue of the identity $\Delta v(t,T)=\left<\psi(t), \Delta X_t\right>$ and similar arguments to those in the proof Lemma \ref{LmNdFD}, we have that
\[
	{\sum_{0<s\leq{}t}\Delta v(s,T)\Delta \tilde{H}(s)=\sum_{0<s\leq{}t}\left<\psi_{s},\Delta X_{s}\Delta \tilde{H}(s)\right>=\left<\psi_{\tau},\Delta X_{\tau}\right>=0}.
\]
Thus, (\ref{NdEq1}) simplifies as follows:
\begin{align}
\nonumber
dp(t,T)&={ \tilde{H}(t^{-}) \left[ (r_t + h_t L_t) v(t,T) dt + \tilde{b}(t) d\phi(t) \right] + v(t^{-},T) d\tilde{H}(t)}\\
&={(r_t + h_t L_t) p(t^{-},T) dt + \tilde{H}(t^{-}) \tilde{b}(t) d\phi(t) - v(t^{-},T) dH(t).}
\end{align}
Let us now try to find the dynamics of $\phi(t)$. Since $\phi_t = e^{-\int_0^t (r_s + h_s L_s) ds} \left<\psi(t), X_t\right>$, where $\psi(t) $ is given as in (\ref{OrgDfnPsi}), It\^o's formula leads to
\begin{align*}
d\phi(t) &= -(r_t + h_t L_t) e^{-\int_0^t (r_u + h_u L_u) du} \left<{\psi(t)}, X_t\right> dt  \\
&\quad + e^{-\int_0^t (r_u + h_u L_u) du} \left(\left< \frac{d\psi(t)}{dt}, X_t\right> + \left<\psi(t), (A^{\Qx})'_{t} X_t\right>\right) dt\\
&\quad + e^{-\int_0^t (r_u + h_u L_u) du} {\left<{\psi(t)}, dM^{\Qx}(t)\right>},
\end{align*}
where we had used the differentiability proved in Lemma \ref{Lm:DfPsi} and the semi-martingale representation formula of our Markov chain $X_t$ given in Eq.~(\ref{eq:MCsemimb}).
As $\phi(t)$ is a {$({\Fx},\Qx)$-(local)} martingale, its drift term is zero and, therefore, we obtain the dynamics
\begin{equation}
d\phi(t) = e^{-\int_0^t (r_u + h_u L_u) du} {\left<{\psi(t)}, dM^{\Qx}(t)\right>}
\end{equation}
and all together, we get
\begin{equation}
dp(t,T) = (r_t + h_t L_t) p(t^{-},T) dt + \tilde{H}(t^{-}) {\left<{\psi(t)}, dM^{\Qx}(t)\right> - v(t^{-},T)} dH(t).
\end{equation}
By the Doob-Meyer decomposition of Lemma \ref{LmNdFD}, we have $dH(t) = d\xi_t^{\Qx} + (1-H(t^{-})) h_t dt$. The last step follows from the fact that on the event $\tau>t$, we have $v(t^{-},T)=p(t^{-},T)$, by definition of $p(t,T)$, and also $p(t^{-},T) =\left<\psi(t), X_{t^{-}}\right>$. Therefore, we can write the pre-default risk-neutral dynamics of the bond as
\begin{eqnarray}\label{DRNM2}
\frac{dp(t,T)}{p(t^{-},T)} &=& (r_t + h_t (L_t-1)) dt + \frac{{\left<{\psi(t)}, dM^{\Qx}(t)\right>}}{{\left<\psi(t), X_{t^{-}}\right>}} - d\xi_t^{\Qx}
\end{eqnarray}
\end{proof}

\begin{proof}[Proof of Proposition \ref{Prop:DynBndP}]
The bond price dynamics under the real-world measure follows directly from plugging Eq. (\ref{RMM0}) into (\ref{DRNM2}) to get:
\begin{align*}
\frac{dp(t,T)}{{p(t^{-},T)}} &= \left(r_t + h_t (L_t-1)+
\frac{\left<\psi(t),(A'(t)-(A^{\mathbb{Q}})'(t))X_{t}\right>}{
\left<\psi(t),X_{t}\right>} \right) dt\\
& \quad + \frac{\left<\psi(t), dM^{\Px}(t)\right>}{{\left<\psi(t), X_{t^{-}}\right>}}-d\xi^{\Px}_{t},
\end{align*}
where we also used that $\xi^{\Px}=\xi^{\Qx}$ (see Lemma \ref{LmNdFD}).
Finally,  we get the dynamics (\ref{DRWP0}) since
\begin{align}\label{AuxQnt3}
	 \frac{\left<\psi(t),(A'(t)-(A^{\mathbb{Q}})'(t))X_{t}\right>}{\left<\psi(t),X_{t}\right>}&=
	 \frac{\left<(A(t)-A^{\mathbb{Q}}(t))\psi(t),X_{t}\right>}{\left<\psi(t),X_{t}\right>}.
\end{align}
When $X_{t}=e_{i}$, the previous quantity equals
\[
	 D_{i}(t):=\sum_{j=1}^{n}(a_{i,j}(t)-a_{i,j}^{\Qx}(t))\frac{\psi_{j}(t)}{\psi_{i}(t)},
\]
and (\ref{AuxQnt3}) can be written as $\left<(D_{1}(t),\dots,D_{N}(t))',X_{t}\right>$.
\end{proof}

\section{{Derivation of the generator of $(t,V_{t},C_{t},H_{t})$}}\label{Sect:DrvGnr}
We start by changing our notation to be more consistent with the framework in \cite{bielecki01}. To this end, let
\begin{equation}\label{DfnChnPrc}
	{C_{t}:=\sum_{i=1}^{N} i {{\bf 1}_{\{X_{t}=e_{i}\}}}}.
\end{equation}
Note that $(C_{t})_{t}$ is a Markov process with values in $\{1,\dots, N\}$ and infinitesimal generator $A(t)=[a_{i,j}(t)]_{i,j=1,\dots,N}$. In particular, for any function ${g}:\{0,\dots,N\}\to\mathbb{R}$,
\[
	M^{{g}}(t):={g}(C_{t})-{g(C_{0})}-\int_{0}^{t} (A{g})(C_{u},u) du,
\]
is a martingale under $\Px$, where we have used the notation
\[
	{Ag}(i,t):= [{{\bf g}}' A'(t)]_{i} =\sum_{j=1}^{N} a_{i,j}(t) {g}(j), \quad \text{with}\quad {{\bf g}=(g(1),\dots,g(N))'};
\]
c.f. Proposition 11.2.2 in \cite{bielecki01}. Note that this result follows directly from the semimartingale decomposition (\ref{eq:MCsemim}) by multiplying {(from the left)} both sides  {there} by ${{\bf g}}'$. In particular, also note that
\begin{equation}\label{RBTN}
	M^{{g}}(t)={\bf {g}}' M^{\Px}(t), \quad \text{ and }\quad
	{M^{\Px}_{j}(t)=H_{t}^{j}-\int_{0}^{t} a_{_{C_{u},j}}(u) du},
\end{equation}
where we used notation (\ref{JmpTrnPrc1}).
Let us assume that $V$ admits the following Markov-modulated dynamics:
\begin{align}\label{GenDyn}
 dV_t= \alpha_{_{C_{t}}} dt+ \vartheta_{_{C_{t}}} d W_{t}
 +\sum_{j=1}^{N}\beta_{_{C_{t^{-}},j}} dM^{\Px}_{j}(t)
- \gamma_{_{C_{t^{-}}}} d\xi^{\Px}_{t},
\end{align}
where $\alpha_{i}(\cdot,\cdot,z),\vartheta_{i}(\cdot,\cdot,z),\beta_{i,j}(\cdot,\cdot,z),\gamma_{i}(\cdot,\cdot,z)$ are deterministic smooth functions in $[0,\infty)\times \mathbb{R} $ for any $i,j\in\{1,\dots,N\}$ and $z\in\{0,1\}$, and all the coefficients in  (\ref{GenDyn}) are evaluated at $(t,V_{t^{-}},H(t^{-}))$.  
 In terms of the processes (\ref{CrsMrt})-(\ref{JmpTrnPrc1}) and using (\ref{RBTN}), we first note that
\begin{align}
	&\sum_{j=1}^{N}\beta_{_{C_{t^{-}}},j} dM^{\Px}_{j}(t)=\sum_{i,j=1}^{N}
	\beta_{i,j}(t,V_{t^{-}},H(t^{-})) {\bf 1}_{\{C_{t^{-}}=i\}}  dM^{\Px}_{j}(t) \label{DcmpCngNotn}
	\\
	&\quad =\sum_{i,j=1}^{N}
	\beta_{i,j}(t,V_{t^{-}},H(t^{-})) {\bf 1}_{\{C_{t^{-}}=i\}} d H^{j}_{t}-\sum_{i,j=1}^{N}\beta_{i,j}(t,V_{t^{-}},H(t^{-})) H_{t}^{i}
	{a_{_{{C_{t}},j}}}(t) dt \nonumber
	\\
	&\quad =\sum_{i=1}^{N}\sum_{j\neq{}i}(\beta_{i,j}-\beta_{i,i})(t,V_{t^{-}},H(t^{-}))d H_{t}^{i,j}-\sum_{j=1}^{N}\beta_{_{C_{t},j}}(t,V_{t^{-}},H(t^{-}))
	{a_{_{C_{t},j}}}(t) dt.\nonumber
	\end{align}
and, in particular,
\begin{equation}\label{JmpWlthPrc}
	\Delta V_{t}=\sum_{i=1}^{N}\sum_{j\neq{}i}{\beta^{0}_{i,j}}(t,V_{t^{-}},H(t^{-}))\Delta H_{t}^{i,j} - \gamma_{C_{t^{-}}} (t,V_{t^{-}},H(t^{-}))\Delta H(t),
\end{equation}
where
\(
	{{\beta^{0}_{i,j}= \beta_{i,j}-\beta_{i,i}}}.
\)
Next, let {$f(\cdot,\cdot,i,z)\in C^{1,2}([0,\infty)\times \mathbb{R})$, for each} $i=1,\dots,N$ and $z\in\{0,1\}$. We want to find the semimartingale decomposition of $f(t,V_{t},C_{t},H(t))$. Applying the It\^o's formula (seeing $C_{t}$ as simply a bounded variation process), we have that
\begin{align}\label{SDl1}
	f(t,V_{t},C_{t},H(t))&=f(0,V_{0},C_{0},H(0))+ \int_{0}^{t}  f_{t}(u,V_{u},C_{u},H(u)) du\\
	&\quad+\int_{0}^{t}   \left\{\alpha_{_{C_{u}}}-{\sum_{j=1}^{N}\beta_{_{{C_{u}},j}}} a_{_{{C_{u}},j}}{+(1-H(u))  h_{_{C_{u}}}\gamma_{_{C_{u}}}}\right\} f_{v}du \nonumber\\
	&\quad+\int_{0}^{t}  \vartheta_{_{C_{u}}}  f_{v}dW_{u} + \frac{1}{2} \int_{0}^{t} f_{vv} \vartheta_{_{C_{u}}}^{2}du \nonumber\\
	&\quad+\sum_{0<u\leq{}t} \left\{f(u,V_{u},C_{u}, H(u))-f(u,V_{u^{-}},C_{u^{-}},H(u^{-}))\right\}.\nonumber
\end{align}
Since $\tau$ is not a transition time of $C$ a.s., we can write the last term in {the} above equation as follows:
\begin{align*}
	J_{t}&:=\sum_{0<u<{}t\wedge \tau} \left\{f(u,V_{u},C_{u}, 0)-f(u,V_{u^{-}},C_{u^{-}},0)\right\}\\
	&\quad+ \left\{f(\tau,V_{\tau^{-}}-\gamma_{_{C_{\tau^{-}}}}(\tau,V_{\tau^{-}},0),C_{\tau^{-}}, 1)-f(u,V_{\tau^{-}},C_{\tau^{-}},0)\right\} H(t)\\
	&\quad+\sum_{t\wedge \tau <u\leq{}t} \left\{ f(u,V_{u},C_{u}, 1)-f(u,V_{u-},C_{u^{-}},1)\right\}\\
	 &=\sum_{i=1}^{N}\sum_{j\neq{}i}\int_{0}^{t}\left[f(u,V_{u^{-}}+{\beta^{0}_{i,j}},j,H(u^{-}))-f(u,V_{u-},i,H(u^{-}))\right]
	d H_{u}^{i,j}\\
	&\quad+ \int_{0}^{t}   \left\{f(u,V_{u^{-}}-\gamma_{_{C_{u^{-}}}}(u,V_{u^{-}},0),C_{u^{-}}, 1)-f(u,V_{u^{-}},C_{u^{-}},0)\right\}
	d {H(u)}.
\end{align*}
Next, using the local martingales (\ref{MrtRprDftPrcP}) and (\ref{CrsMrt}), we have
\begin{align*}	
	J_{t}&= \sum_{i=1}^{N}\sum_{j\neq{}i}\int_{0}^{t}\left[f(u,V_{u^{-}}+{\beta_{i,j}^{0}},j,H(u^{-}))-f(u,V_{u-},i,H(u^{-}))\right]
	d M_{u}^{i,j}\\
	&\quad+ \int_{0}^{t}   \left\{f(u,V_{u^{-}}-\gamma_{_{C_{u^{-}}}}(u,V_{u^{-}},0),C_{u^{-}}, 1)-f(u,V_{u^{-}},C_{u^{-}},0)\right\}
	d \xi^{\Px}_{u}\\
	&\quad + \int_{0}^{t}
	 \sum_{j\neq{}C_{u}}{a_{_{C_{u},j}}}(u)\left[f(u,V_{u}+{\beta^{0}_{_{C_{u},j}}},j,H(u))-f(u,V_{u-},C_{u},H(u))\right]
	d u\\
	&\quad + \int_{0}^{t}   \left\{f(u,V_{u}-\gamma_{_{C_{u}}}(u,V_{u},0),C_{u^{-}}, 1)-f(u,V_{u},C_{u},0)\right\}
	(1-H(u)) {h_{_{C_{u}}}}du,
\end{align*}
where we had also used that $V_{u}=V_{u^{-}}$, $H(u)=H(u^{-})$, and $C_{u}=C_{u^{-}}$ a.e. and, hence, the integrands in the last two integrals with respect to $du$ can be evaluated at $(V_{u},C_{u},H(u)$ instead of $(V_{u^{-}},C_{u^{-}},H(u^{-})$.
All together, we have the semimartingale decomposition
\begin{align}\label{KSMD}
	{f(t,V_{t},C_{t},H(t))}&={f(0,V_{0},C_{0},H(0))}+ \int_{0}^{t} \mathcal{L}f(u,V_{u},C_{u},{H(u)}) du+
	{\mathcal{M}_{t}},
\end{align}
where {$(\mathcal{M}_{t})_{t}$ is a local martingale} and $ \mathcal{L} f(t,v,i,z) $ is the so-called generator of $(t,V_{t},C_{t},H(t))$ defined by
\begin{align}\nonumber
& \frac{\partial f}{\partial t} +
\frac{\partial f}{\partial v}\left\{\alpha_{i}(t,v,z)-{\sum_{j=1}^{N}\beta_{i,j}(t,v,z) {a_{i,j}}(t)}
{+(1-z) h_{i}\gamma_{i}}\right\}\\
&\quad+\frac{\vartheta^{2}_{i}(t,v,z)}{2}
 \frac{\partial^{2} f}{\partial v^2} \nonumber \\
 &\quad+ \sum_{j \neq i} {a_{i,j}}(t)
 \left( f(t,v+{\beta^{0}_{i,j}}(t,v,z),j,z) - f(t,v,i,z) \right)\nonumber  \\
 &\quad+
  {\left\{ f(t,v -\gamma_{i}(t,v,0),i,1) - f(t,v,i,0)\right\} (1-z){h_{i}}},
  \label{eq:generator}
\end{align}
for each $i=1,\dots,N$. The local martingale component in (\ref{KSMD}) takes the form:
\begin{align}
\mathcal{M}_{t} &:=  \sum_{i=1}^{N}\bigg\{ \int_{0}^{t}\sum_{j\neq{}i} \left[f(u,V_{u^{-}}+{\beta^{0}_{i,j}},j,H(u^{-})) -f(u,V_{u^{-}},i,H(u^{-}))\right] dM_{u}^{i,j} \nonumber \\
    &\quad\quad \quad+  \int_{0}^{t}\left\{f\left(u,V_{u^{-}}-\gamma_{i}(u,V_{u^{-}},0),i, 1\right)-f(u,V_{u^{-}},i,0)\right\}
    {\bf 1}_{\{C_{u^{-}}=i\}}
	d\xi^{\Px}_{u} \nonumber \\
    &\quad\quad\quad +  \int_{0}^{t}
    \vartheta_{i}\frac{\partial f}{\partial v} (u,V_{u},i,{H(u)})H^{i}_{u}dW_{u}\bigg\},
\label{eq:lmt}
\end{align}
where the functions $\beta_{i,j}$, {$\beta_{i,j}^{0}$}, and $\vartheta_{i}$ are evaluated at $(u,V_{u^{-}},H(u^{-}))$ and we used the notation (\ref{JmpTrnPrc1}).

\section{Proof of the verification theorems}\label{SectVerification}

\proof[Proof of Theorem \ref{MntPstDftVal}]
We first note that in the post-default case, the process (\ref{eq:wealtheqsimpl2}) takes the form
\begin{align}\label{GenDyncPstDft}
 dV_s^{\pi,t,v}&= V_s^{\pi,t,v}\bigg\{ \left[ {r_{_{C_{s}}}}+\pi_s ({\mu_{_{C_{s}}}}-{r_{_{C_{s}}}}) \right]  ds+
 \pi_{s} \sigma_{_{C_{s}}}d W_{s}\bigg\},\\
V_{t}^{\pi,t,v} &=  v, \quad \quad(t<s<R).\nonumber
\end{align}
Define the process
\begin{equation}\label{AuxPrc1}
	M_{s}^{\pi} :=  {\underline{w}}(s,V_s^{\pi,t,v},C_{s}), \quad (t\leq{}s\leq{}R),
\end{equation}
for an admissible feedback control $\pi_{s}:={\pi_{_{C_{s}}}}(s,V_{s}^{\pi,t,v})\in\mathcal{A}_{t}(v,i,1)$. For simplicity, through this part we sometimes write $V_{u}$ or $V^{\pi}_{u}$ instead of $V^{\pi,t,v}_{u}$. We prove the result through the following steps:

\medskip
\item[{\bf (1)}]  By the semimartingale decomposition (\ref{KSMD}), it follows that
\[
	 M_{s}^{\pi} = M_{t}^{\pi} + \int_t^s R(u,V_u^{\pi}, C_u, \pi_{u}) du + \mathcal{M}_{s}-\mathcal{M}_{t},
\]
where
\begin{eqnarray}
\nonumber \mathcal{M}_{s} &=& \sum_{i=1}^{N}\bigg\{\sum_{j\neq{}i}\int_{0}^{s} \left\{{\underline{w}}(u,V_{u^{-}},j)
- {\underline{w}}(u,V_{u^{-}},i)\right\} d M_{u}^{i,j}\\
 & & + \int_{0}^{s} \sigma_{i} V_{u} {\pi}_{i}(u,V_{u}) {\underline{w}}_{v}(u,V_{u},i) H_{u}^{i}d W_{u}\bigg\},
 \label{eq:PostDfltMrt}
 \\
\nonumber R(u,v,i,\pi)  &=& {\underline{w}}_u(u, v,i) + {\underline{w}}_v(u, v,i) v  \left( r_i + \pi (\mu_i - r_i) \right) \\
& &  + \frac{1}{2} {\underline{w}}_{vv}(u,v,i) v^2  \pi^2  \sigma_i^2 \nonumber \\
& & + \sum_{j \neq i} a_{i,j}(u) \left({\underline{w}}(u,v,j) - {\underline{w}}(u,v,i) \right).
\label{eq:Requation}
\end{eqnarray}
We have that $R(u,v,i,\pi)$ is a concave function in $\pi$ since, {by assumption},  ${\underline{w}}_{vv} < 0$.
If we maximize $R(u,v,i,\pi)$ as a function of $\pi$ for each $(u,v,i)$, we find that the optimum is given by (\ref{eq:optpi}).
This implies that
\begin{align*}
\nonumber R(u,v,i,\pi)\leq R(u,v,i,{\widetilde\pi_{i}(u,v)}) &= {\underline{w}}_u(u,v,i) + r_i v {w}_v(u,v,i) - \eta_i \frac{{\underline{w}}^2_v(u,v,i)}{{\underline{w}}_{vv}(u,v,i)} \\
& +\sum_{j \neq i} {a_{i,j}} \left({\underline{w}}(u,v,j) - {\underline{w}}(u,v,i) \right)  = 0,
\end{align*}
where the last equality follows from Eq.~(\ref{eq:dirich}). Next, let us introduce the stopping times
\(
	{\tau_{a,b}:=\inf\{s\geq{}t: V_{s}\geq{}{b^{-1}}, \text{ or } V_{s}\leq{}a\}},
\)
for fixed {$0<a<v<b^{-1}<\infty$}. Then, using the notation $\Ex_{t}[\cdot]=\Ex[\cdot|\mathcal{G}_{t}]$, we get the inequality
\begin{align*}
	\Ex_{t}\left[M^{\pi}_{s\wedge \tau_{a,b}}\right]&\leq M_{t}^{\pi} +
	\sum_{i=1,j\neq{}i}^{N}\Ex_{t}\left[\int_{t}^{s\wedge\tau_{a,b}}
	\left\{{\underline{w}}(u,V_{u^{-}},j)- {\underline{w}}(u,V_{u^{-}},i)\right\} d M_{u}^{i,j}\right]\\
	&\quad+\sum_{i=1}^{N}\Ex_{t}\left[\int_{t}^{s\wedge\tau_{a,b}}
	V_{u}\pi_{i}(u,V_{u})\sigma_{i} {\underline{w}}_{v}(u,V_{u},i)H_{u}^{i} d W_{u}\right],
\end{align*}
with equality if $\pi=\widetilde{\pi}$.
Since
\[
	\sup_{t\leq u\leq \tau_{a,b}\wedge R} |{\underline{w}}(u,V_{u},i)|\leq B_{1}, \quad
	\sup_{t\leq u\leq \tau_{a,b}\wedge R} |V_{u}\pi_{i}(u,V_{u}){\underline{w}}_{v}(u,V_{u},i)|^{2}\leq B_{2},
\]
for {some} constants $B_{1},B_{2}<\infty$, we conclude that
\(
	 {\Ex_{t}\left[M^{\pi}_{R\wedge \tau_{a,b}}\right]\leq M_{t}^{\pi}={\underline{w}}(t,v,C_{t})},
\)
with equality if $\pi=\widetilde{\pi}$.

\medskip
\item[{\bf (2)}]
In this step, we show that
\begin{equation}\label{FIdN}
	\lim_{{a,b\to{}0}} \Ex_{t}\left[{\underline{w}}(R\wedge \tau_{a,b},V^{\widetilde{\pi}}_{R\wedge \tau_{a,b}},C_{R\wedge \tau_{a,b}})\right]=
	\Ex_{t}\left[U(V^{\widetilde{\pi}}_{R})\right],
\end{equation}
where $\widetilde{\pi}_{s}=\widetilde{\pi}(s,V_{s}^{\pi,t,v},C_{s})$.
Note that (\ref{KCUB}-i) implies
\begin{align*}
	\Ex\left[\left. \left|{\underline{w}}(R\wedge \tau_{a,b},V^{\widetilde{\pi}}_{R\wedge \tau_{a,b}},C_{R\wedge \tau_{a,b}})\right|^{2}\right|\mathcal{G}_{t}\right]
	&\leq B_{1}+B_{2}\Ex\left[\left. \left|V^{\widetilde{\pi}}_{R\wedge \tau_{a,b}}\right|^{2}\right|\mathcal{G}_{t}\right],
\end{align*}
for some constants $B_{1},B_{2}<\infty$. Next, we note that $\tilde{\pi}$ satisfies (\ref{NCFUB}) since
\[
	\left|v{\tilde\pi_{i}(s,v)}\right|=\left|\frac{\mu_{i}-r_{i}}{\sigma_{i}^{2}} \frac{{\underline{w}}_v(s,v,i)}{{\underline{w}}_{vv}(s,v,i)}\right|<G(s)(1+v),
\]
in light of (\ref{KCUB}-ii). Hence, we can {apply} Lemma \ref{ExpBnd} below (with $\pi^{P}\equiv0$) and obtain
\begin{align*}
	\sup_{0<a<v<b^{-1}<\infty}\Ex_{t}\left[ \left(V^{\widetilde{\pi}}_{R\wedge \tau_{a,b}}\right)^{2}\right]\leq 2\left(V_{t}^{\widetilde{\pi}}\right)^{2}+
	2\Ex_{t}\left[\sup_{t\leq s\leq{}R} \left(V^{\widetilde{\pi}}_{s}-V^{\widetilde{\pi}}_{t}\right)^{2}\right]<\infty.
\end{align*}
Using Corollary 7.1.5 in \cite{ChTei}, we {conclude} (\ref{FIdN}).

\medskip
\item[{{\bf (3)}}]
Finally, if ${\underline{w}}$ is non-negative, then Fatou's Lemma implies that
\begin{align*}
	\Ex_{t}\left[U(V^{\pi}_{R})\right]&= \Ex_{t}\left[
	 \liminf_{{a,b}\to{}0} {\underline{w}}(R\wedge \tau_{a,b},V^{\pi}_{R\wedge \tau_{a,b}},C_{R\wedge \tau_{a,b}})\right]\\
	 & \leq \liminf_{{a,b\to{}0}} \Ex_{t}\left[{\underline{w}}(R\wedge \tau_{a,b},V^{\pi}_{R\wedge \tau_{a,b}},C_{R\wedge \tau_{a,b}})\right]\\
	 &\leq{} {\underline{w}}(t,v,C_{t})=\Ex_{t}\left[U(V^{\widetilde\pi}_{R})\right],
\end{align*}
for every admissible feedback control $\pi_{s}=\pi_{_{C_{s}}}(s,V_{s}^{\pi,t,v})\in\mathcal{A}_{t}(v,i,1)$. For a general function ${\underline{w}}$ (not necessarily non-negative), we proceed along the lines of step (2) above to show
\[
	\lim_{{a,b\to{}0}} \Ex_{t}\left[{\underline{w}}(R\wedge \tau_{a,b},V^{{\pi}}_{R\wedge \tau_{a,b}},C_{R\wedge \tau_{a,b}})\right]=
	\Ex_{t}\left[U(V^{{\pi}}_{R})\right],
\]
for any {feedback control} $\pi_{s}=\pi_{_{C_{s}}}(s,V_{s}^{\pi,t,v})\in\mathcal{A}_{t}(v,i,1)$ satisfying (\ref{NCFUB}).
\endproof

\noindent\proof[Proof of Theorem \ref{MntPstDftVal2}]
We prove the result through the following steps:
\item[{\bf (1)}]
Define the process
\begin{equation}\label{AuxPrc1Pre}
	M_s^{\pi} :=   \bar{w}(s,V_s^{\pi,t,v},C_{s}) {(1-H(s))}+ {\underline{w}(s,V_s^{\pi,t,v},C_{s}) H(s)},
\end{equation}
where $V_s^{\pi,t,v}$ is the solution of Eq.~(\ref{eq:wealtheqsimpl2})
for an admissible feedback control
\[
	 \pi_{s}:=(\pi_{s}^{S},\pi_{s}^{P}):=(\pi^{S}_{_{C_{s^{-}}}}(s,V_{s^{-}}^{\pi,t,v}),\pi^{P}_{_{C_{s^{-}}}}(s,V_{s^{-}}^{\pi,t,v}))\in \mathcal{A}_{t}(v,i,0).
\]
For simplicity, we only write $V_s^{\pi}=V_s^{\pi,t,v}$.
Using the same arguments as in Eq.~(\ref{DcmpCngNotn}) and the decomposition (\ref{MrtRprDftPrcP}), the process (\ref{eq:wealtheqsimpl2}) can be written as
\begin{align}\nonumber
 dV_s^{\pi,t,v}&= V_{{s-}}^{\pi,t,v}\bigg\{ \left[ {r_{_{C_{s}}}}+\pi_{_{C_{{s-}}}}^{S} ({\mu_{_{C_{s}}}}-{r_{_{C_{s}}}}) +\pi^{P}_{_{C_{{s-}}}}(1-H(s))\theta_{C_{s}}(s)\right]  ds\\
 \label{GenDyncPstDftb}
 &\quad +\pi_{_{C_{{s-}}}}^{S} \sigma_{_{C_{s}}}d W_{s}+\pi_{_{C_{s^{-}}}}^{P}(1-H(s^{-})) dH(s)\\
 \nonumber
 &\quad +(1-H(s^{-}))\sum_{i=1}^{N}\sum_{j\neq{}i} \pi_{i}^{P}\frac{{\psi_{j}(s)-\psi_{i}(s)}}{\psi_{i}(s)}d H_{s}^{i,j}\bigg\},
\end{align}
for $s\in(t,R)$, with the initial condition $V_{t}^{\pi,t,v} =  v$, {where $\theta_{i}$ is defined in (\ref{DfnTheta})}.
By the semimartingale decomposition (\ref{KSMD}) with the coefficients given by (\ref{CffDynWlth}) and
\[
	f(s,v,i,z):= \bar{w}(s,v,i) (1-z)+ \underline{w}(s,v,i) z,
\]
it follows that
\[
	 M_{s}^{\pi} = M_{t}^{\pi} + \int_t^s R(u,V_u^{\pi}, C_u,\pi^{S}_{_{C_{u}}}, \pi_{_{C_{u}}}^{P}, {H(u)}) du + \mathcal{M}_{s}-\mathcal{M}_{t}.
\]
Here, $\mathcal{M}_{s}-\mathcal{M}_{t}$ is given by
\begin{align}
\nonumber &\sum_{i=1}^{N}\bigg\{\sum_{j\neq{}i}\int_{t}^{s} \left[{w}\left(u,V_{u^{-}}\left(1+\bar{\pi}^{P}_{i}\frac{{\psi_{j}(u)-\psi_{i}(u)}}{\psi_{i}(u)}\right),j\right)- {w}(u,V_{u^{-}},i)\right] d M_{u}^{i,j}\\
\nonumber &\quad \quad + \int_{t}^{s} \left[{w}\left(u,V_{u^{-}}\left(1-\bar{\pi}^{P}_{i}\right),i\right)- {w}(u,V_{u^{-}},i)\right] {\bf 1}_{\{C_{u^{-}}=i\}}d \xi^{\Px}_{u}\\
& \quad \quad + \int_{t}^{s} \sigma_{i} V_{u} {\pi}^{S}_{i}(u,V_{u}) {w}_{v}(u,V_{u},i) H_{u}^{i}d W_{u}\bigg\},
 \label{eq:PostDfltMrt}
 \end{align}
where  $\bar\pi^{P}_{i}(u,v):=(1-H(u))\pi^{P}_{i}(u,v)$ and
\[
	{w(u,v,i)=\bar{w}(u,v,i)(1-H(u^{-}))+\underline{w}(u,v,i)H(u^{-})}.
\]
Similarly, $R(u,v,i,\pi^{S},\pi^{P},1)$ is defined as in (\ref{eq:Requation}) with $\pi=\pi^{S}$, while
 \begin{align}
\nonumber  R(u,v,i,\pi^{S},\pi^{P},0)  &:= \bar{w}_u(u, v,i) + \bar{w}_v(u, v,i) v  \bigg( r_i + \pi^S (\mu_i - r_i)
+ \pi^P \theta_{i}(u)\bigg) \\
\nonumber &   + \frac{1}{2} \bar{w}_{vv}(u,v,i) v^2 (\pi^S)^2 \sigma_i^2 \\
\nonumber & + \sum_{j \neq i} a_{i,j}(u)\left[\bar{w}\left(u,v \left( 1 + \pi^P \frac{{\psi_j(u)-\psi_i(u)}}{\psi_i(u)} \right),j \right)
- \bar{w}(u,v,i)  \right] \\
 &  +  h_i \left[ {\underline{w}}\left(u, v \left( 1 - \pi^P \right),i \right) - \bar{w}\left(u, v,i \right) \right].
\label{eq:Requationpre}
\end{align}
Note that, under our assumptions, $(\pi^{S},\pi^{P})\to R(u,v,i,\pi^{S},\pi^{P},1)$ admits a unique maximal point $(\widetilde{\pi}^{S},\widetilde{\pi}^{P})$ for each $(u,v,i)$, since $(i)\; R_{\pi^S\pi^S} \leq 0$,
 $(ii)\; R_{\pi^P\pi^P} \leq 0$, and  $(iii)\;R_{\pi^S\pi^P} =0$. Indeed, $(i)$ follows from our assumption that $\bar{w}_{vv} \leq  0$, while $(ii)$ is evident from the calculation
\begin{align*}
\nonumber R_{\pi^P \pi^P} & = \sum_{j \neq i} a_{i,j}(u)\frac{({\psi_j(u)-\psi_{i}(u)})^2}{\psi_i(u)^2} v^2 \bar{w}_{vv}\left(u, v \left( 1 + \pi^P \frac{{\psi_j(u)-\psi_i(u)}}{\psi_i(u)} \right),j \right)\\
&\quad + h_{i} v^2 {\underline{w}}_{vv} \left(u, v \left(1-\pi^P \right),i\right),
\end{align*}
and the fact that ${\underline{w}_{vv}}\leq 0$.
The optimum is given by Eq.~(\ref{OptPrDf}) with $p$ defined implicitly by Eq.~(\ref{eq:optpiP}). In light of Eqs.~(\ref{eq:dirich})-(\ref{eq:dirichpre}),
\[
	R(u,v,i,\pi^{S},\pi^{P},z)\leq  R(u,v,i,\widetilde{\pi}^{S}_{i}(u,v),\widetilde{\pi}^{P}_{i}(u,v),z)=0.
\]
Let
\begin{align*}
	\tau_{a,b}:=\inf\bigg\{&s\geq{}t: V_{s}\geq{}{b^{-1}}, \text{ or } V_{s}\leq{}a,   \text{ or } {{\pi}^{P}_{u^{-}}>1-a}, \\
	&\quad  \text{ or }\; {\pi}^{P}_{s^{-}}{\frac{\psi_{j}(s)-\psi_{i}(s)}{\psi_{i}(s)}<a-1,\text{ for some }j\neq{}i}\bigg\},
\end{align*}
for a small enough $a,b>0$.
Using similar arguments to those in the proof of Theorem \ref{MntPstDftVal},
we can show that
\[
	\Ex_{t}\left[M^{\pi}_{R\wedge \tau_{a,b}}\right]\leq M_{t}^{\pi}=\bar{w}(t,v,C_{t}) (1-H(t))+ \underline{w}(t,v,C_{t}) H(t),
\]
with equality if $\pi^S=\widetilde{\pi}^{S}$ and $\pi^P=\widetilde{\pi}^{P}$.

\item[{\bf (2)}]
We now show that
\begin{align}
	\label{AL1VP}
	\lim_{{a,b}\to{}0} \Ex_{t}\left[\bar{w}(R\wedge \tau_{a,b},V^{\widetilde{\pi}}_{R\wedge \tau_{a,b}},C_{R\wedge \tau_{a,b}})\bar{H}(R\wedge \tau_{a,b})\right]&=
	\Ex_{t}\left[U(V^{\widetilde{\pi}}_{R})\bar{H}(R)\right]\\
	\lim_{{a,b\to{}0}} \Ex_{t}\left[\underline{w}(R\wedge \tau_{a,b},V^{\widetilde{\pi}}_{R\wedge \tau_{a,b}},C_{R\wedge \tau_{a,b}}){H}(R\wedge \tau_{a,b})\right]&=
	\Ex_{t}\left[U(V^{\widetilde{\pi}}_{R}){H}(R)\right], \label{AL2VP}
\end{align}
where $\bar{H}(s)=1-H(s)$. Note that (\ref{AL1VP}-\ref{AL2VP}) will imply that
\begin{equation*}
	\bar{w}(t,v,C_{t}) \bar{H}(t)+ \underline{w}(t,v,C_{t}) H(t)=\lim_{{a,b\to{}0}} \Ex_{t}\left[M^{\widetilde{\pi}}_{R\wedge \tau_{a,b}}\right]=\Ex_{t}\left[U(V^{\widetilde{\pi}}_{R})\right].
\end{equation*}
We only {prove (\ref{AL1VP}) ((\ref{AL2VP}) can be treated similarly)}. Note that (\ref{KCUB}-i) implies
\begin{align*}
	\Ex\left[\left. \left|\bar{w}(R\wedge \tau_{a,b},V^{\widetilde{\pi}}_{R\wedge \tau_{a,b}},C_{R\wedge \tau_{a,b}})\bar{H}(R\wedge \tau_{a,b})\right|^{2}\right|\mathcal{G}_{t}\right]
	&\leq B_{1}+B_{2}\Ex\left[\left. \left|V^{\widetilde{\pi}}_{R\wedge \tau_{a,b}}\right|^{2}\right|\mathcal{G}_{t}\right],
\end{align*}
for some constants $B_{1},B_{2}<\infty$. Next, we note that $\tilde{\pi}^{S}$ satisfies (\ref{NCFUB}) since both $\bar{w}$ and $\underline{w}$ satisfies (\ref{KCUB}-ii) by assumption. Also, $\widetilde{\pi}^{P}$ satisfies (\ref{NcsCndAdm}) by assumption. Hence, we can applying Lemma \ref{ExpBnd} below and obtain
\begin{align*}
	\sup_{0<a<v<b^{-1}<\infty}\Ex_{t}\left[ \left(V^{\widetilde{\pi}}_{R\wedge \tau_{a,b}}\right)^{2}\right]\leq 2\left(V_{t}^{\widetilde{\pi}}\right)^{2}+
	2\Ex_{t}\left[\sup_{t\leq s\leq{}R} \left(V^{\widetilde{\pi}}_{s}-V^{\widetilde{\pi}}_{t}\right)^{2}\right]<\infty.
\end{align*}
Using Corollary 7.1.5 in \cite{ChTei}, we {conclude (\ref{AL1VP})}.

\item[{\bf (3)}]
Finally, if $w$ is non-negative, then Fatou's Lemma implies that
\begin{align*}
	\Ex_{t}\left[U(V^{\pi}_{R})\right]&= \Ex_{t}\left[
	 \liminf_{{a,b\to{}0}} \bar{w}(R\wedge \tau_{a,b},V^{\pi}_{R\wedge \tau_{a,b}},C_{R\wedge \tau_{a,b}})\bar{H}(R\wedge \tau_{a,b})\right]\\
	 &\quad +\Ex_{t}\left[
	 \liminf_{{a,b\to{}0}} \underline{w}(R\wedge \tau_{a,b},V^{\pi}_{R\wedge \tau_{a,b}},C_{R\wedge \tau_{a,b}}){H}(R\wedge \tau_{a,b})\right]\\
	 & \leq \liminf_{{a,b\to{}0}} \Ex_{t}\bigg[\bar{w}(R\wedge \tau_{a,b},V^{\pi}_{R\wedge \tau_{a,b}},C_{R\wedge \tau_{a,b}})\bar{H}(R\wedge \tau_{a,b})\\
	 &\quad\quad\quad\quad \quad\quad +
	 \underline{w}(R\wedge \tau_{a,b},V^{\pi}_{R\wedge \tau_{a,b}},C_{R\wedge \tau_{a,b}}){H}(R\wedge \tau_{a,b})\bigg]\\
	 &\leq{} \bar{w}(t,v,C_{t})\bar{H}(t)+\underline{w}(t,v,C_{t}){H}(t)=\Ex_{t}\left[U(V^{\widetilde\pi}_{R})\right],
\end{align*}
for every admissible feedback control {$\pi_{s}=\pi_{_{C_{s^{-}}}}(s,V_{s^{-}}^{\pi,t,v},H(s^{-}))\in\mathcal{A}_{t}(v,i,0)$}. For a general function $w$ (not necessarily non-negative), we proceed along the lines of step (2) above to show
\begin{align*}
	\lim_{{a,b\to{}0}} &\Ex_{t}\bigg[\bar{w}(R\wedge \tau_{a,b},V^{{\pi}}_{R\wedge \tau_{a,b}},C_{R\wedge \tau_{a,b}})\bar{H}(R\wedge \tau_{a,b})\\
	&\quad+\underline{w}(R\wedge \tau_{a,b},V^{{\pi}}_{R\wedge \tau_{a,b}},C_{R\wedge \tau_{a,b}}){H}(R\wedge \tau_{a,b})\bigg]=
	\Ex_{t}\left[U(V^{{\pi}}_{R})\right],
\end{align*}
for any $t$-admissible feedback controls  {$\pi_{s}=\pi_{_{C_{s^{-}}}}(s,V_{s^{-}}^{\pi,t,v},H(s^{-}))\in\mathcal{A}_{t}(v,i,0)$} satisfying (\ref{NCFUB}) and (\ref{NcsCndAdm}).
\endproof

\begin{lemma}\label{ExpBnd}
	Let {$\pi^{S}_{i}(s,v,z)$ and $\pi^{P}_{i}(s,v,z)$} be functions such that (\ref{GenDyncPstDftb})  admits a unique nonnegative solution {$(V_{s}^{\pi})_{s\in[t,R]}$}. We also assume that $\pi^{S}$ satisfies (\ref{NCFUB}) and $\pi^{P}$ satisfies (\ref{NcsCndAdm}).  Then,  the solution of (\ref{GenDyncPstDftb}) satisfies the moment condition:
	\begin{equation}\label{UBCPD}
		\Ex_{t}\left[\sup_{t\leq{}u\leq{}R}|{V^{\pi}_{u}-V^{\pi}_{t}}|^{2}\right]\leq C_{1} {(V_{t}^{\pi})^{2}}+ C_{2},
	\end{equation}
	for some constants $C_{1},C_{2}<\infty$.
\end{lemma}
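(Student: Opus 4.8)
The plan is to carry out the classical $L^{2}$ moment estimate for the jump--diffusion (\ref{GenDyncPstDftb}), combining Doob's maximal inequality with a Gronwall argument. First I would rewrite (\ref{GenDyncPstDftb}) so that every driving term is a (local) martingale plus an absolutely continuous drift. Using the compensators of the underlying point processes, namely $dH(s)=d\xi_{s}^{\Px}+(1-H(s^{-}))h_{_{C_{s}}}\,ds$ from (\ref{MrtRprDftPrcP}) and $dH_{s}^{i,j}=dM_{s}^{i,j}+a_{i,j}(s)H_{s}^{i}\,ds$ from (\ref{CrsMrt}), the increment $V_{u}^{\pi}-V_{t}^{\pi}$ can be expressed as
\[
\int_{t}^{u} b_{s}\,ds+\int_{t}^{u} \varsigma_{s}\,dW_{s}+\sum_{i\neq j}\int_{t}^{u} c^{i,j}_{s}\,dM^{i,j}_{s}+\int_{t}^{u} \rho_{s}\,d\xi^{\Px}_{s},
\]
where each integrand $b,\varsigma,c^{i,j},\rho$ is of the form $V_{s^{-}}^{\pi}$ times a factor built from the regime parameters $r,\mu,\sigma,h,\theta_{i},a_{i,j}$, the ratios $\psi_{j}/\psi_{i}$, and the controls $\pi^{S},\pi^{P}$.

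The second step is to show that these coefficients admit a bound $|b_{s}|+|\varsigma_{s}|+\sum_{i\neq j}|c^{i,j}_{s}|+|\rho_{s}|\le C(s)(1+V_{s^{-}}^{\pi})$ for some locally bounded $C$. The finitely many regime constants are bounded on $[0,R]$; the ratios $\psi_{j}(s)/\psi_{i}(s)$ are bounded and bounded away from zero because, by Lemma \ref{Lm:DfPsi}, each $\psi_{i}$ is continuous and strictly positive on $[0,T]$; the diffusion and Merton-type drift contributions are controlled through the growth bound (\ref{NCFUB}) on $V_{s^{-}}^{\pi}\pi^{S}$; and the bond jump sizes $\pi^{P}_{i}(\psi_{j}-\psi_{i})/\psi_{i}$ together with the default jump $-\pi^{P}_{i}$ are controlled by the admissibility window (\ref{NcsCndAdm}), which (together with the local boundedness imposed on $\pi^{P}$) keeps the relative jumps bounded.

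Third, I would take the supremum over $u\in[t,R]$, use $\bigl(\sum_{k=1}^{4}x_{k}\bigr)^{2}\le 4\sum_{k}x_{k}^{2}$, and estimate each piece conditionally on $\mathcal{G}_{t}$. The drift term is handled by Cauchy--Schwarz, $\sup_{u}|\int_{t}^{u} b_{s}\,ds|^{2}\le (R-t)\int_{t}^{R} b_{s}^{2}\,ds$; the Brownian integral by Doob's $L^{2}$ inequality and the It\^o isometry; and the two purely discontinuous martingale integrals by Doob's inequality together with the predictable quadratic variations $\langle \int c^{i,j}\,dM^{i,j}\rangle_{s}=\int (c^{i,j})^{2} a_{i,j}H^{i}\,ds$ and $\langle\int \rho\,d\xi^{\Px}\rangle_{s}=\int \rho^{2}(1-H(s^{-}))h_{_{C_{s}}}\,ds$. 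Writing $\phi(u):=\Ex_{t}[\sup_{t\le r\le u}|V_{r}^{\pi}-V_{t}^{\pi}|^{2}]$ and using $(V_{s^{-}}^{\pi})^{2}\le 2(V_{t}^{\pi})^{2}+2\sup_{r\le s}|V_{r}^{\pi}-V_{t}^{\pi}|^{2}$, the four estimates combine into
\[
\phi(u)\le K(R-t)\bigl(1+(V_{t}^{\pi})^{2}\bigr)+K\int_{t}^{u}\phi(s)\,ds,
\]
and Gronwall's inequality yields $\phi(R)\le C_{1}(V_{t}^{\pi})^{2}+C_{2}$ with constants depending only on $R-t$, the regime data, the bounds on $\psi_{j}/\psi_{i}$, and the functions $G,C$.

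The main obstacle is essentially technical and twofold. First is the coefficient bound for the jump terms: one must verify that (\ref{NcsCndAdm}) bounds the relative jumps $\pi^{P}_{i}(\psi_{j}-\psi_{i})/\psi_{i}$ and $-\pi^{P}_{i}$ both below (the bound $>-1$ is exactly the positivity content of (\ref{NcsCndAdm})) and above, which is precisely where the strict positivity and continuity of $\psi$ from Lemma \ref{Lm:DfPsi} enter. Second, Gronwall's inequality may only be applied once $\phi$ is known to be finite; I would therefore first run the whole argument with $V^{\pi}$ stopped at $\tau_{n}:=\inf\{s:V_{s}^{\pi}\ge n\}$ (or at the stopping times $\tau_{a,b}$ already introduced in the verification theorems), obtain the estimate with constants independent of $n$, and then pass to the limit by monotone convergence. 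The remaining martingale-inequality computations are routine.
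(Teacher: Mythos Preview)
Your proposal is correct and follows essentially the same route as the paper: rewrite the wealth dynamics as a drift plus martingale integrals, bound each coefficient linearly in $V_{s^{-}}$ via (\ref{NCFUB}) and (\ref{NcsCndAdm}), apply an $L^{2}$ maximal inequality to each stochastic integral, feed everything into Gronwall, and remove the localizing stopping time at the end. The only cosmetic difference is that the paper invokes the Burkh\"older--Davis--Gundy inequality (hence the quadratic variation $\int\rho_{s}^{2}\,dH(s)$ for the default-jump term, which is then compensated to $\int\rho_{s}^{2}(1-H(s))h_{_{C_{s}}}\,ds$), whereas you go straight to Doob's $L^{2}$ inequality together with the predictable quadratic variation; for $p=2$ these yield the same estimate.
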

\begin{proof}
{For simplicity, we write $V_{s}$ instead of $V^{\pi}_{s}$}.
Let us start by
{recalling} that we can write (\ref{GenDyncPstDftb}) in the form
\begin{align}\label{GenDynb}
 dV_s= \alpha_{_{C_{s}}} ds+ \vartheta_{_{C_{s}}} d W_{s}
 +\sum_{j=1}^{N}\beta_{_{C_{s^{-}},j}} dM^{\Px}_{j}(s)
- \gamma_{_{C_{s^{-}}}} d\xi^{\Px}_{s},
\end{align}
taking the coefficients as in (\ref{CffDynWlth}). Due to {(\ref{NcsCndAdm}) and (\ref{NCFUB})}, we can see that $|\alpha_{i}(s,v,z)|\leq{}E(s)(1+v)$ for a locally bounded function $E$. Hence, by Jensen's inequality and the previous linear growth,
\begin{align*}
	\left|\int_{t}^{s}\alpha_{_{C_{u}}}(u,V_{u},H(u^{-})) du \right|^{2}
	&\leq \kappa \tau \left(\tau +\tau V_{t}^{2}+ \int_{t}^{s} |V_{u}-V_{t}|^{2} du\right),
\end{align*}
{for any $s\in[t,R]$, where} $\tau:=R-t$ and $\kappa$ denote a generic constant that may change from line to {line}.
Similarly, denoting $\tau_{b}=\inf\{s\geq{}t: |V_{s}|\geq{}b\}$ ($b>v$) and using Burkh\"older-Davis-Gundy inequality (see Theorem 3.28 in {\cite{karatSh} or Theorem IV.48 in \cite{Protter}}) and Jensen's inequality,
\begin{align*}
	\Ex_{t} \sup_{t\leq{}s\leq{}R\wedge \tau_{b}} \left|\int_{t}^{s}  \vartheta_{_{C_{u}}}(u,V_{u},H(u))
	d W_{u}\right|^{2}&\leq \kappa\tau \Ex\int_{t}^{R\wedge \tau_{b}} \left|  \sigma_{_{C_{u}}} \pi^{S}_{u} V_{u}
	\right|^{2}du.
\end{align*}
We can then again use (\ref{NCFUB}) to show that
\begin{align*}
	\Ex_{t} \sup_{t\leq{}s\leq{}R\wedge \tau_{b}} \left|\int_{t}^{s} \vartheta_{_{C_{u}}}(u,V_{u},H(u))d W_{u}\right|^{2}&\leq{} \kappa\tau (\tau +\tau |V_{t}|^{2}+ \int_{t}^{R\wedge \tau_{b}} |V_{u}-V_{t}|^{2} du).
\end{align*}
Next, using again Burkh\"older-Davis-Gundy inequality {(see, e.g., Theorem 23.12 in \cite{Kallenberg})},
\begin{align*}
	\Ex_{t} \sup_{t\leq{}s\leq{}R\wedge \tau_{b}}
	\left| \int_{t}^{s}\gamma_{_{C_{u^{-}}}}(u,V_{u^{-}},H(u^{-})) d\xi^{\Px}_{u}\right|^{2}&\leq
	\kappa \Ex_{t} \int_{t}^{R\wedge\tau_{b}}| V_{u^{-}}\pi_{u}^{P}|^{2} d H(u).
\end{align*}
Using  (\ref{NcsCndAdm}),
\begin{align*}
	\Ex_{t} \int_{t}^{R\wedge\tau_{b}}| V_{u^{-}}\pi_{u}^{P}|^{2} d H(u)&\leq{}
	\kappa \Ex_{t} \int_{t}^{R\wedge\tau_{b}}| V_{u^{-}}|^{2} d H(u)\\
	&=
	\kappa \Ex_{t} \int_{t}^{R\wedge\tau_{b}}| V_{u}|^{2} h_{_{C_{u}}}(1-H(u))d u.
\end{align*}
Then, we can proceed as before to conclude that
\begin{align*}
	\Ex_{t} \sup_{t\leq{}s\leq{}R\wedge \tau_{b}}
	\left| \int_{t}^{s}\gamma_{_{C_{u^{-}}}}(u,V_{u^{-}},H(u^{-})) d\xi^{\Px}_{u}\right|^{2}
	\leq \kappa\tau (\tau +\tau |V_{t}|^{2}+ \int_{t}^{R\wedge \tau_{b}} |V_{u}-V_{t}|^{2} du).
\end{align*}
Using a similar argument, we can also obtain that
\begin{align*}
	\Ex_{t} \sup_{t\leq{}s\leq{}R\wedge \tau_{b}}
	\left| \int_{t}^{s}\beta_{_{C_{u^{-}},j}}(u,V_{u^{-}},H(u^{-})) d M_{j}^{\Px}(u)\right|^{2}
	\leq \kappa\tau (\tau +\tau |V_{t}|^{2}+ \int_{t}^{R\wedge \tau_{b}} |V_{u}-V_{t}|^{2} du).
\end{align*}
Putting together the previous estimates, we conclude that the function
$\gamma_{b}(r):=\Ex_{t} \sup_{t\leq{}s\leq{}r\wedge \tau_{b}}|V_{s}-V_{t}|^{2}$ can be bounded as follows:
\(
	\gamma_{b}(r)\leq \kappa\tau(1+v^{2}) +\kappa \int_{t}^{r} \gamma_{b}(u) du.
\)
By Gronwall inequality, we have
\[
	\gamma_{b}(R)\leq{} \kappa (R-t)(1+ v^{2})e^{\kappa(R-t)},
\]
and (\ref{UBCPD}) is obtained by making $b\to\infty$.
\end{proof}

\section{Proof of Explicit Constructions}\label{SectExplConstr}

\proof[Proof of Lemma \ref{lemma:pcontin}]
For fixed $i$ and $s$, consider the function
$$
	f({p_{i}},s,i):= \theta_i(s) - \frac{h_i}{1-{p_{i}}} + \sum_{j \neq i} a_{i,j}(s) \frac{{\psi_j(s)-\psi_i(s)}}{\psi_i(s) + {p_{i}} ({\psi_j(s)-\psi_{i}(s)})}.$$
We first observe that $f(p_{i},i,s)$ is a continuous function of $p_{i}$ in the interval $(M_i,1)$.
Indeed, we can write the above summation as
\[
	{\sum_{j \neq i: \psi_{j}(s)>\psi_{i}(s)}  \frac{a_{i,j}(s)}{\frac{\psi_i(s)}{{\psi_j(s)-\psi_{i}(s)}} + {p_{i} }}+\sum_{j \neq i: \psi_{j}(s)< \psi_{i}(s)}  \frac{a_{i,j}(s)}{\frac{\psi_i(s)}{{\psi_j(s)-\psi_{i}(s)}} + {p_{i} }}},
\]	
and since $1<\frac{-\psi_i(s)}{{\psi_j(s)-\psi_{i}(s)}}$ when $\psi_{j}(s)<\psi_{i}(s)$, we have
$p_{i}+\frac{\psi_i(s)}{{\psi_j(s)-\psi_{i}(s)}}<0$ for $p_{i}\in(M_{i},1)$.
Moreover, the previous decomposition also shows for each fixed $s$, $f(p_{i},s,i)$ is strictly decreasing in $p_{i}$ from $(M_{i},1)$ onto $(-\infty,\infty)$. This implies the existence of a unique $p_{i}(s)$ such that $f(p_i(s),s,i) = 0$, for any $s>0$. In light of \cite{Kumagai} implicit theorem, we will also have that $p_{i}(s)$ is continuous if we prove that $f(p_{i},s,i)$ is continuous in $(p_{i},s)$. The latter property follows because, by assumption, $a_{i,j}$ and $a_{i,j}^{\Qx}$ are continuous, which implies directly the continuity of the functions $\theta_i$. The continuity of the functions $\psi_j$ will follow from a similar argument to that of Lemma \ref{Lm:DfPsi}.
\endproof

\proof[Proof of Proposition \ref{prop:logvaluefn}] 

\medskip

\item[{\bf (1)}] It can be checked that the function $\underline{\varphi}_t^R(t,v,i) = \log(v) + K(t,i)$ solves the Dirichlet problem
~(\ref{eq:dirich}), if the functions $K(t,i)$, $i = 1,\ldots,N$ satisfy the system given by Eq.~(\ref{eq:postdeflog}), which may be written in matrix-vector form as
\begin{align}
\label{eq:ktilog}
 K_t(t) &= F(t) K(t) + b(t), \qquad K(R) = 0,
\end{align}
where  $[F(t)]_{i,j} = -a_{i,j}(t)$ and $b(t) = -\frac{\eta_i^2}{2} - r_i$.
As $a_{i,j}$ is continuous in $[0,T]$ by hypothesis, we have that the system admits a unique solution in $[0,R]$ by Lemma \ref{lemmaode}, where $R \leq T$. Moreover, $\underline{\varphi}_t^R(t,v,i) \in C_{1,2}^0$ due to concavity and increasingness of the logarithmic function,
and, under the choice ${D(t)} = \max_{i=1,\dots,N}K(t,i)$ and $G(t) = 1$, the function $\underline{\varphi}_t^R(t,v,i)$ satisfies the conditions in ~(\ref{KCUB}). Therefore, applying Theorem \ref{MntPstDftVal}, we can conclude that, for each $i = 1,\ldots,N$, $\underline{\varphi}_t^R(t,v,i)$ is the optimal post-default value function.

\item[{\bf (2)}] Plugging the expression for $\underline{\varphi}_t^R(t,v,i)$ inside Eq.~(\ref{eq:optpi}), we can conclude immediately that $\tilde{\pi}^S(t,j) = \frac{\mu_j - r_j}{\sigma_j^2}$.
\item[{\bf (3)}] It can be checked that the vector of functions $[\overline{\varphi}_t^R(t,v,1),\ldots,\overline{\varphi}_t^R(t,v,{N})]$, where $\overline{\varphi}_t^R(t,v,i) = \log(v) + J(t,i)$, and the vector $p(t)$ solving the nonlinear system of equations~(\ref{eq:optimalpi}) simultaneously satisfy the system composed of Eq.~(\ref{eq:optpiP}) and Eq.~(\ref{eq:dirichpre}) if the vector of functions $J(t) = (J(t,1),J(t,2),\ldots, J(t,N){)}$  solves the system~(\ref{eq:predeflog1}), which may be written in matrix form as
 \begin{align}
 \nonumber J_t(t) &= F(t) J(t) + b(t), \qquad J(R)=0 \\
 \nonumber [F(t)]_{i,j} &= -a_{i,j}(t) \qquad (j \neq i), \quad [F(t)]_{i,i} = h_i - a_{i,i}(t) \\
 \nonumber b_i(t) &= -\frac{\eta_i^2}{2} - r_i - {p_i(t)} \theta_i(t) - h_i \left(\log(1- {p_i(t)}) + {K(t,i)} \right) \\
\nonumber &\quad \quad - \sum_{j \neq i} a_{i,j}(t) \log\left( 1 + {p_i(t)} \frac{{\psi_j(t)-\psi_i(t)}}{\psi_i(t)}  \right)
 \end{align}
   It remains to show that such solution vector $J(t)$ is unique. As we know that ${p_i(t)}$ is continuous in $t$, then we have that each entry $[F(t)]_{i,j}$ is a continuous function of $t$. Moreover, the vector $K(t)$ is continuous, as it solves the system of differential equations given by (\ref{eq:ktilog}). As $a_{i,j}^{\Qx}$ is continuous by assumption, then $\psi_j$'s are continuous, thus $\theta_i$ is continuous, and consequently $b_i(t)$ is continuous in $[0,{R}]$. Using Lemma \ref{lemmaode}, we obtain that the solution vector $J(t)$ must be unique.
   Moreover, under the choice of {$D(t) = \max_{i=1\dots,N}J(t,i)$} and $G(t) = 1$, we have that $\underline{\varphi}_t^R(t,v,i)$ satisfies the conditions in ~(\ref{KCUB}). As the logarithmic function is increasing and concave in $v$, then $\overline{\varphi}_t^R(t,v,i) \in C_0^{1,2}$, therefore it must be the optimal pre-default value function by Theorem \ref{MntPstDftVal2}.
\endproof

\end{document}